\newif\iflong
\newcommand{\newid}{{\tt new\_id}}
\newcommand{\submit}{{\tt submit}}
\newcommand{\exec}{{\tt execute}}
\newcommand{\recover}{{\tt recover}}
\newcommand{\MPreAccept}{\mathtt{PreAccept}}
\newcommand{\MCommit}{\mathtt{Commit}}
\newcommand{\MPreAcceptOK}{\mathtt{PreAcceptOK}}
\newcommand{\MAccept}{\mathtt{Accept}}
\newcommand{\MAcceptOK}{\mathtt{AcceptOK}}
\newcommand{\MRecover}{\mathtt{Recover}}
\newcommand{\MRecoverOK}{\mathtt{RecoverOK}}
\newcommand{\MTryRecover}{\mathtt{TryRecover}}
\newcommand{\MValidate}{\mathtt{Validate}}
\newcommand{\MValidateOK}{\mathtt{ValidateOK}}
\newcommand{\GST}{\mathsf{GST}}
\newcommand{\cmd}{\mathsf{cmd}}
\newcommand{\dep}{\mathsf{dep}}
\newcommand{\coord}{{\tt initCoord}}
\newcommand{\initDep}{\mathsf{initDep}}%
\newcommand{\initCmd}{\mathsf{initCmd}}
\newcommand{\bal}{\mathsf{bal}}
\newcommand{\cbal}{\mathsf{abal}}
\newcommand{\phase}{\mathsf{phase}}
\newcommand{\startP}{{\normalfont\textsc{initial}}}
\newcommand{\preacceptP}{{\normalfont\textsc{preaccepted}}}
\newcommand{\preacceptnotokP}{{\normalfont\textsc{preaccepted-notok}}}
\newcommand{\acceptP}{{\normalfont\textsc{accepted}}}
\newcommand{\commitP}{{\normalfont\textsc{committed}}}
\newcommand{\Executed}{\mathsf{executed}}
\newcommand{\vcbal}{\mathit{abal}}
\newcommand{\vcmd}{\mathit{cmd}}
\newcommand{\vdep}{\mathit{dep}}
\newcommand{\vinitDep}{\mathit{initDep}}%
\newcommand{\vphase}{\mathit{phase}}
\newcommand{\bmax}{b_{\rm max}}
\newcommand{\Rmax}{R_{\rm max}}
\newcommand{\id}{\ensuremath{\mathit{id}}}
\newcommand{\from}{\textbf{from}\xspace}
\newcommand{\send}{\textbf{send}\xspace}
\newcommand{\To}{\textbf{to}\xspace}
\newcommand{\ToAll}{\textbf{to all}\xspace}
\newcommand{\precond}{\textbf{pre:}\xspace}
\newcommand{\onreceive}{\textbf{when received}\xspace}
\newcommand{\Let}{\textbf{let}\xspace}
\newcommand{\assign}[2]{\ensuremath{#1} \ensuremath{\leftarrow} \ensuremath{#2}}
\newcommand\inconflict[2]{\ensuremath{#1 \bowtie #2}}
\newcommand{\conflicts}{\mathit{I}} %
\newcommand{\epaxos}{\textsc{EPaxos}\xspace}
\newcommand{\epp}{\textsc{EPaxos}*\xspace}
\newcommand{\MWaiting}{\mathtt{Waiting}}
\newcommand{\noop}{\mathsf{Nop}}
\newcommand{\tr}[2]{\iflong{}\S#1\else{}\cite[\S#2]{ext}\fi}
\newcommand{\tra}[2]{\iflong{}(\S#1)\else{}\cite[\S#2]{ext}\fi}
\newcommand{\SubAlgo}[2]{#1 \SubAlgoBlock{\renewcommand{\;}{\\} #2}}
\theoremstyle{definition}
\newtheorem{invariant}{Invariant}
\theoremstyle{propdefinition}
\let\oldnl\nl%
\newcommand{\nonl}{\renewcommand{\nl}{\let\nl\oldnl}}
\newcommand{\procSet}{\ensuremath{\Pi}}
\newcommand{\propose}{\ensuremath{\mathit{propose}}}
\let\emptyset\varnothing
\tikzstyle{message} = [draw, -latex',Black!60, shorten <=0.2em]
\tikzstyle{messageA} = [draw, -latex',Blue!40, shorten <=0.2em]
\tikzstyle{messageB} = [draw, -latex',Red!40, shorten <=0.2em]
\tikzstyle{lifeline} = [draw, Gray!40]
\renewcommand{\_}{\texttt{\textunderscore}}
\DeclareRobustCommand{\hlc}[1]{#1}
\definecolor{HLcolor}{RGB}{255, 226, 148}
\newcommand{\labsection}[1]{\label{sec:#1}}
\newcommand{\refsection}[1]{\S\ref{sec:#1}}
\newcommand{\labappendix}[1]{\label{appendix:#1}}
\newcommand{\labdef}[1]{\label{def:#1}}
\newcommand{\labinv}[1]{\label{inv:#1}}
\newcommand{\command}[2]{\ensuremath{\color{#1}{#2}}}
\newcommand{\cmdc}{\command{Black}{c}}
\newcommand{\cmdcp}{\command{Black}{c'}}
\newcommand{\cmdcpp}{\command{Black}{c''}}
\newcommand{\nlower}{A}
\newcommand{\nthrifty}{B}
\newcommand{\nliveness}{C}
\newcommand{\nproof}{D}
\newcommand{\nbug}{E}
\title{Making Democracy Work: Fixing and Simplifying Egalitarian Paxos (Extended Version)}
\title{Making Democracy Work: Fixing and Simplifying Egalitarian Paxos}
\author{Fedor Ryabinin}{IMDEA Software Institute, Madrid, Spain}{}{}{}
\author{Alexey Gotsman}{IMDEA Software Institute, Madrid, Spain}{}{}{}
\author{Pierre Sutra}{Institut Polytechnique de Paris, Palaiseau, France}{}{}{}
\keywords{Consensus, state-machine replication, fault tolerance}
\authorrunning{F. Ryabinin, A. Gotsman, and P. Sutra}
\titlerunning{Making Democracy Work: Fixing and Simplifying Egalitarian Paxos}
\begin{document}

\maketitle

\begin{abstract}
  Classical state-machine replication protocols, such as Paxos, rely on a distinguished leader process to order commands.
  Unfortunately, this approach makes the leader a single point of failure and increases the latency for clients that are not co-located with it.
  As a response to these drawbacks, {\em Egalitarian Paxos}~\cite{epaxos} introduced an alternative, leaderless approach, that allows replicas to order commands collaboratively.
  Not relying on a single leader allows the protocol to maintain non-zero throughput with up to $f$ crashes of {\em any} processes out of a total of $n = 2f+1$.
  The protocol furthermore allows any process to execute a command $c$ fast, in $2$ message delays, provided no more than $e = \lceil\frac{f+1}{2}\rceil$ other processes fail, and all concurrently submitted commands commute with $c$; the latter condition is often satisfied in practical systems.
  
Egalitarian Paxos has served as a foundation for many other replication protocols.
But unfortunately, the protocol is very complex, ambiguously specified and suffers from nontrivial bugs.
In this paper, we present \epp{} -- a simpler and correct variant of Egalitarian Paxos.
Our key technical contribution is a simpler failure-recovery algorithm, which we have rigorously proved correct.
Our protocol also generalizes Egalitarian Paxos to cover the whole spectrum of failure thresholds $f$ and $e$ such that $n \ge \max\{2e+f-1, 2f+1\}$ -- the number of processes that we show to be optimal.

\end{abstract}

\section{Introduction}
\labsection{intro}

{\em State-machine replication (SMR)} is a classic approach to implementing fault-tolerant services~\cite{clocks,smr}.
In this approach, the service is defined by a deterministic state machine, and each process maintains its own local replica of the machine.
An SMR protocol coordinates the execution of client commands at the processes to ensure that the system is {\em linearizable}~\cite{linearizability}, i.e., behaves as if commands are executed sequentially by a single process.
Classic protocols for implementing SMR under partial synchrony, such as Paxos~\cite{paxos} and Raft~\cite{raft}, rely on a distinguished leader process that determines the order of command execution.
Unfortunately, this approach makes the leader a single point of failure and increases the latency for clients that are not co-located with it.

As a response to these drawbacks, {\em Egalitarian Paxos} (\epaxos)~\cite{epaxos} introduced a leaderless approach that allows processes to order commands collaboratively.
This more democratic ordering mechanism brings several benefits, the first of which is faster command execution.
In more detail, as is standard, \epaxos tolerates up to $f$ processes crashing out of a total of $n=2f+1$.
But when there are no more than $\lceil\frac{f+1}{2}\rceil$ crashes and the system is synchronous, the protocol additionally allows any correct process to execute a command $c$ in only $2$ message delays, provided $c$ commutes with all concurrently submitted commands; the latter condition is often satisfied in practice.
In this case the protocol takes a {\em fast path}, which contacts {\em a fast quorum} of $n - \lceil\frac{f+1}{2}\rceil = f+\lfloor\frac{f+1}{2}\rfloor$ processes.
In contrast, in Paxos, a non-leader process requires at least $3$ messages delays to execute a command, as it has to route the command through the leader.
Furthermore, a leader failure prevents Paxos from executing any commands, no matter quickly or slowly, until a new leader is elected.
\epaxos can maintain non-zero throughput with up to $f$ failures of {\em any} processes -- a practically important feature for highly available systems.

Interestingly, in comparison to existing SMR protocols with fast paths~\cite{gpaxos,gbroadcast}, \epaxos can provide fast decisions while tolerating an extra process failure.
For example, Generalized Paxos~\cite{gpaxos} decides fast when the number of failures does not exceed $\lfloor\frac{f}{2}\rfloor <  \lfloor\frac{f}{2}\rfloor + 1 = \lceil\frac{f+1}{2}\rceil$.
Thus, for $n = 3$ and $n=5$, \epaxos provides fast decisions while tolerating any minority of failures, with its fast path accessing the remaining majority -- same as for slow decisions in the usual Paxos.
Providing fast decisions in these cases does not require contacting larger quorums.
The above advantages are no small change in practical systems, especially those deployed over wide-area networks.
In this setting, the number of replicas is typically small (e.g., $n \in \{3,5\}$~\cite{spanner}), while the latencies are high.
Thus, waiting for an extra message or contacting an additional far-away replica may cost hundreds of milliseconds per command.

Due to these considerations, \epaxos has been influential: it has served as a foundation for alternative leaderless SMR protocols~\cite{caesar,tempo,atlas}, protocols for distributed transactions and storage~\cite{gryff,shuai-osdi14,shuai-osdi16}, and performance-evaluation studies~\cite{epaxos-revisited}.
\epaxos has also motivated the study of theoretical frameworks for leaderless consensus~\cite{guerraoui-leaderless} and SMR~\cite{losa16,sutra20}.
More recently, it inspired a new protocol for strongly consistent transactions in Apache Cassandra, a popular NoSQL datastore~\cite{accord}.
Overall, the original SOSP'13 paper on \epaxos~\cite{epaxos} has been cited more than 500 times.
Surprisingly, despite all this impact, \epaxos remains poorly understood and is subject to controversies.
As we explain next, this is because the protocol is complex, ambiguously specified, and suffers from nontrivial bugs.

\subparagraph{Complexity.}
At a high level, \epaxos implements SMR by getting the processes to agree on a {\it dependency graph} -- a directed graph whose vertices are commands and edges represent constraints on their execution order.
Processes then execute commands in an order determined by the graph.
To ensure the agreement on the dependency graph, the process that submits a particular command drives the consensus on the set of predecessors of the command in the graph.
If this initial command {\em coordinator} fails, another process takes over its job.
This failure recovery procedure is the most subtle part of the protocol.

The original \epaxos paper~\cite{epaxos} describes in detail only a simplified version of the protocol, which has fast quorums of size $n-1 = 2f$, but in exchange, has a simple failure recovery procedure.
The paper only sketches the full version of the protocol with fast quorums of size $f+\lfloor \frac{f+1}{2} \rfloor$, where recovery is much more complex.
Unfortunately, more detailed descriptions of this recovery procedure available in technical reports~\cite{epaxos-thesis,epaxos-tr} are ambiguous and incomplete.
The TLA$^+$ specification of the protocol~\cite{epaxos-thesis} does not cover recovery in full, and the open-source implementation~\cite{epaxos-code} is not aligned with the available protocol descriptions.

\subparagraph{Bugs.}
As shown in~\cite{sutra19}, \epaxos has a bug in its use of Paxos-like ballot variables.
The usual Paxos partitions the execution of the protocol into {\em ballots}, which determine the current leader~\cite{paxos}.
Each Paxos process uses two variables storing ballot numbers -- the ballot the process currently participates in, and the one where it last voted.
\epaxos uses a similar concept of ballots for each single command, to determine
which process is responsible for computing this command's position in the execution order.
However, the protocol uses only a single ballot variable, which may lead to non-linearizable executions~\cite{sutra19}.
This bug can be easily fixed for the simple version of \epaxos with fast quorums of size $n-1$.
But as we explain in \S\ref{sec:comparison}, fixing it for the full version of the protocol with fast quorums of size $f+\lfloor\frac{f+1}{2}\rfloor$ is nontrivial, because in this case recovery is much more complex than in Paxos.
We have also found a new bug where the recovery can get deadlocked even during executions with finitely many submitted commands (\S\ref{sec:comparison} and \tr{\ref{sec:bug}}{\nbug}).

\subparagraph{Lack of rigorous proofs.}
The available correctness proofs for the version of \epaxos with smaller fast quorums~\cite[\S{}3.6.5]{epaxos-thesis} are not rigorous: they make arguments without referring to the actual pseudocode of the protocol and often posit claims without sufficient justification.
More rigor may have permitted to find that the protocol is incorrect.

Moreover, the existing proofs are given under an assumption that a process submitting a command selects a fast quorum it will use a priori, and communicates only with this quorum.
This version of the protocol, which the authors call {\em thrifty}, takes the fast path more frequently in failure-free scenarios and is simpler to prove correct.
But it does not allow a process to take the fast path even with a single failure, if this failure happens to be within the selected fast quorum.
The non-thrifty version of \epaxos is more robust.
However, it is underspecified, and the authors of \epaxos never proved its correctness.

\subparagraph{Contributions.}
We are thus left in an uncomfortable situation where a protocol underlying a significant body of research suffers from multiple problems.
In this paper we make the following contributions to improve this situation:
\begin{itemize}
\item
  {\em Clarity.}
  We propose a simpler and correct variant of Egalitarian Paxos, which we call \epp.
  Our protocol is explained pedagogically, starting from key notions and invariants.
  The main technical novelty of \epp is its failure recovery procedure, which significantly simplifies the one in the original \epaxos and allows us to fix its bugs.
  At the same time, \epp does not significantly change the failure-free processing of the original protocol, thus preserving its steady-state performance properties.
\item
{\em Correctness.} We rigorously demonstrate the correctness of \epp, proving key properties as we explain the protocol, and covering the rest of the proof in \tr{\ref{sec:proof}}{\nproof}.
  We give proofs for both non-thrifty and thrifty versions of the protocol.
\item
{\em Optimality.}  We generalize the original \epaxos to handle a broader spectrum of failure thresholds under which it can execute commands fast ($e$) or execute any commands at all ($f$).
  To this end, we formalize the desired fault-tolerance properties by a notion of an {\em $f$-resilient $e$-fast} SMR protocol, and show that \epp satisfies it when $n \ge \max\{2e+f-1, 2f+1\}$.
  Building on our recent result about fast consensus~\cite{BA}, we also show that this number of processes is optimal.
  The original \epaxos aims to handle the special case of this inequality when $f = \lfloor \frac{n-1}{2}\rfloor$ and $e = \lceil\frac{f+1}{2}\rceil$, but falls short due to its bugs.
  Our version is thus the first provably correct SMR protocol with optimal values of $e$ and $f$.
  Given a fixed $n$, it permits trading off lower overall fault tolerance ($f$) for higher fault tolerance of the fast path ($e$).
\end{itemize}

\section{Fast State-Machine Replication}
\labsection{smr}

We consider a set $\Pi$ of $n \ge 3$ processes communicating via reliable links.
At most $f$ of the processes can crash.
The system is partially synchronous~\cite{dls}: after some global stabilization time ($\GST$) messages take at most $\Delta$ units of time to reach their destinations.
The bound $\Delta$ is known to the processes, while $\GST$ is unknown.
Processes are equipped with clocks that can drift unboundedly from real time before $\GST$, but can accurately measure time thereafter.

{\em State-machine replication} (SMR) is a common approach to implementing fault-tolerant services in this system~\cite{smr}.
A service is defined by a deterministic state machine with an appropriate set of commands.
Processes maintain their own local copy of the state machine and accept commands from clients (external to the system).
An {\em SMR protocol} coordinates the execution of commands at the processes, ensuring that service replicas stay in sync.
The protocol provides an operation $\submit(c)$, which allows a process to submit a command $c$ for execution on behalf of a client.
The protocol may also trigger an event $\exec(c)$ at a process, asking it to apply $c$ to the local service replica; after execution, the process that submitted the command may return the outcome of $c$ to the client.
Without loss of generality, we assume that each submitted command is unique.

A fault-tolerant service implemented using SMR is supposed to be {\em linearizable}~\cite{linearizability}.
Informally, this means that commands appear as if executed sequentially on a single copy of the state machine in an order consistent with the real-time order, i.e., the order of non-overlapping command invocations.
As observed in~\cite{gbroadcast, gpaxos} for the replicated service to be linearizable, the SMR protocol does not need to ensure that commands are executed at processes in the exact same order: it is enough to agree on the order of non-commuting commands.
Formally, two commands $c$ and $c'$ {\it commute} if in every state $s$ of the state machine: {\em (i)} executing $c$ followed by $c'$ or $c'$ followed by $c$ in $s$ leads to the same state; and {\em (ii)} $c$ returns the same response in $s$ as in the state obtained by executing $c'$ in $s$, and vice versa.
If $c$ and $c'$ do not commute, we say that they {\em conflict} and write $\inconflict{c}{c'}$.
Then the specification of the SMR protocol sufficient for linearizability is given by the following properties:
\begin{description}
\item[{\sf \textbf{Validity.}}] If a process executes a command $c$, then some process has submitted $c$ before.
\item[{\sf \textbf{Integrity.}}] A process executes each command at most once.
\item[{\sf \textbf{Consistency.}}] Processes execute conflicting commands in the same order.
\item[{\sf \textbf{Liveness.}}] In any execution with finitely many submitted commands, if a command is submitted by a correct process or executed at some process, then it is eventually executed at all correct processes.
\end{description}

Note that Egalitarian Paxos is not guaranteed to terminate when infinitely many commands are submitted~\cite{tempo}, hence the extra assumption in \textsf{Liveness}.

We next define a class of SMR protocols that can execute commands fast under favorable conditions.
One such condition is that the system is synchronous in the following sense.
We say that the events that happen during the time interval $[0, \Delta)$ form {\em the first round}, the events that happen during the time interval $[\Delta, 2\Delta)$ {\em the second round}, and so on.
\begin{definition}
  Given $E \subseteq \procSet$, a run is \textbf{$\boldsymbol{E}$-faulty synchronous}, if:
  \begin{itemize}[noitemsep,topsep=0pt,parsep=0pt]
  \item All processes in $\procSet \setminus E$ are correct, and all processes in $E$ are faulty.
  \item Processes in $E$ crash at the beginning of the first round.
  \item All messages sent during a round are delivered %
    at the beginning of the next round.
  \item All local computations are instantaneous.
  \end{itemize}
\end{definition}

Another condition for a command to be executed fast is that there are no concurrent conflicting commands, as defined in the following.
\begin{definition}
  We say that $c$ \textbf{precedes} $c'$ in a run if $c'$ is submitted at a process when this process has already executed $c$.
  Commands $c$ and $c'$ are \textbf{concurrent} when neither precedes the other.
  A command $c$ is \textbf{conflict-free} if it does not conflict with any concurrent command.
\end{definition}
\begin{definition}
  \labdef{fast}
  A run of an SMR protocol is \textbf{fast} when any conflict-free command submitted at time $t$ by a process $p$ gets executed at $p$ by time $t+2\Delta$.
  A protocol is \textbf{$\boldsymbol{e}$-fast} if for all $E \subseteq \procSet$ of size $e$, every $E$-faulty synchronous run of the protocol is fast.
  A protocol is \textbf{$\boldsymbol{f}$-resilient} if it implements SMR under at most $f$ process failures.
\end{definition}

Thus, $e$ defines the maximal number of failures the protocol can tolerate while executing commands fast, and $f$ the failures it can tolerate while executing any commands at all (we assume $e \le f$).
These thresholds are subject to a trade-off stated by the following theorem, which follows from our recent result about consensus~\cite{BA}. We defer its proof to \tr{\ref{appendix:lowerbound}}{\nlower}.
\begin{theorem}
  \label{theo:lower}
  Any $f$-resilient $e$-fast SMR protocol requires $n \ge \max\{2e+f-1, 2f+1\}$.
\end{theorem}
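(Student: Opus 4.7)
The plan is to split the bound into $n \ge 2f+1$ and $n \ge 2e+f-1$ and establish each by reduction to an existing consensus lower bound; the stated inequality is their maximum.

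The first inequality, $n \ge 2f+1$, is the textbook lower bound for $f$-resilient agreement under partial synchrony~\cite{dls}. Any SMR protocol subsumes single-decree agreement on whether and which value gets executed, so the standard partition/indistinguishability argument against $f$-resilient consensus rules out $n \le 2f$. No new ingredient is needed beyond observing that a run with a single submitted command reduces SMR to consensus.

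The second inequality, $n \ge 2e+f-1$, is obtained by reducing an $f$-resilient $e$-fast SMR protocol $P$ to a fast consensus protocol and invoking~\cite{BA}. Instantiate $P$ over a state machine whose commands pairwise conflict and whose execution order uniquely identifies a ``winning'' value --- for concreteness, a single-assignment register in which the first write fixes the stored value and each subsequent write is a no-op that still returns the stored value. This turns $P$ into a consensus protocol: a process proposes value $v$ by calling $\submit(v)$ and decides the winning value as soon as $\exec$ fires locally. Under at most $f$ failures, \textsf{Liveness} and \textsf{Consistency} of $P$ deliver consensus termination and agreement. In any $E$-faulty synchronous run with $|E|\le e$ in which only one correct process proposes, no concurrent conflicting command is ever submitted, so by \refdef{fast} the proposer decides within $2\Delta$. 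The resulting protocol therefore satisfies the premises of the fast-consensus lower bound of~\cite{BA}, which yields $n \ge 2e+f-1$.

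The main obstacle is engineering the reduction so that it transports both properties simultaneously: the fast-path guarantee applies only to \emph{conflict-free} commands, whereas \textsf{Consistency} constrains only \emph{conflicting} commands. The single-assignment-register design resolves this tension because any run with a single active proposer is automatically conflict-free (hence fast), while any two distinct proposals conflict (hence must be consistently ordered, forcing all processes to agree on the first executed value). Once this design is in place, the work done by~\cite{BA} carries the fast-path lower bound over essentially verbatim, and combining it with the classical $2f+1$ bound gives the theorem.
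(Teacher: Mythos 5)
Your overall strategy coincides with the paper's: reduce consensus to SMR and invoke the fast-consensus lower bound of \cite{BA}. (The paper obtains both halves of the maximum at once from Theorem~2 of \cite{BA}, which states that $f$-resilient $e$-two-step consensus requires $n \ge \max\{2e+f-1, 2f+1\}$, whereas you peel off $n \ge 2f+1$ separately via the classical partial-synchrony bound of \cite{dls}; that difference is harmless.) There is, however, a genuine gap in how you discharge the premises of \cite{BA}. The lower bound there applies to \emph{$e$-two-step} consensus, which requires a fast decision in \emph{two} families of $E$-faulty synchronous runs: (1) runs in which a single correct process proposes, and (2) runs in which \emph{all} correct processes propose the same value $v$ at the beginning of the first round. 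You verify only (1). Worse, your own description of the reduction --- ``any two distinct proposals conflict'' --- would make (2) fail if taken literally: if two writes of the same value $v$ issued by different processes conflicted, then in the all-propose-$v$ run no submitted command would be conflict-free, the fast-path guarantee of \refdef{fast} would not apply, and the derived consensus protocol would not be $e$-two-step, so the theorem of \cite{BA} could not be invoked.

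The fix is exactly the paper's choice of conflict relation: commands $(v,p)$ and $(v',p')$ conflict iff $v \neq v'$. Same-value writes by different processes then commute (your single-assignment semantics already yields this: either order produces the same state and the same responses), agreement still follows from \textsf{Consistency} because only different-value proposals need to be ordered consistently, and in the all-propose-$v$ run every submitted command is conflict-free, so every correct process decides by time $2\Delta$, establishing condition (2). With that correction and an explicit verification of both conditions of the $e$-two-step definition, your argument matches the paper's proof.
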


We compare this lower bound with related ones in \S\ref{sec:related}.
In classic SMR protocols, such as Paxos~\cite{paxos}, commands submitted at the leader execute within $2\Delta$ in any $\lfloor \frac{n-1}{2} \rfloor$-faulty synchronous run, but commands submitted at other processes do not enjoy such a low latency.
Hence, these ``undemocratic'' protocols are not $e$-fast for any $e$.
\epaxos aims to be $\lceil\frac{f+1}{2}\rceil$-fast for $n = 2f+1 = 2e + f - 1$, which would match the bound in Theorem~\ref{theo:lower}.
Unfortunately, as we explained in \S\ref{sec:intro}, the corresponding version of the protocol is not well-specified and suffers from nontrivial bugs.
In the rest of the paper we present \epp{} -- a simpler and correct variant of Egalitarian Paxos.
\epp is the first provably correct SMR protocol that matches the lower bound in Theorem~\ref{theo:lower}.
It furthermore generalizes Egalitarian Paxos to cover the whole spectrum of parameters $e$ and $f$ allowed by the theorem: e.g., for a fixed $n$ it allows decreasing $f$ in exchange for increasing $e$.
For instance, when $n\ge 5$ we can let $e = f = \lfloor \frac{n+1}{3} \rfloor$, which maximizes $e$ to the extent possible.

\section{\epp: Baseline Protocol for {\boldmath $n \geq \max\{2e+f+1, 2f+1\}$}}
\labsection{base}

To simplify the presentation, we explain \epp in two stages.
In this section we present a simpler version of the protocol that requires $n \geq \max\{2e+f+1, 2f+1\}$ -- same as in Generalized Paxos~\cite{gpaxos}.
In the next section we explain how to reduce the number of processes to $n \geq \max\{2e+f-1, 2f+1\}$, so that the protocol matches the lower bound in Theorem~\ref{theo:lower}.

\epp implements SMR by getting the processes to agree on a {\it dependency graph} -- a directed graph whose vertices are application commands and edges represent constraints on their execution order.
We say that a command $c$ is a \emph{dependency} of $c'$ when $(c,c')$ is an edge in the graph.
Processes progressively agree on the same dependency graph by agreeing on the set of dependencies of each vertex.
Processes then execute commands in an order determined by the graph.
The graph may contain cycles, which are broken deterministically as we describe in the following.

In \epp, each submitted command is associated with a unique numerical {\em identifier}.
An array $\cmd$ at each process maps command identifiers to their actual payloads.
Processes also maintain local copies of the dependency graph in an array $\dep$, which maps the identifier of a command to the set of identifiers of its dependencies (a {\em dependency set}).
A command $\id$ goes through several {\em phases} at a given process -- $\startP$, $\preacceptP$, $\acceptP$ and $\commitP$ -- with the $\commitP$ phase indicating that consensus has been reached on the values of $\cmd[\id]$ and $\dep[\id]$.
An array $\phase$ maps command identifiers to their current phases (by default, $\startP$).
A command is \emph{pre-accepted}, \emph{accepted}, or \emph{committed} when its identifier is in the corresponding phase.

In case of failures, the protocol may end up committing a command $\id$ with a special $\noop$ payload that is not executed by the protocol and conflicts with all commands (so in the end $\cmd[\id] = \noop$ at all processes).
In this case, the command should be resubmitted (for simplicity, we omit the corresponding mechanism from the protocol description).
The protocol ensures the following key invariants:
\begin{invariant}[Agreement]
  \label{inv:agr}
  If a command $\id$ is committed at two processes with dependency sets $D$ and $D'$ and payloads $c$ and $c'$ then $D = D'$ and $c = c'$.
\end{invariant}
\begin{invariant}[Visibility]
  \label{inv:cnst}
  If distinct commands $\id$ and $\id'$ are both committed with conflicting non-$\noop$ payloads and dependency sets $D$ and $D'$, respectively, then either $\id \in D'$ or $\id' \in D$.
\end{invariant}
These invariants are enough to ensure that conflicting commands are executed in the same order at all processes, as we describe next.

\subsection{Execution Protocol}
\label{sec:execution}

\begin{figure}
    \!\!\!\!\begin{minipage}{\linewidth}
\scalebox{0.9}{
      \begin{algorithm}[H]
        \renewcommand{\;}{\\}
        \SetKwProg{Function}{}{:}{}
        \DontPrintSemicolon
        {\bf while true} \label{epaxos:exec:loop} \SubAlgoBlock{
          \mbox{\Let $G \subseteq \dep$ be the largest subgraph such that $\forall \id \in G.\, \phase[\id] = \commitP \land \dep[\id] \subseteq G$}\;
          \label{epaxos:exec:let-c}
          \For{$C \in \mathsf{SCC}(G)$ {\rm in topological order}}{
            \label{epaxos:exec:for-scc}
            \For{$\id \in C$ {\rm in the order of command identifiers}}{
              \label{epaxos:exec:min}
              \If{$\id \notin \Executed \land \cmd[\id] \neq \noop$}{
                \label{epaxos:exec:test}
                $\exec(\cmd[\id])$\;
                \label{epaxos:exec:exec}
                $\assign{\Executed}{\Executed\cup\{\id\}}$\;
              \label{epaxos:exec:set-exec}
              }
            }
          }
        }
      \end{algorithm}
    }
  \end{minipage}
  \caption{\epp: execution protocol.}
  \label{epaxos:exec}
\end{figure}

Each process runs a background task that executes commands after they are committed (Figure~\ref{epaxos:exec}).
At a high level, the execution protocol works as follows.
A command is ready to be executed once it is committed and so are all its transitive dependencies.
When this happens, the protocol executes all ready commands in a batch in an order consistent with the dependency graph, breaking cycles using command identifiers.
The identifiers of the executed commands are added to an $\Executed$ variable.
More concretely, the execution protocol performs the following steps.
\begin{enumerate}
\item
  \label{exec:one}
  Compute the largest subgraph $G$ of the dependency graph consisting of committed commands that transitively depend only on other committed commands (line~\ref{epaxos:exec:let-c}).
\item
  \label{exec:two}
  Split $G$ into strongly connected components and sort them in the topological order (line~\ref{epaxos:exec:for-scc}).
\item
  \label{exec:three}
  Execute the commands in each component in the identifier order, skipping $\noop$s (lines~\ref{epaxos:exec:min}--\ref{epaxos:exec:set-exec}).
\end{enumerate}

\begin{example}
  Consider four commands $c_1$, $c_2$, $c_3$ and $c_4$, where $c_1$, $c_2$ and $c_3$ pairwise conflict, while $c_4$ conflicts only with $c_1$.
  Assume the commands are committed with dependencies that yield the graph in Figure~\ref{fig:exec_deps}, where $c \to c'$ means that $c$ is a dependency of $c'$.
  When the protocol in Figure~\ref{epaxos:exec} is executed on this graph, it will consider its three strongly connected components: $\{c_1\}$, $\{c_2, c_3\}$, and $\{c_4\}$.
  These can be topologically ordered in two ways: $\{c_1\} \rightarrow \{c_2, c_3\} \rightarrow \{c_4\}$ and $\{c_1\} \rightarrow \{c_4\} \rightarrow \{c_2, c_3\}$.
  The protocol will execute commands following one of these orders, breaking ties within strongly connected components using command identifiers.
  Thus, one process can use the order $c_1, c_2, c_3, c_4$, while the other can use $c_1, c_4, c_2, c_3$.
  This does not violate {\sf Consistency} because $c_4$ does not conflict with $c_2$ or $c_3$, and thus can be executed in either order with respect to them.
\end{example}

\begin{figure}[t]
  \center
  \begin{tikzpicture}[
  scale=1,
  every node/.style={transform shape},
  auto,
  shifttl/.style={shift={(-\shiftpoints,\shiftpoints)}},
  shifttr/.style={shift={(\shiftpoints,\shiftpoints)}},
  shiftbl/.style={shift={(-\shiftpoints,-\shiftpoints)}},
  shiftbr/.style={shift={(\shiftpoints,-\shiftpoints)}},
  node distance = 32pt
]
  
  \begin{scope}[transparency group]
    \begin{scope}[blend mode=multiply]
      \large

      \node [minimum size=15pt,Black,draw,fill=White,rounded corners=5] (c1) {$c_1$};
      \node [minimum size=15pt,Black,draw,fill=white,rounded corners=5, right of = c1,xshift=4pt] (c4) {$c_4$};
      \node [minimum size=15pt,Black,draw,fill=white,rounded corners=5, right of = c4,xshift=15pt] (c2) {$c_2$};
      \node [minimum size=15pt,Black,draw,fill=white,rounded corners=5, right of = c2,xshift=4pt] (c3) {$c_3$};

      \path[->,>=stealth, line width=1pt] (c1) edge[bend right] (c2);
      \path[->,>=stealth, line width=1pt] (c1) edge[bend left] (c3);
      \path[->,>=stealth, line width=1pt] (c1) edge (c4);
      \path[->,>=stealth, line width=1pt] (c2) edge[bend right] (c3);
      \path[->,>=stealth, line width=1pt] (c3) edge (c2);

    \end{scope}
  \end{scope}

\end{tikzpicture}
  \caption{Example of a dependency graph.}
  \label{fig:exec_deps}
\end{figure}
It is easy to show that, if Invariants~\ref{inv:agr}-\ref{inv:cnst} hold, then the protocol in Figure~\ref{epaxos:exec} ensures the {\sf Consistency} property of SMR (\refsection{smr});
we defer a formal proof to \tr{\ref{sec:proof}}{\nproof}.
Informally, Invariant~\ref{inv:agr} ensures that processes agree on the same dependency graph, while Invariant~\ref{inv:cnst} that the graph relates any two conflicting commands by at least one edge.
Thus, processes will use the same constraints to determine the order of execution for conflicting commands.

The original \epaxos protocol includes an optimization that speeds up the execution of commands that do not return a value: a process that submits such a command returns to the client as soon as the command is committed, without waiting for its dependencies to be committed too.
If applied straightforwardly, this would violate linearizability, so the protocol introduces additional mitigations that complicate it further.
As observed by follow-up work on \epaxos~\cite{epaxos-revisited}, commands without any return values at all are rare in real systems.
Hence, we do not consider this optimization in our protocol.

\subsection{Commit Protocol}
\label{sec:commit}

In \epp a client submits its command to one of the processes, called the {\it initial coordinator}, which will drive the agreement on the command's dependencies.
If the initial coordinator is suspected of failure, another process executes a {\it recovery} protocol to take over as a new coordinator.
For each command, a process follows only one coordinator at a time.
To ensure this, as in Paxos~\cite{paxos}, the lifecycle of the command is divided into a series of {\em ballots}, each managed by a designated coordinator.
A ballot is represented by a natural number, with ballot $0$ reserved for the initial coordinator.
Each process stores the current ballot number for a command $\id$ in $\bal[\id]$.
We now describe the protocol for committing commands at ballot $0$ (Figure~\ref{epaxos:cmt}).
This protocol closely follows the original \epaxos: the main differences for \epp are in recovery (\S\ref{sec:recover}).

\begin{figure}[t!]
\!\!\begin{minipage}{1.2\linewidth}
    \begin{minipage}{0.525\linewidth}
      \scalebox{0.85}{
      \begin{algorithm}[H]
        \renewcommand{\;}{\\}
        \SetKwProg{Function}{}{:}{}
        \DontPrintSemicolon

        \Function{$\submit(c)$\label{epaxos:cmt:submit}}{
          \Let $\id = \newid()$\;
          \label{epaxos:cmt:new-id}
          \send $\MPreAccept(\id, c, \{\id' \mid \inconflict{\cmd[\id']}{c}\})$\qquad \qquad \ToAll\;
          \label{epaxos:cmt:send-preacc}
        }

        \smallskip

        \SubAlgo{\onreceive $\MPreAccept(\id, c, D)$ \from $q$\label{epaxos:cmt:receive-preaccept}}{
          \precond $\bal[\id] = 0 \land \phase[\id] = \startP$\;
          \label{epaxos:cmt:preaccept-precond}
          $\assign{\cmd[\id]}{c}$\;
          \label{epaxos:cmt:set-c}
          $\assign{\initCmd[\id]}{c}$\;
          \label{epaxos:cmd:set-initcmd}
          $\assign{\initDep[\id]}{D}$\;
          \label{epaxos:cmt:set-d0}
          $\assign{\dep[\id]}{D \cup \{\id' \mid \inconflict{\cmd[\id']}{\cmd[\id]}\}}$\;
          \label{epaxos:cmt:preacc-deps}
          $\assign{\phase[\id]}{\preacceptP}$\;
          \label{epaxos:cmt:preacc-phase}
          \send $\MPreAcceptOK(\id, \dep[\id])$ \To $q$\;
        }

        \smallskip

        \SubAlgo{\onreceive $\MPreAcceptOK(\id, D_q)$ \qquad \qquad \qquad {\bf from all $q\in Q$}}{
          \label{epaxos:cmt:rec-preaccok}
          \precond $\bal[\id] \,{=}\, 0 \land \phase[\id] \,{=}\, \preacceptP \,{\land}$ \makebox[19pt]{}$|Q| \geq n-f$\;
          \label{epaxos:cmt:preaccok-precond}
          \Let $D = \bigcup_{q \in Q} D_q$\;
          \label{epaxos:cmt:let-d}
          \uIf{$|Q| \geq n\,{-}\,e \land \forall q \,{\in}\, Q.\,$\hlc{$D_q = \initDep[\id]$}}{
            \label{epaxos:cmt:fast-cmt}
            \send $\MCommit(0, \id, \cmd[\id], D)$ \ToAll\;
            \label{epaxos:cmt:send-cmt}
          }
          \Else{
            \label{epaxos:cmt:slow-cmt}
            \send $\MAccept(0, \id, \cmd[\id], D)$ \ToAll\;
            \label{epaxos:cmt:sent-acc}
          }
        }
      \end{algorithm}
      }
    \end{minipage}
    \hspace{-50pt}
      \begin{minipage}{0.55\linewidth}
        \scalebox{0.85}{
        \begin{algorithm}[H]
        \renewcommand{\;}{\\}
        \SetKwProg{Function}{}{:}{}
        \DontPrintSemicolon

        \SubAlgo{\onreceive $\MAccept(b, \id, c, D)$ \from $q$\label{epaxos:cmt:receive-accept}}{
          \precond $\bal[\id] \leq b \land {}$
          \makebox[18pt]{}$(\bal[\id] \,{=}\, b \,{\Longrightarrow}\, \phase[\id] \,{\not=}\, \commitP)$\;
          \label{epaxos:cmt:accept-precond}
          $\assign{\bal[\id]}{b}$\;
          \label{epaxos:cmt:accept-assign-b}
          $\assign{\cbal[\id]}{b}$\;
          \label{epaxos:cmt:accept-assign-cb}
          $\assign{\cmd[\id]}{c}$\;
          \label{epaxos:cmt:accept-assign-cmd}
          $\assign{\dep[\id]}{D}$\;
          $\assign{\phase[\id]}{\acceptP}$\; \label{epaxos:cmt:set-acceptp}
          \send $\MAcceptOK(b, \id)$ \To $q$\;
          \label{epaxos:cmt:sent-accok}
        }

        \smallskip

        \SubAlgo{\onreceive $\MAcceptOK(b, \id)$ {\bf from $Q$} \label{epaxos:cmt:receive-acceptok}}{
          \precond $\bal[\id] = b \land \phase[\id] = \acceptP \land {}$ \makebox[19pt]{}$|Q| \geq n-f$\;
          \label{epaxos:cmt:precond-cmt-slow}
          \send $\MCommit(b, \id, \cmd[\id], \dep[\id])$ \ToAll\;
          \label{epaxos:cmt:sent-cmt-slow}
        }

        \smallskip

        \SubAlgo{\onreceive $\MCommit(b, \id, c, D)$ \from $q$}{
          \label{epaxos:cmt:recv-cmt}
          \precond $\bal[\id] = b$\;
          $\assign{\cbal[\id]}{b}$\;
          \label{epaxos:cmt:cbal}
          $\assign{\cmd[\id]}{c}$\;
          \label{epaxos:cmt:assign-cmd}
          $\assign{\dep[\id]}{D}$\;
          $\assign{\phase[\id]}{\commitP}$\;
          \label{epaxos:cmt:set-phase}
        }

        \end{algorithm}
        }
        \smallskip
        \smallskip
        \smallskip
        \smallskip
        \smallskip
        \smallskip
        \smallskip
        \smallskip
      \end{minipage}

  \end{minipage}
  
  \caption{\epp: commit protocol.  Self-addressed messages are delivered
    immediately.  Initially, for all command identifiers $\id$:
    $\phase[\id]=\startP$, $\bal[\id]=\cbal[\id]=0$,
    $\initCmd[\id]=\cmd[\id]=\bot$, $\initDep[\id]=\dep[\id]=\emptyset$. The
    value $\bot$ does not conflict with any command.  }
  \label{epaxos:cmt}
\end{figure}

\subparagraph{Computing dependencies.}
When the initial coordinator receives a command $c$ from a client (line~\ref{epaxos:cmt:submit}), it first assigns to $c$ a unique identifier $\id$, from which the coordinator's identity can be extracted as $\coord(\id)$.
The coordinator then computes the {\em initial dependencies} of $c$ as the set of all conflicting commands it is aware of, and broadcasts them together with $c$ in a $\MPreAccept$ message.
Upon receiving a $\MPreAccept$ (line~\ref{epaxos:cmt:receive-preaccept}), a process records the command payload in $\cmd[\id]$ as well as $\initCmd[\id]$, and the dependencies computed by the initial coordinator in $\initDep[\id]$%
\footnote{As we noted earlier, in exceptional cases $\cmd[\id]$ may end up assigned to $\noop$.
  Keeping the originally submitted payload in $\initCmd[\id]$ is needed for the recovery protocol (\refsection{recover}).}.
The process then computes its own proposal for the dependency set by completing the initial dependencies with the set of conflicting commands it is aware of (line~\ref{epaxos:cmt:preacc-deps}).
Finally, it advances $\id$ to a $\preacceptP$ phase and replies to the coordinator with a $\MPreAcceptOK$ message, carrying its proposal.
After the coordinator receives $\MPreAcceptOK$ from a {\em quorum} $Q$ of at least $n-f$ processes (line~\ref{epaxos:cmt:rec-preaccok}), it decides whether to take the {\em fast path} or the {\em slow path}. The former commits the command immediately, while the latter requires an additional round of communication.

\subparagraph{Fast path.}
The coordinator takes the fast path (line~\ref{epaxos:cmt:fast-cmt}) if: {\em (i)} quorum $Q$ is {\em fast}, i.e., contains at least $n-e$ processes (recall that $e \le f$); and {\em (ii)} all the dependencies received from the processes in $Q$ are identical to the initial ones.
In this case the coordinator broadcasts a $\MCommit$ message, which informs all the processes about the agreed dependencies.
Upon receiving this message (line~\ref{epaxos:cmt:recv-cmt}), a process advances the command to the $\commitP$ phase, making it eligible for execution (\S\ref{sec:execution}).

For example, consider Figure~\ref{fig:runex}, where $n=5$, $f = 2$ and $e = 1$.
Process $p_1$ coordinates a command $c$ with identifier $\id$, and sends a $\MPreAccept$ message with $\emptyset$ as its initial dependencies.
Since this command is the first one to be received at $p_2$, $p_3$ and $p_4$, they reply with $\emptyset$.
Together with $p_1$, these processes form a fast quorum of size $n-e=4$.
Thus, $p_1$ immediately sends a $\MCommit$ message for $\id$ (depicted by dashed lines).

The fast path mechanism ensures that the protocol is $e$-fast, as defined in \refsection{smr}.
To see why, consider some $E$-faulty synchronous run with $|E| \leq e$.
Assume that a conflict-free command $c$ is submitted by a process $p$ at time $t$, and let $D_0$ be the initial dependencies that $p$ sends in a $\MPreAccept$ message for $c$.
Since $c$ is conflict-free, each conflicting command $c'$ either belongs to $D_0$, or cannot be submitted until $c$ is executed at some process.
Then every $\MPreAcceptOK$ message for $c$ sent back to $p$ will carry $D_0$.
Since at most $e$ processes fail, there will be $n-e$ such messages.
Then, since the run is synchronous, $p$ will execute $c$ by $t+2\Delta$.

\subparagraph{Slow path.}
If the coordinator does not manage to assemble a fast quorum agreeing on the dependencies, it takes a slow path (line~\ref{epaxos:cmt:slow-cmt}), which requires an additional round of communication.
The coordinator computes the final dependencies of the command as the union of the proposals it received in $\MPreAcceptOK$s and broadcasts them together with the initial dependencies in an $\MAccept$ message, analogous to the $\mathtt{2A}$ message of Paxos.
A recipient records the dependencies, advances the command to an $\acceptP$ phase, and replies to the coordinator with $\MAcceptOK$, analogous to the $\mathtt{2B}$ message of Paxos (line~\ref{epaxos:cmt:receive-accept}).
Once the coordinator receives a quorum of $\MAcceptOK$ messages (line~\ref{epaxos:cmt:receive-acceptok}), it broadcasts a $\MCommit$ message, informing the processes that the dependencies have been agreed.
The slow path code is reused at ballots $>0$, and thus, the above messages carry the ballot of the coordinator.
A process records the last ballot where it accepted a slow path proposal for each command in an $\cbal$ array (line~\ref{epaxos:cmt:accept-assign-cb}).
This corresponds to the second ballot variable of Paxos that was missing from the original \epaxos protocol (\S\ref{sec:intro}).

For example, in Figure~\ref{fig:runex} process $p_5$ coordinates a command $c'$ with identifier $\id'$, and sends a $\MPreAccept$ message with $\emptyset$ as its initial dependencies.
Since processes $p_3$ and $p_4$ receive this message after pre-accepting a conflicting command $c$ with identifier $\id$, they reply with $\{\id\}$.
Due to this disagreement, $p_5$ takes the slow path, and proposes $\{\id\}$ as
the dependency set for $\id'$.
This value is committed after another round of communication.
In the end, $c'$ is executed after $c$ (\S\ref{sec:execution}).

\begin{figure}[t!]
  \begin{tikzpicture}[decoration={
      markings,
      mark=at position 0.66 with {\arrow{>}},
      mark=at position 0.33 with {\arrow{>}}
    },
    scale=1.04,
    yscale=1.1, transform shape
  ]

  \filldraw
  (-3.6,-0.6) circle (0.7pt) node[align=center,above] {\footnotesize $p_1$}
  (-3.6,-1.2) circle (0.7pt) node[align=center,above] {\footnotesize $p_2$}
  (-3.6,-1.8) circle (0.7pt) node[align=center,above] {\footnotesize $p_3$}
  (-3.6,-2.4) circle (0.7pt) node[align=center,above] {\footnotesize $p_4$}
  (-3.6,-3)   circle (0.7pt) node[align=center,above] {\footnotesize $p_5$};

  \draw[Gray,line width=0.4pt] (-3.6,-0.6) -- (4.5,-0.6) node[scale=.75][xshift=.6em]{$\mathbf{//}$};
  \draw[Gray,->,line width=0.4pt,>=stealth] (4.8,-0.6) -- (9.5,-0.6);
  \draw[Gray,->,line width=0.4pt,>=stealth] (-3.6,-1.2) -- (9.5,-1.2);
  \draw[Gray,->,line width=0.4pt,>=stealth] (-3.6,-1.8) -- (9.5,-1.8);
  \draw[Gray,line width=0.4pt] (-3.6,-2.4) -- (6,-2.4) node[xshift=.5em, scale=1]{$\mathbf{\times}$};
  \draw[Gray,line width=0.4pt] (-3.6,-3) -- (6,-3) node[xshift=.5em, scale=1]{$\times$};

  \draw[Black,->,line width=0.1pt,>=stealth] (-3.4,-0.1) to (-3.2,-0.33);
  \draw[Black,->,line width=0.1pt,>=stealth] (-2.83,-3.4) to (-2.63,-3.17);
  
  \draw[Black,->,line width=0.5pt,>=stealth] (-2.5,-0.85) to (-1.6,-1.2);
  \draw[Black,->,line width=0.5pt,>=stealth] (-2.5,-0.85) to (-1.6,-1.8);
  \draw[Black,->,line width=0.5pt,>=stealth] (-2.5,-0.85) to (-1.6,-2.4);

  \draw[Black,->,line width=0.5pt,>=stealth] (-1.6,-1.2) to (-0.7,-0.85);
  \draw[Black,->,line width=0.5pt,>=stealth] (-1.6,-1.8) to (-0.7,-0.85);
  \draw[Black,->,line width=0.5pt,>=stealth] (-1.6,-2.4) to (-0.7,-0.85);

  \draw[->,line width=0.4pt,densely dashed,Gray,>=latex] (-0.7,-0.85) to (-0.6, -1.2);
  \draw[->,line width=0.4pt,densely dashed,Gray,>=latex] (-0.7,-0.85) to (-0.1, -1.8);
  \draw[->,line width=0.4pt,densely dashed,Gray,>=latex] (-0.7,-0.85) to (0.45, -2.4);
  \draw[->,line width=0.4pt,densely dashed,Gray,>=latex] (-0.7,-0.85) to (1, -3);

  \draw[Black,->,line width=0.5pt,>=stealth] (-1.8,-2.75) to (-0.8,-1.8);
  \draw[Black,->,line width=0.5pt,>=stealth] (-1.8,-2.75) to (-0.8,-2.4);

  \draw[Black,->,line width=0.5pt,>=stealth] (-0.8,-1.8) to (0.1, -2.75);
  \draw[Black,->,line width=0.5pt,>=stealth] (-0.8,-2.4) to (0.1, -2.75);

  \draw[Black,->,line width=0.5pt,>=stealth] (0.1, -2.75) to (1,-1.8);
  \draw[Black,->,line width=0.5pt,>=stealth] (0.1, -2.75) to (1,-2.4);

  \draw[Black,->,line width=0.5pt,>=stealth] (1,-1.8) to (1.9, -2.75);
  \draw[Black,->,line width=0.5pt,>=stealth] (1,-2.4) to (1.9, -2.75);

  \draw[Black,->,line width=0.1pt,>=stealth] (1.6,-0.1) to (1.8,-0.33); %
  \draw[Black,->,line width=0.1pt,>=stealth] (3.7,-0.1) to (4.2,-0.9); %

  \draw[Black,->,line width=0.5pt,>=stealth] (2.7,-0.85) to (3.3,-1.2);
  \draw[Black,->,line width=0.5pt,>=stealth] (3.3,-1.2) to (3.8,-0.6);

  \draw[Black,->,line width=0.5pt,>=stealth] (4.6,-1.2) to (5.35,-0.6);
  \draw[Black,->,line width=0.5pt,>=stealth] (4.6,-1.2) to (5.35,-1.8);

  \draw[Black,->,line width=0.5pt,>=stealth] (5.35,-0.6) to (6,-0.95);
  \draw[Black,->,line width=0.5pt,>=stealth] (5.35,-1.8) to (6,-1.45);

  \draw[Black,->,line width=0.5pt,>=stealth] (6,-0.95) to (6.65,-0.6);
  \draw[Black,->,line width=0.5pt,>=stealth] (6,-1.45) to (6.65,-1.8);

  \draw[Black,->,line width=0.5pt,>=stealth] (6.65,-0.6) to (7.3,-0.95);
  \draw[Black,->,line width=0.5pt,>=stealth] (6.65,-1.8) to (7.3,-1.45);

  \draw[Black,->,line width=0.5pt,>=stealth] (8.2,-0.95) to (8.7, -0.6);
  \draw[Black,->,line width=0.5pt,>=stealth] (8.2,-1.45) to (8.7,-1.8);

  \draw[->,line width=0.4pt,densely dashed,Gray,>=latex] (1.9, -2.75) to (6, -1.8);
  \draw[->,line width=0.4pt,densely dashed,Gray,>=latex] (1.9, -2.75) to (3, -2.4);

  \filldraw
  (-1.4,-1.04) node[inner sep=1pt,minimum size=1pt,Black,draw,fill=Green!20,right,line width=0.1pt,rounded corners=3] {\scalebox{0.8}{$\emptyset$}}
  (-1.4,-1.38) node[inner sep=1pt,minimum size=1pt,Black,draw,fill=Green!20,right,line width=0.1pt,rounded corners=3] {\scalebox{0.8}{$\emptyset$}}
  (-1.4,-1.8) node[inner sep=1pt,minimum size=1pt,Black,draw,fill=Green!20,right,line width=0.1pt,rounded corners=3] {\scalebox{0.8}{$\emptyset$}}

  (-0.7,-2.1) node[inner sep=1.2pt,minimum size=1pt,Black,draw,fill=Red!20,right,line width=0.1pt,rounded corners=3] {\scalebox{0.7}{$\{\id\}$}}
  (-0.7,-2.53) node[inner sep=1.2pt,minimum size=1pt,Black,draw,fill=Red!20,right,line width=0.1pt,rounded corners=3] {\scalebox{0.7}{$\{\id\}$}}

  (3.4,-0.95) node[inner sep=1pt,minimum size=1pt,Black,draw,fill=Blue!20,right,line width=0.1pt,rounded corners=3] {\scalebox{0.7}{$\{\id\}$}}

  (8.4,-0.4) node[inner sep=1pt,minimum size=1pt,fill=White,right,line width=0.1pt,rounded corners=3] {\scalebox{0.8}{$\noop$}}
  (8.4,-2) node[inner sep=1pt,minimum size=1pt,fill=White,right,line width=0.1pt,rounded corners=3] {\scalebox{0.8}{$\noop$}}
  ;

  \filldraw
  (-3.3,-0.6) node[minimum height=14pt,Black,fill=Green!20,draw,right] {\scalebox{0.8}{$\MPreAccept$}}
  (-1.3,-0.6) node[minimum height=14pt,Black,fill=Green!20,draw,right] {\scalebox{0.8}{$\MCommit$}}
  (-2.6,-3) node[minimum height=14pt,Black,fill=Red!20,draw,right] {\scalebox{0.8}{$\MPreAccept$}}
  (-0.5,-3) node[minimum height=14pt,Black,fill=Red!20,draw,right] {\scalebox{0.8}{$\MAccept$}}
  (1.3,-3) node[minimum height=14pt,Black,fill=Red!20,draw,right] {\scalebox{0.8}{$\MCommit$}}
  (1.7,-0.6) node[minimum height=14pt,Black,fill=Blue!20,draw,right] {\scalebox{0.8}{$\MPreAccept$}}
  (4,-1.2) node[minimum height=14pt,Black,fill=Blue!20,draw,right] {\scalebox{0.8}{$\MRecover$}}
  (5.4,-1.2) node[minimum height=14pt,Black,fill=Blue!20,draw,right] {\scalebox{0.8}{$\MValidate$}}
  (7.1,-1.2) node[minimum height=14pt,Black,fill=Blue!20,draw,right] {\scalebox{0.8}{$\MAccept$}};

  \filldraw
  (-3.1,0) node{\scalebox{0.7}{$\submit(\cmdc)$}}
  (-2.45,-3.55) node{\scalebox{0.7}{$\submit(\cmdcp)$}}
  (0.64,-0.6) node[above] {\scalebox{0.75}{$\exec(\cmdc)$}}
  (3.4,-3) node[above] {\scalebox{0.75}{$\exec(\cmdc)$}}
  (4.9,-3) node[above] {\scalebox{0.75}{$\exec(\cmdcp)$}}
  (1.9,0) node{\scalebox{0.7}{$\submit(\cmdcpp)$}}
  (4.2,0) node{\scalebox{0.7}{$\recover(\id'')$}};
\end{tikzpicture}
  \caption{Example of an \epp run with pairwise conflicting commands $c$, $c'$,
    and $c''$, which have identifiers $\id$, $\id'$ and $\id''$, respectively.
    We assume $n = 5$, $f = 2$ and $e = 1$.
    During the run, $p_1$ is briefly disconnected ({\color{Gray}{$//$}}), while $p_4$ and $p_5$ fail ({\color{Gray}{$\times$}}).
  }
  \label{fig:runex}
\end{figure}

\subparagraph{Ordering guarantee.}
A key property of the above protocol is that it preserves Invariant~\ref{inv:cnst} for pairs of conflicting commands committed at ballot $0$: at least one of the two commands is in the dependency set of the other.
\begin{proof}[Proof of Invariant~\ref{inv:cnst} for ballot $0$]
  Take two distinct commands $\id$ and $\id'$, committed at ballot $0$ with payloads $c$ and $c'$ and dependency sets $D$ and $D'$, respectively.
  Assume that $\inconflict{c}{c'}$.
  According to line~\ref{epaxos:cmt:let-d}, there is a quorum $Q$ of $\MPreAcceptOK(\id, D_q)$ messages such that $|Q| \geq n-f$ and $D = \bigcup_{q \in Q} D_q$.
  Similarly, there is a quorum $Q'$ of $\MPreAcceptOK(\id', D'_q)$ messages such that $|Q'| \geq n-f$ and $D' = \bigcup_{q \in Q'} D'_q$.
  By line~\ref{epaxos:cmt:set-c}, before sending $\MPreAcceptOK$, processes in $Q$ (respectively, $Q'$) store $c$ (respectively, $c'$) in $\cmd$.
  Since $n \geq 2f+1$, we have $|Q \cap Q'| \geq n-2f \geq 1$, and thus $Q \cap Q' \neq \emptyset$.
  Take any process $q \in Q \cap Q'$ and assume without loss of generality that $q$ sends $\MPreAcceptOK(\id, D_q)$ before $\MPreAcceptOK(\id', D'_q)$.
  Then the computation at line~\ref{epaxos:cmt:preacc-deps} ensures that $\id \in D'_q$, and thus $\id \in D'$.
\end{proof}

\subsection{Recovery Protocol}
\label{sec:recover}

If a coordinator of a command is suspected of failure, another process is nominated to take over;
this nomination is done using standard techniques based on failure detectors~\cite{omega}, and we defer the details to \tr{\ref{app:start-recovery}}{\nliveness}.
The new coordinator executes a recovery protocol for a command with identifier $\id$ by calling $\recover(\id)$ in Figure~\ref{epaxos:recvr} (for now, we ignore the highlighted parts of the code, which are explained in \S\ref{sec:coord-opt}).
The coordinator picks a ballot it owns higher than any it has previously participated in and asks processes to join it with a $\MRecover$ message, analogous to the $\mathtt{1A}$ message of Paxos.
Processes join the ballot if they do not already participate in a higher one.
In this case they reply with a $\MRecoverOK$ message, which is analogous to the $\mathtt{1B}$ message of Paxos and carries all the local information about the command being recovered (line~\ref{epaxos:recvr:rec}).
The new coordinator waits until it receives $\MRecoverOK$ messages from a quorum $Q$ of at least $n-f$ processes (line~\ref{epaxos:recvr:recoverok}), after which it computes its proposal for the dependency set.
To maintain consensus on dependencies (Invariant~\ref{inv:agr}), the new coordinator must propose the same dependencies that were decided in earlier ballots, if any.
The latter could be done either on the slow or the fast path, and the protocol addresses the two cases separately.

\begin{figure}[p]
  \begin{minipage}{1.23\linewidth}
    \scalebox{0.85}{
      \begin{algorithm}[H]
        \renewcommand{\;}{\\}
        \SetKwProg{Function}{}{:}{}
        \DontPrintSemicolon

        \Function{$\recover(\id)$\label{epaxos:recvr:recover}}{
          \Let {\ $b$ = (a ballot owned by $p$ such that $b > \bal[\id]$)}\;
          \label{epaxos:recvr:new-bal}
          \send $\MRecover(b, \id)$ \ToAll{}
          \label{epaxos:recvr:send-rec}
        }

        \smallskip
        \smallskip

        \SubAlgo{\onreceive $\MRecover(b, \id)$ \from $q$\label{epaxos:recvr:rec}}{
          \precond $\bal[\id] < b$\;
          \label{epaxos:recvr:rec-precond}
          $\assign{\bal[\id]}{b}$\;
          \label{epaxos:recvr:set-bal}
          \send $\MRecoverOK(b, \id, \cbal[\id], \cmd[\id], \dep[\id], \initDep[\id], \phase[\id])$ \To $q$\;
          \label{epaxos:recvr:send-recok}
        }

        \smallskip
        \smallskip

        \SubAlgo{\onreceive $\MRecoverOK(b,\id,\vcbal_q,c_q,\vdep_q,\vinitDep_q,\vphase_q)$ {\bf from all $q \in Q$}\label{epaxos:recvr:recoverok}}{
          \precond $\bal[\id] = b \land |Q| \geq n-f$\;
          \label{epaxos:recvr:recoverok-precond}
          \Let $\bmax = {\tt max}\{\vcbal_q \mid q \in Q\}$\; \label{epaxos:recvr:bmax}
          \Let $U = \{q \in Q \mid \vcbal_q = \bmax\}$\;
          \label{epaxos:recvr:let-u}

          \lIf{$\exists q \in U.\, \vphase_q = \commitP$}{\label{epaxos:recvr:is-cmt}
            \send $\MCommit(b, \id, c_q, \vdep_q)$ \ToAll
            \label{epaxos:recvr:send-cmt}
          }
          \lElseIf{$\exists q \in U.\, \vphase_q = \acceptP$}{\label{epaxos:recvr:is-acc}
            \send $\MAccept(b, \id, c_q, \vdep_q)$ \ToAll
            \label{epaxos:recvr:send-acc}
          }
          \SetAlgoNoEnd
          \hl{\bf else if $\coord(\id) \in Q$ then send $\MAccept(b, \id, \noop, \emptyset)$ to all}\;
          \label{epaxos:recvr:coord-in}
          \label{epaxos:recvr:send-acc-coord}

          \uElseIf{\hlc{$\exists R \subseteq Q.\, |R| \geq |Q|-e \land \forall q \in R.\, (\vphase_q = \preacceptP \land \vdep_q = \vinitDep_q)$}}{
            \label{epaxos:recvr:fast-cond}
            \hl{{\bf let} $\Rmax$ be the largest set $R$ that satisfies the condition at line~\mbox{\ref{epaxos:recvr:fast-cond}}}\;
            \label{epaxos:recvr:let-rmax}
            \Let {\ $(c, D) = (c_q, \vdep_q)$ for any $q \in R$}\;
            \label{epaxos:recvr:let-c}
            \send $\MValidate(b, \id, c, D)$ {\bf to all processes in $Q$}\;
            \label{epaxos:recvr:send-validate}
            {\bf wait until received} $\MValidateOK(b, \id, \conflicts_q)$ {\bf from all $q \in Q$}\;
            \label{epaxos:recvr:rec-validate-ok}
            \Let $\conflicts = \bigcup_{q \in Q} \conflicts_q$\;
            \label{epaxos:recvr:rec-let-conflicts}
            \uIf{$\conflicts = \emptyset$}{\label{epaxos:recvr:conf-empty}
              \send $\MAccept(b, \id, c, D)$ \ToAll
              \label{epaxos:recvr:send-acc-good}
            }
            \uElseIf{$(\exists (\id',\commitP) \in \conflicts)$ \label{epaxos:recvr:conf-cmt}
              $\text{\hl{$\lor\ (|\Rmax| = |Q| - e
                \land \exists (\id', \_) \in \conflicts.\,
                \coord(\id') \notin Q)$}}$}{
              \label{epaxos:recvr:break-cond-wait} %
              \send $\MAccept(b, \id, \noop, \emptyset)$ \ToAll
              \label{epaxos:recvr:conf-cmt-acc}
              \label{epaxos:recvr:nop-3} %
            }
            \Else{
              \send \hlc{$\MWaiting(\id, \text{\hl{$|\Rmax|$}})$} \ToAll\;
              \label{epaxos:recvr:send-wait}
              \SetAlgoNoLine
              \SetAlgoNoEnd
              \SubAlgo{\bf wait until}{
                \label{epaxos:recvr:wait}
                \SetAlgoVlined
                \SetAlgoNoEnd
                \SubAlgo{{\bf case $\exists (\id', \_) \in \conflicts.\, \phase[\id'] = \commitP \land (\cmd[\id'] \neq \noop \land \id \notin \dep[\id'])$ do}}{
                  \label{epaxos:recvr:break-cond-cmt}
                  \send $\MAccept(b, \id, \noop, \emptyset)$ \ToAll\;
                  \label{epaxos:recvr:nop-1}
                }
                \SubAlgo{{\bf case $\forall (\id', \_) \in \conflicts.\, \phase[\id'] = \commitP \land (\cmd[\id'] = \noop \lor \id \in \dep[\id'])$ do}}{
                  \label{epaxos:recvr:all-committed}
                  \send $\MAccept(b, \id, c, D)$ \ToAll\;
                  \label{epaxos:recvr:nop-2}
                }
                \SubAlgo{{\bf case $\exists (\id', \_) \in \conflicts.\, (p$
                    received \hlc{$\MWaiting(\id', \text{\hl{$k'$}}))
                      \ \text{\hl{$\land\ k' > n-f-e$}}$} do}}{
                  \label{epaxos:recvr:break-cond-wait-r}
                  \send $\MAccept(b, \id, \noop, \emptyset)$ \ToAll\;
                  \label{epaxos:recvr:break-cond-wait-r-end}
                }
                  \SubAlgo{\hl{\bf case $p$ received $\MRecoverOK(b,\id,\_,\vcmd,\vdep,\_,\vphase)$ from $q \notin Q$
                      with $\vphase = \commitP \lor \vphase = \acceptP \lor
                      q = \coord(\id)$ do}}{
                    \label{epaxos:recvr:break-cond-coord}
                    \hl{\bf if $\vphase = \commitP$ then send $\MCommit(b, \id, \vcmd, \vdep)$ to all}\; \label{epaxos:recvr:coord-commit}
                    \hl{\bf else if $\vphase = \acceptP$ then send $\MAccept(b, \id, \vcmd, \vdep)$ to all}\; \label{epaxos:recvr:coord-acc}
                    \hl{\bf else send $\MAccept(b, \id, \noop, \emptyset)$ to all}
                    \label{epaxos:recvr:break-cond-coord-end}
                  }
              }
            }
          }
          \lElse{\send $\MAccept(b, \id, \noop, \emptyset)$ \ToAll}
          \label{epaxos:recvr:last-acc}
        }

        \smallskip
        \smallskip

        \SubAlgo{\onreceive $\MValidate(b, \id, c, D)$ \from $q$}{
          \precond $\bal[\id] = b$\;
          $\assign{\cmd[\id]}{c}$\;
          \label{epaxos:recvr:set-c}
          $\assign{\initCmd[\id]}{c}$\;
          \label{epaxos:recvr:set-initcmd}
          $\assign{\initDep[\id]}{D}$\;
          \label{epaxos:recvr:set-d0}

          \Let $\conflicts = \{(\id', \phase[\id']) \mid \id' \neq \id \land \id' \notin D \land{}$
          \makebox[51pt]{}$(\phase[\id'] = \commitP \Longrightarrow {\cmd[\id'] \neq \noop} \land {\inconflict{\cmd[\id']}{c}} \land {\id \notin \dep[\id']}) \land{}$
          \makebox[51pt]{}$(\phase[\id'] \neq \commitP \Longrightarrow {\initCmd[\id'] \neq \bot} \land {\inconflict{\initCmd[\id']}{c}} \land {\id \notin \initDep[\id']})\}$\;
          \label{epaxos:recvr:let-conflicts}
          \send $\MValidateOK(b, \id, \conflicts)$ \ToAll\:
          \label{epaxos:recvr:send-valok}
        }
        \smallskip
      \end{algorithm}
    }
  \end{minipage}
  \caption{\epp: recovery protocol at a process $p$.
    Self-addressed messages are delivered immediately.
    The highlighted parts of the code enable the optimization of \S\ref{sec:coord-opt}.
  }
  \label{epaxos:recvr}
\end{figure}

\subparagraph{Recovering slow path decisions.}

As in Paxos, the new coordinator first considers the subset $U \subseteq Q$ of processes with the highest reported $\cbal$ (line~\ref{epaxos:recvr:let-u}, assigned at line~\ref{epaxos:cmt:accept-assign-cb}): the states of these processes supersede those from lower ballots.
If a process in $U$ has already committed the command $\id$ (line~\ref{epaxos:recvr:is-cmt}), then the coordinator broadcasts a $\MCommit$ message with the committed dependency set.
Similarly, if a process in $U$ has accepted the command $\id$ (line~\ref{epaxos:recvr:is-acc}), then the coordinator resumes the commit protocol interrupted by its predecessor by broadcasting an $\MAccept$ message with the accepted dependency set and the new ballot.
These rules are enough to recover dependencies committed via the slow path: in this case there must exist a quorum of $\ge n-f$ processes each having $\phase[\id] \in \{\acceptP, \commitP\}$; since $n \ge 2f+1$, this quorum intersects with the recovery quorum $Q$, so one of the conditions at lines~\ref{epaxos:recvr:is-cmt} or~\ref{epaxos:recvr:is-acc} must hold.

\subparagraph{Recovering fast path decisions.}
Even if none of the processes in $U$ has committed or accepted $\id$, it is still possible that the initial coordinator committed the command via the fast path.
The recovery of such decisions is much more subtle.
A naive approach would be to do this similarly to Fast Paxos~\cite{fast-paxos}.
If $\id$ took the fast path then, since the size of the fast quorum is $n-e$ (line~\ref{epaxos:cmt:fast-cmt}), at least $|Q|-e$ processes in the recovery quorum $Q$ must have pre-accepted $\id$'s initial dependencies.
Thus, if the latter condition is satisfied (line~\ref{epaxos:recvr:fast-cond}), the new coordinator could recover the command $\id$ with its initial dependencies by broadcasting them in an $\MAccept$ message in the new ballot (line~\ref{epaxos:recvr:send-acc-good}).
Otherwise, the coordinator would {\em abort} the recovery of $\id$ by proposing $\noop$ as its payload in the new ballot (line~\ref{epaxos:recvr:last-acc}).
This approach preserves Invariant~\ref{inv:agr}, but unfortunately, it breaks Invariant~\ref{inv:cnst}, as we now illustrate.

\begin{example}
  \label{ex:break-deps-a}
  Consider again Figure~\ref{fig:runex}.
  After commands $\id$ and $\id'$ are committed with payloads $c$ and $c'$, process $p_1$ submits another command $\id''$, with payload $c''$ that conflicts with $c$ and $c'$.
  The initial dependency set of $\id''$ is $\{\id\}$.
  Since $p_2$ did not participate in committing $\id'$, it pre-accepts $\id''$ with its initial dependencies.
  At this point, suppose the network connection at $p_1$ is briefly disrupted.
  Process $p_2$ suspects that $p_1$ failed and starts recovering $\id''$.
  If it gathers a quorum $Q_{\id''} = \{p_1, p_2, p_3\}$, it will observe itself and $p_1$ voting for the same dependency set $\{\id\}$.
  Together with $p_4$ and $p_5$ these processes {\em could} have formed a fast quorum $\{p_1, p_2, p_4, p_5\}$ to commit $\id''$ with $\{\id\}$.
  Thus, $p_2$ may think that it must recover $\id''$ with the dependency set $\{\id\}$ to preserve Invariant~\ref{inv:agr}.
  But then the pair of commands $\id'$ and $\id''$ breaks Invariant~\ref{inv:cnst}, and thus, {\sf Consistency} (\S\ref{sec:smr}).
\end{example}

\subparagraph{Validation phase overview.}
The above pitfall arises because the new coordinator can only {\em suspect} that a command took the fast path, since it observes only a part of the fast quorum: $|Q|-e$ votes out of $n-e$.
The coordinator can then resurrect commands (such as $\id''$ in Example~\ref{ex:break-deps-a}) that did not actually take the fast path and are missing from the dependencies of conflicting commands (such as $\id'$).
To avoid this, \epp augments the above naive recovery with a special {\em validation phase}, which performs a more precise analysis of whether the command being recovered could have taken the fast path.
This validation phase is the key novel contribution of our protocol: as we explain in \S\ref{sec:comparison}, it is simpler than the original \epaxos recovery, and has been rigorously proved correct \tra{\ref{sec:proof}}{\nproof}.

During the validation phase, the new coordinator sends to its recovery quorum $Q$ a $\MValidate$ message with the payload $c$ and the dependencies $D$ that it would like to propose in the new ballot for the command being recovered (line~\ref{epaxos:recvr:send-validate}).
The recipients reply with $\MValidateOK$ messages carrying additional information (line~\ref{epaxos:recvr:send-valok}).
To see what kind of information is helpful to the new coordinator, consider Example~\ref{ex:break-deps-a} again.
Process $p_3 \in Q_{\id''}$ committed $\id'$ with dependencies that do not contain $\id''$, and thus it knows that $\id''$ could not take the fast path with a dependency set that does not contain $\id'$: this would contradict Invariant~\ref{inv:cnst} for commands committed at ballot $0$, which always holds, as we proved in \S\ref{sec:commit}.
In this case, we say that command $\id'$ {\em invalidates} the recovery of $\id''$.
The $\MValidateOK$ messages carry a set $I$ including the identifiers of such commands (the first implication at line~\ref{epaxos:recvr:let-conflicts}).
\begin{definition}
  \label{def:inv}
  A command $\id'$ \textbf{invalidates} the recovery of a command $\id$ with a payload $c$ and a dependency set $D$, if a process has committed $\id'$ with a payload $c' \neq \noop$ and a dependency set $D'$ such that $\inconflict{c'}{c}$, $\id' \notin D$ and $\id \notin D'$.
\end{definition}

As illustrated by the above example, recovering $\id$ with $c$ and $D$ in the presence of an invalidating command would violate Invariant~\ref{inv:cnst}.
Hence, in this case the coordinator aborts the recovery of $\id$ by replacing its payload with a $\noop$ command.
This does not violate Invariant~\ref{inv:agr} because in this case we can prove that it could not take the fast path.
For instance, if the invalidating command is committed at ballot $0$, then this follows from the proof of Invariant~\ref{inv:cnst} for ballot $0$ given in \S\ref{sec:commit}.
The set carried by $\MValidateOK$ messages also includes the identifiers of {\em potentially invalidating} commands -- those that are not yet committed but may become invalidating once they are (the second implication at line~\ref{epaxos:recvr:let-conflicts}).
\begin{definition}
  \label{def:pinv}
    A command $\id'$ \textbf{potentially invalidates} the recovery of a command $\id$ with a payload $c$ and a dependency set $D$, if a process has not committed $\id'$, but stores it with an initial payload $c'$ and an initial dependency set $D_0'$ such that $\inconflict{c'}{c}$, $\id' \notin D$ and $\id \notin D_0'$.
\end{definition}

We can prove that any uncommitted command $\id'$ that may eventually invalidate $\id$ must satisfy this definition.
In particular, note that the initial dependencies $D_0'$ of $\id'$ are always included into its committed dependencies (line~\ref{epaxos:cmt:preacc-deps}).
Hence, if $\id \in D_0'$ (contradicting Definition~\ref{def:pinv}), then $\id'$ cannot be committed with dependencies $D'$ such that $\id \not\in D'$, and thus $\id'$ cannot invalidate $\id$.
If the coordinator finds potentially invalidating commands, it cannot yet determine whether it is safe to recover $\id$.
Thus, it waits for those commands to be committed before making a decision.

Note that the second disjunct of line~\ref{epaxos:recvr:let-conflicts} and Definition~\ref{def:pinv} consider a command $\id'$ to be potentially invalidating for $\id$ even if a process has $\cmd[\id'] = \noop$, but
$\initCmd[\id'] \not= \bot$ and $\inconflict{\initCmd[\id']}{c}$.
This is because we cannot yet be sure that $\id'$ will be committed with a $\noop$: if the coordinator trying to do this fails, another coordinator may query a different recovery quorum and decide to propose the original payload $\initCmd[\id']$.

\subparagraph{Validation details.}
In more detail, upon receiving $\MValidateOK$ messages from the recovery quorum $Q$ (line~\ref{epaxos:recvr:rec-validate-ok}), the new coordinator acts as follows:
\begin{enumerate}%
\item
  \label{fast-rec-case-1}
  If no invalidating commands are detected at the quorum, the coordinator recovers $\id$ with $c$ and $D$ (line~\ref{epaxos:recvr:conf-empty}).
  In this case, no invalidating command will appear in the future: since the payload $c$ is now stored at the quorum (line~\ref{epaxos:recvr:set-c}), any future conflicting command will depend on $\id$.
\item
  If some invalidating command is detected, then the coordinator aborts the recovery of $\id$ by replacing its payload with a $\noop$ command (line~\ref{epaxos:recvr:conf-cmt}).
  \label{fast-rec-case-2}
\item
  \label{fast-rec-case-3}
  If none of the above holds, the coordinator waits until the potentially invalidating commands are committed. %
  If some of these commands end up being invalidating, just like in item~\ref{fast-rec-case-2}, the coordinator aborts the recovery of $\id$ (line~\ref{epaxos:recvr:break-cond-cmt}).
  Otherwise, the coordinator recovers $\id$ with $c$ and $D$ (line~\ref{epaxos:recvr:all-committed}).
\item
  Item~\ref{fast-rec-case-3} may require the coordinator of $\id$ to wait for a potentially invalidating command $\id'$ that is itself undergoing recovery.
  The recovery of $\id'$ may then also need to wait for another command.
  One may get worried that this will lead to cycles of commands waiting for each other, making the protocol stuck.
  Fortunately, if the coordinator of $\id$ finds out that $\id'$ is itself blocked, the coordinator can stop waiting for $\id'$.
  To enable this, before any process starts the wait at line~\ref{epaxos:recvr:wait}, it broadcasts a $\mathtt{Waiting}$ message, carrying the identifier of the command it is recovering (line~\ref{epaxos:recvr:send-wait}). %
  While the coordinator of $\id$ is waiting, it listens for $\mathtt{Waiting}$ messages from other coordinators.
  Upon receiving $\mathtt{Waiting}(\id')$, the coordinator aborts the recovery of $\id$ (line~\ref{epaxos:recvr:break-cond-wait-r}).
  The following proposition implies that this is safe because, in this case, $\id$ could not take the fast path.
\end{enumerate}
\begin{lemma}
  \label{prop:wait}
  Consider commands $\id$ and $\id'$ with conflicting initial payloads $c$ and $c'$, and assume that  neither of the commands is included in the initial dependencies of the other.
  If a message $\mathtt{Waiting}(\id')$ has been sent, then $\id$ did not take the fast path.
\end{lemma}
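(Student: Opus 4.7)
The plan is to argue by contradiction. Suppose that $\id$ took the fast path while a $\MWaiting(\id')$ message has been sent. Taking the fast path at line~\ref{epaxos:cmt:fast-cmt} yields a fast quorum $Q_\id$ of size $n-e$ whose members all sent $\MPreAcceptOK(\id, \initDep[\id])$. The send of $\MWaiting(\id')$ at line~\ref{epaxos:recvr:send-wait} requires that the recovery coordinator of $\id'$ collected a recovery quorum $Q_{\id'}$ of size at least $n-f$, and identified (via line~\ref{epaxos:recvr:fast-cond}) a subset $R \subseteq Q_{\id'}$ of size at least $|Q_{\id'}|-e \geq n-f-e$ whose members each reported $\vphase_q = \preacceptP$ and $\vdep_q = \vinitDep_q$ for $\id'$.

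In the baseline regime $n \geq 2e+f+1$ we have $|Q_\id| + |R| \geq (n-e) + (n-f-e) = 2n-2e-f > n$, so I can pick a process $p \in Q_\id \cap R$ and inspect its local state at two moments. Because $p$ sent $\MPreAcceptOK(\id, \initDep[\id])$ and $\id' \notin \initDep[\id]$ by hypothesis, line~\ref{epaxos:cmt:preacc-deps} forces $\cmd[\id']$ at $p$ to be non-conflicting with $c$ at that moment; but every value that $\cmd[\id']$ can take besides $\bot$ (namely $c'$ or $\noop$) conflicts with $c$, so $\cmd[\id'] = \bot$ at $p$ then. Symmetrically, from $p \in R$ together with the hypothesis $\id \notin \initDep[\id']$, I conclude that when $p$ executed the $\MPreAccept(\id')$ handler, $\cmd[\id]$ at $p$ was $\bot$. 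Since no handler ever resets $\cmd$ to $\bot$ (lines~\ref{epaxos:cmt:set-c}, \ref{epaxos:cmt:assign-cmd}, and the $\MAccept$ and $\MValidate$ handlers only overwrite with non-$\bot$ values), whichever of the two pre-accepts at $p$ happened first would have stored a non-$\bot$ payload that the later one should have observed, a contradiction.

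The main obstacle is justifying the symmetric claim, because $\vphase_q$ and $\vdep_q$ are values reported in $\MRecoverOK$ at a possibly higher ballot, not at the time $p$ originally pre-accepted $\id'$. Here I would first observe that $\vphase_q = \preacceptP$ means $p$ never processed a $\MAccept$ or $\MCommit$ for $\id'$, so $\dep[\id']$ at $p$ is still the value written by the original pre-accept. To convert $\vdep_q = \vinitDep_q$ into the needed fact $\id \notin \dep[\id']$ at that earlier moment, I would establish a small auxiliary invariant that $\id \notin \initDep[\id']$ holds at every process and every time, by induction on the ballots at which $\MValidate$ overwrites $\initDep[\id']$ at line~\ref{epaxos:recvr:set-d0}: every such overwrite propagates the $\vdep_q = \vinitDep_q$ of some preaccepted process in an $R$-set, and the invariant lifts through this transfer. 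With these pieces in place, the argument of the previous paragraph goes through and closes the proof.
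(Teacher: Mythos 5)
Your proof is correct and follows essentially the same route as the paper's: a contradiction via the cardinality bound $(n-e)+(n-f-e)-n \ge 1$ on the intersection of the fast quorum for $\id$ with the set $R$ certifying $\MWaiting(\id')$, followed by the observation that whichever command the common process pre-accepts second must pick up the other as a dependency at line~\ref{epaxos:cmt:preacc-deps}, contradicting the hypothesis on initial dependencies. The only substantive difference is that you make explicit (and correctly patch, via the auxiliary invariant that $\initDep$ is only ever overwritten with the command's true initial dependency set) a point the paper's proof silently assumes, namely that $\vdep_q=\vinitDep_q$ in a $\MRecoverOK$ message really does certify pre-acceptance with the original initial dependencies.
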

\begin{proof}
  By contradiction, assume that $\id$ takes the fast path with a fast quorum $Q_f$, i.e., $|Q_f| \geq n-e$ and all processes in $Q_f$ pre-accepted $\id$ with its initial dependencies $D$ (line~\ref{epaxos:cmt:fast-cmt}).
  Since $\mathtt{Waiting}(\id')$ has been sent, there exists a recovery quorum $Q'$ and a set $R' \subseteq Q'$ such that $|R'| \geq |Q'|-e \geq n-f-e$ and all processes in $R'$ pre-accepted $\id'$ with its initial dependencies $D'$ (line~\ref{epaxos:recvr:fast-cond}).
  Since $n \geq 2e+f+1$, the intersection of $Q_f$ and $R'$ has a cardinality
\begin{equation}\label{eq:intersect}
  \ge (n-e)+(n-f-e)-n = n -2e -f \ge 1.
\end{equation}
  Thus, some process $p$ pre-accepted $\id$ with $D$ and $\id'$ with $D'$.
  Assume without loss of generality that the former happens before the latter.
  When $p$ pre-accepts $\id$, it assigns $\cmd[\id]$ to $c$ (line~\ref{epaxos:cmt:set-c}).
  It is easy to see that, at any time after this, $p$ has $\cmd[\id] = c$ or $\cmd[\id] = \noop$.
  In particular, this holds when $p$ pre-accepts $\id'$.
  Then, since $\inconflict{c}{c'}$ and $\inconflict{\noop}{c'}$, line~\ref{epaxos:cmt:preacc-deps} implies $\id \in D'$, which contradicts our assumption.
\end{proof}

Using this result, in \tr{\ref{sec:proof}}{C} we prove the correctness of our protocol.
\begin{theorem}
  \label{theo:upper-unopt}
  The above-presented {\em \epp} protocol implements an $f$-resilient $e$-fast SMR protocol, provided $n \ge \max\{2e+f+1, 2f+1\}$.
\end{theorem}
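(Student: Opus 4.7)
The plan is to establish the four SMR properties of \refsection{smr} together with $f$-resilience and $e$-fastness (\refdef{fast}). Validity and Integrity follow directly from the protocol structure: a command can only be executed after it has been committed (hence submitted at some point), and the $\Executed$ set in Figure~\ref{epaxos:exec} prevents reexecution. The substantive content of the theorem is Consistency, $e$-fastness, and Liveness. Consistency reduces, by the argument already sketched after Figure~\ref{epaxos:exec}, to \refinv{agr} and \refinv{cnst} holding across \emph{all} ballots; $e$-fastness is immediate from the sketch following line~\ref{epaxos:cmt:fast-cmt} (a conflict-free command obtains $n-e$ identical $\MPreAcceptOK$s in a synchronous $E$-faulty run with $|E|\le e$ and commits within $2\Delta$).

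For \refinv{agr}, I would proceed Paxos-style, by induction on the ballot at which a command is committed. Ballot~$0$ is immediate since the initial coordinator sends at most one $\MCommit$. For the inductive step at ballot $b > 0$, the recovering coordinator's proposal must match every value committed at any ballot $b' < b$. Any recovery quorum $Q$ of size $\ge n-f$ intersects every slow-path commit quorum (using $n\ge 2f+1$), so some process in $U$ reports $\phase \in \{\acceptP,\commitP\}$ with the highest $\cbal$, forcing lines~\ref{epaxos:recvr:is-cmt} or~\ref{epaxos:recvr:is-acc} to fire with the correct value. For fast-path commits at ballot~$0$, I would show that the predicate at line~\ref{epaxos:recvr:fast-cond} is satisfied with the initial payload and dependencies as $(c,D)$; moreover, no invalidating command can appear in $\conflicts$ (since it would itself have been committed with $\id$ in its dependencies), so the coordinator eventually proposes the correct $(c,D)$ at line~\ref{epaxos:recvr:send-acc-good} or line~\ref{epaxos:recvr:nop-2}.

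For \refinv{cnst}, the ballot-$0$ case is already established in \refsection{commit}. For commands recovered via the slow-path branches, the invariant is inherited from the value being replayed. The crux, and where I expect the main difficulty, is the fast-path recovery branch. The argument uses \refdef{inv} and \refdef{pinv} in three cases: if a conflicting command $\id^*$ is already committed and invalidates the recovery of $\id$, then some $q \in Q$ reports it in $\conflicts_q$ at line~\ref{epaxos:recvr:let-conflicts}, triggering $\noop$ at line~\ref{epaxos:recvr:conf-cmt-acc}; if $\id^*$ is only potentially invalidating, the coordinator waits at line~\ref{epaxos:recvr:wait} and either aborts at line~\ref{epaxos:recvr:nop-1} or is absolved at line~\ref{epaxos:recvr:all-committed} (when $\id^*$ commits with $\id$ in its dependencies, or as $\noop$); finally, the $\MWaiting$-based abort at line~\ref{epaxos:recvr:break-cond-wait-r} is sound by Lemma~\ref{prop:wait}, since a blocked concurrent recovery of $\id'$ witnesses that $\id$ could not have taken the fast path. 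Throughout, the intersection bound \eqref{eq:intersect} rules out the pathological configurations.

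Liveness follows by standard arguments under partial synchrony: after $\GST$, an $\Omega$-style failure detector eventually elects a single correct coordinator per command, which drives commit. The subtlety is that recovery can wait on potentially invalidating commands, which may themselves be under recovery; the $\MWaiting$ broadcast at line~\ref{epaxos:recvr:send-wait} combined with the abort at line~\ref{epaxos:recvr:break-cond-wait-r} breaks any such cycle in finitely many steps, ensuring every recovery terminates. The hardest portion of the full proof will be the \refinv{cnst} analysis in the fast-path recovery branch, because the validation phase interacts non-trivially with concurrent recoveries, and one must show that $\conflicts$ is exhaustive enough that no future invalidator can escape detection.
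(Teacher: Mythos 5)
Your decomposition matches the paper's exactly: Validity/Integrity are immediate, $e$-fastness follows from the argument after line~\ref{epaxos:cmt:fast-cmt}, Consistency reduces to \refinv{agr} and \refinv{cnst}, Agreement is proved Paxos-style by induction on ballots with a separate analysis of fast-path commits, Visibility splits on how each command's dependency set was obtained, and Liveness for the baseline protocol uses the fact that a coordinator blocked at line~\ref{epaxos:recvr:wait} unconditionally aborts upon receiving a $\MWaiting$ message (Lemma~\ref{prop:wait} making this safe). At the level of a plan this would work, and you correctly single out the fast-path recovery branch as the hard part.

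There is, however, one genuine gap: the parenthetical you offer for the central claim of the Agreement argument --- that no invalidating command can appear in $\conflicts$ ``since it would itself have been committed with $\id$ in its dependencies'' --- assumes exactly what must be proved. Why would a conflicting command $\id''$ committed without $\id$ in its dependencies be impossible when $\id$ took the fast path? The paper needs real machinery here: a characterization of where any committed non-$\noop$ dependency set can originate (Proposition~\ref{prop:dep-from}: either a quorum of $\MPreAcceptOK$s or a recovery passing through line~\ref{epaxos:recvr:send-acc-good} or~\ref{epaxos:recvr:nop-2}), Invariant~\ref{inv:two-preaccs} to kill the first origin via intersection with the fast quorum $Q_f$, and, for the second origin, an argument that $\id''$'s recovery quorum also intersects $Q_f$, so some process there reports $\id$ in $I''$, forcing $\id''$'s coordinator to wait and ultimately commit $\id''$ with $\id$ in its dependencies or as a $\noop$ --- each contradicting the fast path of $\id$ (via Lemma~\ref{lem:inv:cmt} and Invariant~\ref{inv:eq-dep}). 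Two smaller omissions: when $\conflicts$ is nonempty but only potentially invalidating, you must still show the wait at line~\ref{epaxos:recvr:wait} resolves through line~\ref{epaxos:recvr:all-committed} rather than line~\ref{epaxos:recvr:nop-1} or~\ref{epaxos:recvr:break-cond-wait-r-end} --- so Lemma~\ref{prop:wait} is needed inside the \emph{Agreement} proof, not only for Visibility; and the ballot-$0$ base case is not quite ``immediate,'' since the initial coordinator might attempt its commit only after a higher ballot has already committed, which the paper rules out via the precondition at line~\ref{epaxos:cmt:preaccok-precond} in Lemma~\ref{lem:inv:cmt}.
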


\section{\epp: Optimized Protocol for {\boldmath $n \geq \max\{2e+f-1, 2f+1\}$}}
\labsection{coord-opt}

We now explain how to optimize the protocol presented so far to require $n \geq \max\{2e+f-1, 2f+1\}$, which matches the lower bound in Theorem~\ref{theo:lower}.
For $n = 2f+1$ this allows setting $e = \lceil\frac{f+1}{2}\rceil$ -- the threshold the original \epaxos aimed to achieve~\cite{epaxos}.
In particular, for the values $n = 3$ and $n=5$, frequent in practice, $e$ can be a minority of processes.
In these cases, the fast path contacts quorums of the same size as in the usual, slower Paxos~\cite{paxos}.

The optimization is enabled by adding the highlighted code to the recovery protocol in Figure~\ref{epaxos:recvr}.
First, before attempting to recover a fast path decision for a command $\id$, the new coordinator checks whether the initial coordinator of $\id$ belongs to the recovery quorum (line~\ref{epaxos:recvr:coord-in}) -- if so, it aborts the recovery, proposing a $\noop$.
This is safe because in this case $\id$ cannot take the fast path.
Indeed, the initial coordinator of $\id$ could not take the fast path before sending its $\MRecoverOK$ message: otherwise, it would report the command as committed, and the new coordinator would take the branch at line~\ref{epaxos:recvr:is-cmt}.
The initial coordinator will also not take the fast path in the future: when sending $\MRecoverOK$, it switches to a ballot $>0$, which disables the fast path (line~\ref{epaxos:cmt:preaccok-precond}).

We also make a change to the case when the new coordinator waits for potentially invalidating commands to commit (line~\ref{epaxos:recvr:wait}).
If during this wait the new coordinator receives an additional $\MRecoverOK$ message for $\id$ from a process that has committed or accepted $\id$, or from the initial coordinator of $\id$ (line~\ref{epaxos:recvr:break-cond-coord}), it handles the message just like a previously received one, completing the recovery as in lines~\ref{epaxos:recvr:is-cmt}--\ref{epaxos:recvr:send-acc-coord}.

The next change is that the new coordinator of $\id$ aborts the recovery if exactly $|Q|-e$ processes in the recovery quorum $Q$ satisfy the condition at line~\ref{epaxos:recvr:fast-cond}, and the coordinator of a potentially invalidating non-$\noop$ command is not part of $Q$ (line~\ref{epaxos:recvr:conf-cmt}).
The following proposition shows that this is safe.
\begin{lemma}
  \label{prop:red-one}
  If the first disjunct at line~\ref{epaxos:recvr:conf-cmt} does not hold, but the second one does, then $\id$ did not take the fast path.
\end{lemma}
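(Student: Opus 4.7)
The plan is to argue by contradiction. Assume that $\id$ took the fast path at ballot~$0$ through a fast quorum $Q_f$ of size at least $n-e$, with every $p \in Q_f$ having pre-accepted $\id$ with the initial dependencies $D_0 := \initDep[\id]$. By line~\ref{epaxos:recvr:let-c}, the $D$ passed to $\conflicts$ equals $D_0$, so in particular $\id' \notin D_0$. Fix the witness $(\id', \_) \in \conflicts$ supplied by the second disjunct of line~\ref{epaxos:recvr:conf-cmt}, contributed by some $q_0 \in Q$; since the first disjunct fails, $q_0$ has $\vphase[\id'] \neq \commitP$, so line~\ref{epaxos:recvr:let-conflicts} ensures $\initCmd[\id'] \neq \bot$, $\inconflict{\initCmd[\id']}{c}$, and $\id \notin \initDep[\id']$ at $q_0$. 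We also have $\coord(\id') \notin Q$ by hypothesis.

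The first sub-goal is to rule out $\coord(\id') \in Q_f$. Because $\id' \notin D_0$, the computation at line~\ref{epaxos:cmt:preacc-deps} forbids $\cmd[\id']$ at $\coord(\id')$ from conflicting with $c$ at the moment $\coord(\id')$ pre-accepted $\id$; as the only possible values of $\cmd[\id']$ are $\bot$, $\initCmd[\id'] = c'$, or $\noop$, and the last two conflict with $c$, we must have $\cmd[\id'] = \bot$ there. Hence $\coord(\id')$ pre-accepted $\id$ strictly before submitting $\id'$; but at submission time $\cmd[\id] = c$ at $\coord(\id')$, so $\id$ enters the initial dependencies of $\id'$ computed at line~\ref{epaxos:cmt:submit}. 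That set is then stored as $\initDep[\id']$ at every recipient of the ensuing $\MPreAccept(\id', c', \cdot)$, in particular at $q_0$, which contradicts $\id \notin \initDep[\id']$ at $q_0$.

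With $\coord(\id')$ outside both $Q_f$ and $Q$, both quorums live in $\procSet \setminus \{\coord(\id')\}$, giving $|Q_f \cup Q| \le n-1$ and by inclusion--exclusion
\[
|Q_f \cap Q| \;\ge\; |Q_f| + |Q| - (n-1) \;\ge\; |Q| - e + 1.
\]
The remaining step is to show $Q_f \cap Q \subseteq R_{\max}$; combined with the hypothesis $|R_{\max}| = |Q|-e$, this yields the contradictory inequality $|Q|-e+1 \le |Q_f \cap Q| \le |R_{\max}| = |Q|-e$.

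This last step is what I expect to be the main obstacle, as it requires reasoning about what prior recovery rounds may have done to the state of processes in $Q_f$. The sketch is as follows: being in the branch at line~\ref{epaxos:recvr:fast-cond} excludes any $q \in U$ from having $\vphase_q \in \{\acceptP, \commitP\}$; together with the observation that $\vcbal_q > 0$ forces $\vphase_q \in \{\acceptP, \commitP\}$ (only accept and commit update $\cbal$, and phases are monotone), this drives $\bmax = 0$ and $U = Q$, so no $q \in Q$ has ever moved past $\preacceptP$ for $\id$. For $p \in Q_f \cap Q$, monotonicity then pins $\vphase_p = \preacceptP$ and $\vdep_p = D_0$. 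The delicate point is $\vinitDep_p$, which $\MValidate$ can in principle rewrite; here I would appeal to an auxiliary invariant---to be established as part of the broader correctness argument in \tr{\ref{sec:proof}}{\nproof}---stating that every $\MValidate$ for $\id$ must have carried $(c, D_0)$, which itself follows from line~\ref{epaxos:recvr:let-c} applied inductively to each earlier recovery round.
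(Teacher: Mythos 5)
Your proof is correct and follows essentially the same route as the paper's: the same counting based on $|\Rmax| = |Q|-e$ and $|Q_f| \ge n-e$ (together with the fact that every process of $Q$ lying in the fast quorum must end up in $\Rmax$), combined with the same conflicting-pre-accepts argument about $\coord(\id')$. The only difference is the order of deduction — the paper first concludes $\coord(\id') \in Q_f$ from the counting and then derives the contradiction from the two pre-accepts at $\coord(\id')$, whereas you first rule out $\coord(\id') \in Q_f$ via that pre-accept argument and then reach the contradiction $|Q|-e+1 \le |\Rmax| = |Q|-e$ — a logically equivalent rearrangement, with your version being somewhat more explicit about why $Q_f \cap Q \subseteq \Rmax$.
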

\begin{proof}
  By contradiction, assume that $\id$ takes the fast path with a fast quorum $Q_f$, i.e., $|Q_f| \geq n-e$ and all processes in $Q_f$ pre-accepted $\id$ with its initial dependencies $D$ (line~\ref{epaxos:cmt:fast-cmt}).
  By our assumption, $|\Rmax| = |Q|-e$ and for some $\id'$ and $\vphase'$ we have $(\id', \vphase') \in \conflicts$, $\coord(\id') \notin Q$, and $\vphase' \neq \commitP$.
  By line~\ref{epaxos:recvr:let-conflicts}, we also have $\id' \not\in D$.
  Let $D'$ be the initial dependency set of $\id'$.
  Notice that by lines~\ref{epaxos:cmt:set-d0} and \ref{epaxos:recvr:set-d0}, whenever a process has $\initCmd[\id'] \neq \bot$, it also has $\initDep[\id'] = D'$.
  Then by line~\ref{epaxos:recvr:let-conflicts}, the process whose $\MValidateOK$ message resulted in
  $(\id', \vphase')$ being included into $I$ at line~\ref{epaxos:recvr:rec-let-conflicts} must have had $\id \notin \initDep[\id'] = D'$.
  Moreover, by the same line, $\id$ and $\id'$ have conflicting initial payloads.
  Since $|\Rmax| = |Q|-e$, we know that $|Q|-e$ processes in $Q$ pre-accepted $\id$ with $D$, and the remaining $e$ did not.
  Since $|Q_f| \geq n-e$ processes pre-accepted $\id$ with $D$, all processes that are not in $Q$ must belong to $Q_f$.
  Then since $\coord(\id') \notin Q$, we must have $\coord(\id') \in Q_f$, so $\coord(\id')$ pre-accepted $\id$ with $D$ and $\id'$ with $D'$.
  Given that the initial payloads of $\id$ and $\id'$ conflict, we obtain a contradiction in the same way as in the proof of Lemma~\ref{prop:wait}.
\end{proof}

Finally, in the baseline protocol the new coordinator of $\id$ could abort the recovery if it received a $\mathtt{Waiting}(\id')$ message for a potentially invalidating command $\id'$ (line~\ref{epaxos:recvr:break-cond-wait-r}).
This cannot generally be done in the optimized protocol, because Lemma~\ref{prop:wait} no longer holds: its proof relies on the fact that any two sets of $n-e$ and $n-f-e$ processes intersect (Eq.~\ref{eq:intersect}), which is not ensured with $n \ge \max\{2e + f - 1, 2f+1\}$.
Hence, the optimized protocol imposes an additional constraint on when the new coordinator can abort the recovery.
Each $\MWaiting$ message sent at line~\ref{epaxos:recvr:send-wait} includes the size of the largest set $R$ satisfying the condition at line~\ref{epaxos:recvr:fast-cond}.
Then if the new coordinator of $\id$ receives $\mathtt{Waiting}(\id', k')$, it can only abort the recovery if $k' > n - f - e$ (note that we always have $k' \ge n - f - e$).
The following weaker version of Lemma~\ref{prop:wait} shows that this is safe.
\begin{lemma}
  \label{prop:wait-opt}
  Consider commands $\id$ and $\id'$ with conflicting initial payloads $c$ and $c'$, and assume that  neither of the commands is included in the initial dependencies of the other.
  If a message $\mathtt{Waiting}(\id', k')$ has been sent with $k' > n-f-e$, then $\id$ did not take the fast path.
\end{lemma}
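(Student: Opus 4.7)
The plan is to mimic the proof of Lemma~\ref{prop:wait}, but to compensate for the weaker lower bound on $n$ by bringing $\coord(\id')$ into the intersection count. Assume for contradiction that $\id$ takes the fast path with a fast quorum $Q_f$, so $|Q_f| \geq n-e$ and every process in $Q_f$ pre-accepts $\id$ with its initial dependency set $D$. Let $Q'$ be the recovery quorum the new coordinator of $\id'$ used when it sent $\MWaiting(\id', k')$, and let $R'\subseteq Q'$ be the set witnessing the fast-path condition at line~\ref{epaxos:recvr:fast-cond} with $|R'| = k'$; by construction $R'$ is $\Rmax$ for the recovery of $\id'$, so every process in $R'$ pre-accepted $\id'$ with its initial dependency set $D'$.

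The crucial new ingredient is the optimization at line~\ref{epaxos:recvr:coord-in}: since the recovery of $\id'$ did not abort there (it reached $\MWaiting$), we have $\coord(\id') \notin Q'$, and in particular $\coord(\id') \notin R'$. Because self-addressed $\MPreAccept$ messages are delivered atomically (line~\ref{epaxos:cmt:send-preacc}), $\coord(\id')$ itself pre-accepts $\id'$ with $D'$ (the union at line~\ref{epaxos:cmt:preacc-deps} contributes nothing new to the initial set at the moment of self-delivery). Hence the set $T$ of processes that have pre-accepted $\id'$ with $D'$ satisfies $T \supseteq R' \cup \{\coord(\id')\}$, giving $|T| \geq k'+1$. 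A standard inclusion-exclusion then yields
\[
|T \cap Q_f| \;\geq\; (k'+1) + (n-e) - n \;=\; k' + 1 - e.
\]
Using $k' > n-f-e$, i.e., $k' \geq n-f-e+1$, and $n \geq 2e+f-1$, this is at least $n - f - 2e + 2 \geq 1$, so we can pick $p \in T \cap Q_f$.

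The rest of the argument is then the same as in Lemma~\ref{prop:wait}: $p$ pre-accepts both $\id$ with $D$ and $\id'$ with $D'$; WLOG it pre-accepts $\id$ first, so at that point $\cmd[\id] = c$, and at any later time $\cmd[\id] \in \{c, \noop\}$, both of which conflict with $c'$. Then line~\ref{epaxos:cmt:preacc-deps} forces $\id \in D'$, contradicting the hypothesis. The main obstacle is the arithmetic step: with only $n \geq 2e+f-1$, the naive intersection $|Q_f \cap R'| \geq (n-e)+(n-f-e)-n$ can drop below $1$, which is exactly why the baseline lemma fails and why the protocol strengthens the trigger to $k' > n-f-e$. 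The resolution is to augment $R'$ with $\coord(\id')$, whose membership in $T$ is a free consequence of the line~\ref{epaxos:recvr:coord-in} optimization together with atomic self-delivery, providing the extra element needed to restore the intersection bound.
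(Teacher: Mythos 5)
Your proof is correct and follows essentially the same route as the paper's: both argue by contradiction, observe that line~\ref{epaxos:recvr:coord-in} forces $\coord(\id') \notin Q' \supseteq R'$ while $\coord(\id')$ nonetheless pre-accepted $\id'$ with its initial dependencies (by immediate self-delivery), and then intersect $Q_f$ with $R' \cup \{\coord(\id')\}$ to get cardinality at least $n-2e-f+2 \geq 1$ before concluding as in Lemma~\ref{prop:wait}. Your arithmetic $(k'+1)+(n-e)-n = k'+1-e \geq n-f-2e+2$ is just a slight rewriting of the paper's bound, so there is nothing substantively different to report.
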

\begin{proof}
  By contradiction, assume that $\id$ takes the fast path with a fast quorum $Q_f$, i.e., $|Q_f| \geq n-e$ and all processes in $Q_f$ pre-accepted $\id$ with its initial dependencies $D$ (line~\ref{epaxos:cmt:fast-cmt}).
  Since $\mathtt{Waiting}(\id', k')$ has been sent, there exists a recovery quorum $Q'$ and a set $R' \subseteq Q'$ such that $|R'| = k' > n-f-e$ and all processes in $R'$ pre-accepted $\id'$ with its initial dependencies $D'$ (line~\ref{epaxos:recvr:fast-cond}).
  Moreover, $\coord(\id')$ also pre-accepted $\id'$ with $D'$, and line~\ref{epaxos:recvr:coord-in} ensures that $\coord(\id') \not\in R'$.
  Since $n \geq 2e+f-1$, the intersection of $Q_f$ and $R' \cup \{\coord(\id')\}$ has a cardinality
\[
    \geq (n-e) + (n-f-e+2) - n = n-2e-f+2 \geq 1.
\]
Thus, some process pre-accepted $\id$ with $D$ and $\id'$ with $D'$.
From this we obtain a contradiction in the same way as in the proof of Lemma~\ref{prop:wait}.
\end{proof}

We next prove that, even though the optimized protocol restricts when the recovery can be aborted because of $\mathtt{Waiting}$ messages, the recovery eventually terminates.
\begin{lemma}
  \label{prop:all-good}
  Every command $\id$ is eventually committed at all correct processes.
\end{lemma}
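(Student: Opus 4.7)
The plan is to prove the lemma by contradiction, exploiting the \textsf{Liveness} assumption that the execution contains finitely many submitted commands. Let $S$ denote the set of commands that are never committed at some correct process; by assumption $S$ is finite.

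First I would show that for every $\id \in S$, the initial coordinator $\coord(\id)$ must be faulty. Otherwise, after $\GST$ all messages from $\coord(\id)$ are delivered within $\Delta$, so $\coord(\id)$ either reaches the fast-path branch at line~\ref{epaxos:cmt:fast-cmt} or the slow-path branch at line~\ref{epaxos:cmt:slow-cmt}; both ultimately broadcast a $\MCommit(0, \id, \cmd[\id], D)$ that reliable links deliver to every correct process, contradicting $\id \in S$. Second, the usual Paxos-style stabilization argument applies: after $\GST$ the failure detector stops suspecting correct processes, so for each $\id \in S$ some correct process $p_\id$ eventually invokes $\recover(\id)$. Since finitely many ballots have been used for $\id$, $p_\id$ can pick a ballot $b$ that is higher than all of them; no subsequent $\MRecover$ preempts it, and $p_\id$ assembles a recovery quorum $Q$ consisting of correct processes that all reply with $\MRecoverOK$.

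Third, I would inspect the branches of Figure~\ref{epaxos:recvr}. Every branch except the wait loop at line~\ref{epaxos:recvr:wait} broadcasts either $\MAccept$ or $\MCommit$; in the $\MAccept$ case the ensuing $\MAcceptOK{}/\MCommit$ exchange also completes, since $Q$ consists of correct processes. Hence a recovery that does not commit $\id$ is stuck in the wait loop with a nonempty set $\conflicts_\id$ of potentially invalidating commands. Any $\id' \in \conflicts_\id \setminus S$ is eventually committed and triggers case~\ref{epaxos:recvr:break-cond-cmt} or~\ref{epaxos:recvr:all-committed}, so we may assume $\conflicts_\id \subseteq S$. Every such $\id'$ is also stuck and has therefore already broadcast $\mathtt{Waiting}(\id', k')$ at line~\ref{epaxos:recvr:send-wait} for some $k' \geq n-f-e$.

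The final step is a case analysis on these $k'$ values. If some $k' > n-f-e$, case~\ref{epaxos:recvr:break-cond-wait-r} fires and aborts $\id$'s recovery (which is safe by Lemma~\ref{prop:wait-opt}), contradicting $\id \in S$. If $k' = n-f-e$ for every $\id' \in \conflicts_\id$, the cycle must be broken by the late-delivery optimization at line~\ref{epaxos:recvr:break-cond-coord}: since $\coord(\id')$ is faulty by Step~1, correct processes outside $Q_{\id'}$ eventually deliver their $\MRecoverOK$ to $p_{\id'}$, and after $\GST$ at least one of those replies carries a $\commitP$ or $\acceptP$ phase for $\id'$ originating from a branch that some earlier, eventually-superseded recovery attempt pushed out. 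I expect the hardest part of this argument to be the $k' = n-f-e$ boundary: the $\mathtt{Waiting}$-based cycle breaker of Lemma~\ref{prop:wait-opt} is a safety mechanism and does not, on its own, force termination at that threshold. Closing the case requires a pigeonhole argument on recovery quorums analogous to the one used in the proof of Lemma~\ref{prop:wait-opt}, together with a careful accounting of how late $\MRecoverOK$ messages propagate among the correct processes in the $n \ge \max\{2e+f-1, 2f+1\}$ regime; the combinatorial details are best deferred to the appendix.
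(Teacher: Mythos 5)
There is a genuine gap, and it is exactly where you flag it: the $k' = n-f-e$ boundary case is the heart of the proof, and your proposal does not close it. The paper's resolution is not the late-delivery mechanism at line~\ref{epaxos:recvr:break-cond-coord} applied to $\id'$; it is the second disjunct of line~\ref{epaxos:recvr:conf-cmt} applied to $\id$ itself. Concretely, the paper first shows that the blocking is \emph{mutual}: after stabilization the initial payload of $\id$ is stored at a recovery quorum of $\id$ (line~\ref{epaxos:recvr:set-initcmd}), recovery quorums of $\id$ and $\id'$ intersect, and the intersecting process never commits $\id$, so by line~\ref{epaxos:recvr:let-conflicts} every recovery of $\id'$ eventually computes a set $I'$ with $\id \in I'$. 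Hence $p'$ would abort upon receiving $\MWaiting(\id, k)$ with $k > n-f-e$; since it never aborts, $k \le n-f-e$, which forces $|\Rmax| = |Q|-e$ in the recovery of $\id$. Next, $\coord(\id')$ must be faulty --- not for the reason you give, but because otherwise either line~\ref{epaxos:recvr:coord-in} or line~\ref{epaxos:recvr:break-cond-coord} would eventually fire in the recovery of $\id'$ and terminate it. Finally, once $\coord(\id')$ is faulty it is eventually absent from $p$'s recovery quorum $Q$, so the condition $|\Rmax| = |Q|-e \land \coord(\id') \notin Q$ at line~\ref{epaxos:recvr:break-cond-wait} holds and $p$ aborts the recovery of $\id$ with a $\noop$ (safe by Lemma~\ref{prop:red-one}) \emph{before} ever entering the wait loop --- contradiction. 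Your sketch of a ``pigeonhole argument on recovery quorums'' plus late $\MRecoverOK$ messages carrying $\commitP$ or $\acceptP$ does not work: if $\id'$ was never accepted or committed anywhere, no such message exists, and line~\ref{epaxos:recvr:break-cond-coord} with phase $\preacceptP$ only fires for the initial coordinator, which here is faulty.

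A secondary error: your Step~1 claims that a correct initial coordinator always drives $\id$ to commitment at ballot~$0$. This ignores ballot preemption --- a recovery attempt (possibly before $\GST$) raises $\bal[\id]$ above $0$ at quorum members, after which the preconditions at lines~\ref{epaxos:cmt:preaccept-precond}, \ref{epaxos:cmt:preaccok-precond} and~\ref{epaxos:cmt:accept-precond} permanently block the ballot-$0$ handlers. So commands with correct initial coordinators can perfectly well end up in your set $S$. The faultiness of $\coord(\id')$ must instead be derived from the stuckness of $\id'$'s recovery, as above. The first three steps of your proposal (stabilization to a single recovering process per command, exhaustion of the non-waiting branches, reduction to blocking commands that are themselves never committed) do match the paper's structure.
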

\begin{proof}[Proof sketch]
  Assume by contradiction that some process never commits a command $\id$.
  Our coordinator nomination mechanism (\S\ref{sec:recover} and \tr{\ref{app:start-recovery}}{\nliveness}) ensures that a command keeps getting recovered until all correct processes commit it, and that eventually after $\GST$ and after failures have stopped, each command can only be recovered by a single process.
  Let $p$ be this process for the command $\id$, and let $D$ be $\id$'s initial dependency set.
  We can show that every recovery attempt by $p$ must get stuck at line~\ref{epaxos:recvr:wait}, meaning $\id$ always encounters a potentially invalidating command at line~\ref{epaxos:recvr:let-conflicts} that is itself blocked.

  As we consider runs with only finitely many commands, there is a command $\id'$ that blocks $\id$ infinitely many times, i.e., process $p$ computes a set $I$ at line~\ref{epaxos:recvr:rec-let-conflicts} such that $\id' \in I$.
  Eventually, after $\GST$, $\id'$ is also recovered only by a single process.
  Let $p'$ be this process, and $D'$ be the initial dependencies of $\id'$.
  Then, since $\id$ repeatedly blocks on $\id'$ (and in particular never satisfies the first disjunct at line~\ref{epaxos:recvr:conf-cmt}), line~\ref{epaxos:recvr:let-conflicts} implies:
  {\em (i)}~$\id' \notin D$;
  {\em (ii)}~$\id \notin D'$; and
  {\em (iii)}~the initial payloads of $\id$ and $\id'$ conflict.
  By line~\ref{epaxos:recvr:set-initcmd}, eventually, the initial payload of $\id$ is stored at a recovery quorum of $\id$.
  Recovery quorums of $\id$ and $\id'$ intersect, so eventually on every recovery of $\id'$, process $p'$ will receive a $\MValidateOK$ message from some correct process $q'$ storing $\id$'s initial payload.
  Notice that $q'$ never commits $\id$: otherwise $\id$'s recovery would eventually complete at line~\ref{epaxos:recvr:send-cmt} or \ref{epaxos:recvr:coord-commit}.
  Then, due to \emph{(i)}--\emph{(iii)}, process $q'$ will add $\id$ to the invalidating set at line~\ref{epaxos:recvr:let-conflicts}, so eventually, every recovery of $\id'$ by $p'$ will compute a set $I'$ at line~\ref{epaxos:recvr:rec-let-conflicts} such that $\id \in I'$.
  
  Due to line~\ref{epaxos:recvr:send-wait}, process $p'$ will eventually receive $\MWaiting(\id,k)$ from $p$.
  We must have $k \leq n-f-e$: otherwise, $p'$ would satisfy the condition at line~\ref{epaxos:recvr:break-cond-wait-r} and finish the recovery, contradicting our assumption.
  Hence, when $p$ recovers $\id$, it has $|\Rmax| \leq n - f - e$.
  On the other hand, by lines~\ref{epaxos:recvr:recoverok} and~\ref{epaxos:recvr:let-rmax},
  $|\Rmax| \ge |Q| - e \ge n - f - e \geq |\Rmax|$, so that $|\Rmax| = |Q| - e$.
  Since the conditions at lines~\ref{epaxos:recvr:coord-in} and~\ref{epaxos:recvr:break-cond-coord} never hold for $\id'$, $\coord(\id')$ must be faulty.
  But then eventually, $p$ will initiate a recovery of $\id$ for which the condition at line~\ref{epaxos:recvr:break-cond-wait} will hold, letting $p$ finish the recovery.
  This contradicts our assumption.
\end{proof}

The full proof of correctness for \epp is given in \tr{\ref{sec:proof}}{C} and uses the invariants and lemmas we have presented so far.
\begin{theorem}
  \label{theo:upper}
  The optimized {\em \epp} protocol implements an $f$-resilient $e$-fast SMR protocol, provided $n \ge \max\{2e+f-1, 2f+1\}$.
\end{theorem}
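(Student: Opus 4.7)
The plan is to establish the four SMR properties together with the $e$-fast property, with most of the work concentrated on Consistency and Liveness. Validity and Integrity are essentially bookkeeping: any executed command appears in $\cmd[\id]$ only after arriving via a $\MPreAccept$, $\MAccept$, or $\MCommit$ that traces back to a \submit\ call, and the $\Executed$ set together with the guard at line~\ref{epaxos:exec:test} prevents double execution. For Consistency, I would first prove a self-contained execution-protocol lemma: if Invariants~\ref{inv:agr} and \ref{inv:cnst} hold system-wide, then every two processes execute conflicting non-$\noop$ commands in the same order. Agreement forces all processes that commit a command to agree on its dependency set, so the subgraphs $G$ computed at line~\ref{epaxos:exec:let-c} contain the same edges once they contain the same vertices; the SCC decomposition, the topological order, and the identifier-based tie-break are all deterministic functions of $G$, and Visibility ensures each pair of conflicting committed commands is connected by at least one edge, so they cannot straddle two distinct SCCs in incompatible orders.

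The main obstacle is proving Invariants~\ref{inv:agr} and~\ref{inv:cnst} across recoveries at arbitrary ballots. Following the Paxos template, I would prove by induction on the ballot $b$ a ``chosen value'' invariant: if some ballot $b^\star \le b$ commits a command $\id$ with payload $c$ and dependency set $D$ at line~\ref{epaxos:cmt:fast-cmt}, \ref{epaxos:cmt:sent-cmt-slow}, \ref{epaxos:recvr:send-cmt}, or~\ref{epaxos:recvr:coord-commit}, then every non-$\noop$ message sent by a recovery coordinator at any ballot $b' > b$ carries exactly $(c,D)$. The base case $b=0$ is the quorum-intersection argument the paper already sketches for Invariant~\ref{inv:cnst}. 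For the inductive step, I would do a case analysis over the recovery branches at lines~\ref{epaxos:recvr:is-cmt}--\ref{epaxos:recvr:break-cond-coord-end}: the branches at lines~\ref{epaxos:recvr:is-cmt}, \ref{epaxos:recvr:is-acc}, \ref{epaxos:recvr:coord-commit}, \ref{epaxos:recvr:coord-acc} follow from the standard ``highest $\cbal$'' argument applied to the recovery quorum $Q$ (of size $\ge n-f$) and the induction hypothesis. The fast-path branches at lines~\ref{epaxos:recvr:conf-empty} and \ref{epaxos:recvr:all-committed} require that whenever an earlier fast quorum $Q_f$ committed $(c,D)$ at ballot $0$, either some $q \in U$ reports $\vphase_q \in \{\acceptP, \commitP\}$, or at least $|Q|-e$ processes in $Q$ pre-accepted $\id$ with $D$; this follows from $|Q_f \cap Q| \ge (n-e)+(n-f)-n = n-e-f \ge |Q|-e$ when $|Q|=n-f$. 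For Invariant~\ref{inv:cnst} at higher ballots, the validation phase at line~\ref{epaxos:recvr:let-conflicts} explicitly detects any already-committed conflicting command missing $\id$ from its dependencies, so reaching line~\ref{epaxos:recvr:conf-empty} or line~\ref{epaxos:recvr:all-committed} with a non-$\noop$ payload preserves Visibility; Lemmas~\ref{prop:wait-opt} and~\ref{prop:red-one} handle the two residual cases where the recovery aborts to $\noop$ before the potentially invalidating command is observed as committed.

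The hardest part will be the tight quorum arithmetic under $n = 2e+f-1$, where naive intersections collapse. The role of the optimization at line~\ref{epaxos:recvr:coord-in} is precisely to recover the two processes missing from those intersections: a fast-path commitment at ballot $0$ implies the initial coordinator is itself in $Q_f$, so forcing $\coord(\id) \notin Q$ as a precondition for the fast-path recovery branch lets us enlarge the intersection by $\coord(\id)$, exactly mirroring the argument used in Lemma~\ref{prop:wait-opt}. I would need to thread this observation through every sub-case where a non-$\noop$ fast-path value could be resurrected, and cross-check that the stricter threshold $k' > n-f-e$ used with $\MWaiting$ (via Lemma~\ref{prop:wait-opt} rather than Lemma~\ref{prop:wait}) does not deadlock recoveries, which is the content of Lemma~\ref{prop:all-good}. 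Liveness then follows by combining Lemma~\ref{prop:all-good} with a routine argument that, once every transitive dependency of a command is committed, the background loop in Figure~\ref{epaxos:exec} eventually reaches it; and the $e$-fast property follows verbatim from the argument the paper already gives immediately after presenting the fast path, since the optimization does not modify the failure-free code path.
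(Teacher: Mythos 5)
Your overall architecture matches the paper's: reduce \textsf{Consistency} to Invariants~\ref{inv:agr} and~\ref{inv:cnst} via a deterministic-execution lemma, prove those invariants by induction with a case analysis over the recovery branches, use Lemma~\ref{prop:all-good} for \textsf{Liveness}, and reuse the commit-protocol argument for $e$-fastness. However, there is a concrete gap in the part you yourself flag as the main obstacle. Your ``chosen value'' invariant is stated only for \emph{non-$\noop$} messages sent at higher ballots, which is too weak to yield Agreement: the dangerous scenario is precisely a recovery coordinator sending $\MAccept(b,\id,\noop,\emptyset)$ after $\id$ already took the fast path with a non-$\noop$ payload. The paper's Invariant~\ref{inv:cmt-aux} quantifies over \emph{all} $\MAccept$/$\MCommit$ messages for that reason, and the bulk of its proof consists of showing each $\noop$-abort branch is unreachable after a fast-path commit. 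You assert there are ``two residual cases where the recovery aborts to $\noop$,'' handled by Lemmas~\ref{prop:wait-opt} and~\ref{prop:red-one}, but those lemmas cover only the abort at line~\ref{epaxos:recvr:break-cond-wait-r} and the second disjunct of line~\ref{epaxos:recvr:conf-cmt}. The hardest abort case is missing: the coordinator reaches line~\ref{epaxos:recvr:conf-cmt-acc} (first disjunct) or line~\ref{epaxos:recvr:nop-1} because the validation phase reports an already-committed invalidating command $\id''$, even though $\id$ genuinely took the fast path. Ruling this out is not a quorum-intersection one-liner: it requires knowing where $\id''$'s committed dependency set came from (a $\MPreAcceptOK$ quorum or a prior recovery at line~\ref{epaxos:recvr:send-acc-good}/\ref{epaxos:recvr:nop-2}), which is the paper's Proposition~\ref{prop:dep-from}, and then a nested argument showing that in either case $\id''$ would have been forced to include $\id$ in its dependencies or commit as a $\noop$. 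Nothing in your plan supplies this structural result, and without it the induction for Agreement does not close.

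A secondary, smaller imprecision: for Visibility at higher ballots you rely on the validation phase ``explicitly detecting'' committed conflicting commands, but validation only sees commands committed \emph{before} the $\MValidateOK$ replies are sent. The case where the conflicting command commits afterwards needs the separate observation that line~\ref{epaxos:recvr:set-c} installs $\id$'s payload at the recovery quorum, so any later conflicting pre-accept or validation at an intersecting process picks up $\id$ (the paper's case split in the proof of Invariant~\ref{inv:cnst}, using Invariant~\ref{inv:two-preaccs}). This is fixable within your framework but is not ``verbatim'' from the ballot-$0$ argument.
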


\section{Thrifty Protocol Version}
\label{sec:summary-thrifty}

The \epp protocol can also be changed to incorporate the {\em thrifty} optimization of the original \epaxos.
This allows the coordinator to take the fast path even when the rest of the fast quorum processes disagree with it on dependencies, provided they agree among themselves.
In exchange, the coordinator has to select the fast quorum it will use a priori, and communicate only with this quorum.
Hence, the coordinator will not be able to take the fast path even with a single failure, if this failure happens to be within the selected fast quorum.
Thus, the resulting protocol is no longer $e$-fast for any $e>0$ (\S\ref{sec:smr}).
We defer the description of this protocol to \tr{\ref{sec:thrifty}}{\nthrifty} and its proof of correctness to \tr{\ref{sec:proof}}{\nproof}.

\section{Comparison with the Original Egalitarian Paxos}
\labsection{comparison}

The main subtlety of the original \epaxos protocol for $e = \lceil\frac{f+1}{2}\rceil$ is the failure-recovery procedure.
When there is a suspicion that a command $\id$ took the fast path with payload $c$ and dependency set $D$ (akin to line~\ref{epaxos:recvr:fast-cond} in Figure~\ref{epaxos:recvr}), the new coordinator in \epaxos performs a {\em Tentative Pre-accept} phase: it tries to convince at least $f+1$ processes to pre-accept the values $c$ and $D$ for $\id$ (as in line~\ref{epaxos:cmt:receive-preaccept}).
Unlike our Validation phase, the Tentative Pre-accept phase has a major side effect: processes participating in it that did not already know about the command $\id$ pre-accept it with the dependencies suggested by the new coordinator.
As we now explain, this difference makes the protocol more complex and error-prone.

\subparagraph{Safety.}
As mentioned in \S\ref{sec:intro}, Egalitarian Paxos has a bug in its use of Paxos-like ballot variables: the protocol uses a single ballot variable instead of two.
Because recovery in this protocol is much more complex than in Paxos, fixing this is nontrivial.
In particular, when a process tentatively pre-accepts a command, it is unclear if the second variable (corresponding to our $\cbal$) should be updated or not.
We avoid this choice altogether by not forcing a process to commit to a choice of dependencies during the validation phase.

As we also mentioned in \S\ref{sec:intro}, the available protocol descriptions~\cite{epaxos,epaxos-thesis} only detail recovery for the thrifty version of the protocol, where the coordinator fixes a fast quorum for each command (\S\ref{sec:summary-thrifty}).
Adjusting it to the non-thrifty version (corresponding to our protocol in \S\ref{sec:base}-\ref{sec:coord-opt}) is nontrivial.
The reason is that, in \epaxos, the new coordinator recovering a command may encounter processes in its recovery quorum that tentatively pre-accepted dependencies proposed by past coordinators.
For each such process, the new coordinator needs to check whether an old coordinator is part of the initial fast path quorum~\cite[page 43, conditions 7.d and 7.e]{epaxos-thesis}.
It is unclear how this check can be done in the non-thrifty version, where {\em any} process can potentially be part of the fast quorum that committed the command.
Without this missing piece of information, the non-thrifty version of \epaxos is incomplete.

\subparagraph{Liveness.}
In the thrifty version of \epaxos, two recovery attempts for the same command with different dependencies can block each other indefinitely.
For example, some processes may tentatively pre-accept $D$ or $D'$, preventing either set from gathering the necessary $f+1$ pre-accepts and causing a recovery deadlock.
We detail this bug in \tr{\ref{sec:bug}}{\nbug}.

Another liveness problem comes from the fact that a process does not tentatively pre-accept $D$ and $c$ if it has already recorded an {\em interfering} command $\id'$ -- analogous to an invalidating command in \epp.
Any interfering command $\id'$ is reported back to the new coordinator.
When the coordinator receives this information, it pauses the recovery of $\id$ and starts recovering $\id'$.
As a consequence, multiple processes may simultaneously attempt to coordinate the same command, even after $\GST$, which may block progress.
In \epp we avoid this problem due to our use of $\mathtt{Waiting}$ messages (lines~\ref{epaxos:recvr:wait}-\ref{epaxos:recvr:break-cond-coord-end}).

\section{Related Work}
\labsection{related}

\subparagraph{State-machine replication.}
SMR is a classic approach to implementing fault-tolerant services~\cite{clocks,smr}.
Schiper~and~Pedone~\cite{gbroadcast}, along with Lamport~\cite{gpaxos}, observed that an SMR protocol can execute commutative commands in any order.
To determine the predecessors of a command, these works use partially ordered consensus instances.
Zielinski~\cite{optimisticGenericBroadcast} further observed that the partial order can be replaced by a directed graph, which may have cycles.
\epaxos~\cite{epaxos} is the first practical SMR protocol to use this approach, which decentralizes the task of ordering state-machine commands.

\subparagraph{Derivatives of \epaxos.}
The \epaxos protocol has inspired a number of follow-up works, surveyed in~\cite{losa16,sutra20,WhittakerGSHS21}. 
In particular, several proposals of protocols with similar architecture tried to achieve better performance in failure-free cases~\cite{atlas,caesar,tempo}.
However, none of these protocols has optimal fault-tolerance: the values of $e$ and $f$ they admit are lower than what is required by our Theorem~\ref{theo:lower}, and what \epaxos aimed to achieve.
Janus~\cite{shuai-osdi16} adapted \epaxos to implement transactions.
Due to the complexity of the original protocol, it used its unoptimized version where fast quorums contain {\em all} processes.
Our \epp protocol can serve as a basis for a variant of Janus that reduces the fast quorum size without compromising correctness.
Gryff~\cite{gryff} integrated \epaxos with ABD to improve the performance of writes in distributed key-value stores, while Tollman, Park and Ousterhout~\cite{epaxos-revisited} proposed an enhancement to \epaxos that improves its performance.
Since these works use \epaxos as a black box, they inherit its bugs.
Our version would be a drop-in correct replacement.

\subparagraph{Lower bounds.}
Pedone and Schiper~\cite{PedoneS04} have previously demonstrated that an SMR protocol can tolerate up to $e = f \leq \frac{n}{3}$ failures while executing a command in two message delays.
Lamport~\cite{lowerbounds} generalized this result, showing that any protocol requires $n \ge \max\{2e+f+1, 2f+1\}$ processes (matched by the protocol in \S\ref{sec:base}).
These prior works measure the minimal time it takes to execute a command at {\em all} correct processes.
In practice, a client typically only needs a fast response from the single replica to which it submitted the request.
Our recent work~\cite{BA} defined a notion of fast consensus using this more pragmatic approach.
In \refsection{smr}, we generalized this notion to SMR and established the corresponding lower bound on the number of processes: $n \ge \max\{2e+f-1, 2f+1\}$.
To the best of our knowledge, \epp is the first provably correct protocol to match this bound.

\clearpage

\bibliographystyle{plainurl}
\bibliography{epaxos_simplified}

\iflong
\appendix
\clearpage
\section{Proof of Theorem~\ref{theo:lower}}
\labappendix{lowerbound}

To prove Theorem~\ref{theo:lower}, we use the lower-bound result in \cite{BA} and the fact that (object) consensus reduces to SMR.
A consensus object is an atomic object with  a single operation, $\propose(v)$.
When a process invokes $\propose(v)$, it proposes the value $v$; the operation returns the consensus decision.
Consensus object ensures that {\em (i)} every decision is the proposal of some process; {\em (ii)} no two decisions are different; and that {\em (iii)} every correct process eventually decides.

We prove Theorem~\ref{theo:lower} by contradiction.
Suppose the existence of an $f$-resilient $e$-fast SMR protocol $\mathcal{P}$ satisfying $n < \max\{2e+f-1, 2f+1\}$.
To solve consensus with $\mathcal{P}$, we proceed as follows:
Commands are pairs $(v,p)$ where $v$ is some consensus input value and $p$ a process identifier.
Commands $(v,p)$ and $(v',p')$ conflict when $v \neq v'$.
To propose a value $v$, process $p$ creates command $(v,p)$ and submits it to $\mathcal{P}$.
The first command executed locally contains the consensus decision.

It is straightforward to see that the above construction implements object consensus:
No two processes execute conflicting commands in different orders.
Hence, they must take the same decision.
Moreover, by the Validity property of SMR, the decided value is proposed by some process.

Because protocol $\mathcal{P}$ tolerates up to $f$ crashes, the construction is an $f$-resilient solution to consensus.
We show that it is also $e$-two-step, as defined in \cite{BA}.
Recall that this precisely means that for all $E \subseteq \procSet$ of size $e$:
\begin{enumerate}[noitemsep,topsep=0pt,parsep=0pt]
\item For every value $v$ and process $p \in \procSet \setminus E$, there exists an $E$-faulty synchronous run in which only $p$ calls $\propose()$, the proposed value is $v$, and $p$ decides a value by time $2\Delta$.
\item For every value $v$ and process $p \in \procSet \setminus E$, there exists an $E$-faulty synchronous run in which all processes in $\procSet \setminus E$ call $\propose(v)$ at the beginning of the first round, and $p$ decides a value by time $2\Delta$.
\end{enumerate}

Fix a subset $E \subseteq \procSet$ of size $e$.
Assume some value $v$, a process $p \in \procSet \setminus E$, and a run of the above construction in which $p$ calls $\propose(v)$ at the start of the first round and the other processes do not propose anything.
Because $\mathcal{P}$ is $e$-fast, $p$ executes the command $(v,p)$, and thus decides, at time $2\Delta$.
Similarly, consider a run in which all the processes in $\procSet \setminus E$ propose the same value $v$ initially.
In this run, all the submitted commands are conflict-free.
Hence, each correct process decides by time $2\Delta$.

Theorem 2 in \cite{BA} states that an $f$-resilient $e$-two-step consensus object is implementable iff $n \ge \max\{2e + f - 1, 2f+1\}$.
This contradicts the existence of protocol $\mathcal{P}$.

\section{Thrifty Protocol Details}
\labsection{thrifty}

Figures~\ref{epaxos-thrifty:commit}--\ref{epaxos-thrifty:recvr} show the modifications to the optimized \epp (\S\ref{sec:coord-opt}) that yield the thrifty version of the protocol (for simplicity, we do not consider applying the thrifty mode to the baseline protocol of \S\ref{sec:base}).
Like in \S\ref{sec:coord-opt}, we assume $n \geq \max\{2e+f-1, 2f + 1\}$.

\subparagraph*{Modifications to the commit protocol (Figure~\ref{epaxos-thrifty:commit}).}
In the thrifty version, the initial coordinator sends the $\MPreAccept$ message to a single fast quorum, which must consist of exactly $n-e$ processes and includes the coordinator (line~\ref{epaxos:cmt:send-preacc-thrifty}t).
The fast path can be taken even with a dependency set $D$ different from the initial one, as long as all the processes in the fast quorum except the initial coordinator agree on it (line~\ref{epaxos:cmt:fast-cmt-thrifty}t).
An important restriction is that a command $\id$ cannot take the fast path if a fast quorum process detects that $\id$ must depend on a command in the $\startP$ phase (line~\ref{epaxos:thrifty:start-phase}t).
This situation can occur when, prior to receiving $\id$, a process in the fast quorum has received a $\MValidate$ message for a conflicting command $\id'$, stored its payload at line~\ref{epaxos:recvr:set-c}, but has not yet advanced $\id'$ out of the $\startP$ phase.
In this case the process adds an additional $\mathtt{notOK}$ argument to its $\MPreAcceptOK$ message (line~\ref{epaxos:thrifty:notOK}t), which disables the fast path (line~\ref{epaxos:cmt:fast-cmt-thrifty}t).
As we show in \S\ref{sec:proof}, this restriction is required to ensure liveness.
When a process sends a $\MPreAcceptOK$ message with $\mathtt{notOK}$, it moves $\id$ to the $\preacceptnotokP$ phase (line~\ref{epaxos:thrifty:notOK}t), and the coordinator of $\id$ processes a quorum of $\MPreAcceptOK$ messages only if it is in either the $\preacceptP$ or the $\preacceptnotokP$ phase (line~\ref{epaxos:cmt:preaccok-precond-thrifty}t).

\newcommand{\changed}[1]{{\SetNlSty{textbf}{}{t}#1}}
\newcommand{\unchanged}{\SetNlSty{}{}{}}
\begin{figure}[t!]
  \begin{minipage}{1.3\linewidth}
    \setcounter{AlgoLine}{7}
    \SetNlSty{}{}{}
      \scalebox{0.85}{
        \begin{algorithm}[H]
          \renewcommand{\;}{\\}
          \SetKwProg{Function}{}{:}{}
          \DontPrintSemicolon

          \Function{$\submit(c)$}{ %
            \Let $\id = \newid()$\;
            \changed{\send $\MPreAccept(\id, c, \{\id' \mid
            \inconflict{\cmd[\id']}{c}\})$ {\bf to \hl{any $Q$ such that $|Q| = n-e \land p \in Q$}}\;}
            \label{epaxos:cmt:send-preacc-thrifty}
          }

          \smallskip

          \SubAlgo{\onreceive $\MPreAccept(\id, c, D)$ \from $q$}{ %
            \precond $\bal[\id] = 0 \land \phase[\id] = \startP$\;
            $\assign{\cmd[\id]}{c}$\;
            $\assign{\initCmd[\id]}{c}$\;
            $\assign{\initDep[\id]}{D}$\;
            $\assign{\dep[\id]}{D \cup \{\id' \mid \inconflict{\cmd[\id']}{\cmd[\id]}\}}$\;
            \label{epaxos:thrifty:pre-accept-deps}
            \changed{\uIf{\hl{$\forall \id \in \dep[\id].\, \phase[\id] \neq \startP$}}{
              \label{epaxos:thrifty:start-phase}
              \nonl $\assign{\phase[\id]}{\preacceptP}$\;
              \nonl \send $\MPreAcceptOK(\id, \dep[\id], \text{\hl{$\mathtt{OK}$}})$ \To $q$\;
            }
              \Else{
                \label{epaxos:thrifty:notOK}
                \nonl $\assign{\phase[\id]}{\preacceptnotokP}$\;
              \nonl \send $\MPreAcceptOK(\id, \dep[\id], \text{\hl{$\mathtt{notOK}$}})$ \To $q$\;
            }}
          }

          \smallskip

          \SubAlgo{\changed{\onreceive $\MPreAcceptOK(\id, D_q, \text{\hl{$\mathit{OK}_q$}})$ {\bf from all $q \in Q$}}}{
            \label{epaxos:cmt:rec-preaccok-thrifty}
            \changed{\precond $\bal[\id] = 0 \land \text{\hl{$\phase[\id] \in \{\preacceptP, \preacceptnotokP\}$}} \land |Q| \geq n-f$}\;
            \label{epaxos:cmt:preaccok-precond-thrifty}
            \Let $D = \bigcup_{q \in Q} D_q$\;
            \changed{\uIf{$|Q| \geq n-e\ $\hl{$\land\ (\forall q \in Q\setminus\{p\}.\, D_q = D \land \mathit{OK}_q = \mathtt{OK})$}}{
              \label{epaxos:cmt:fast-cmt-thrifty}
              \unchanged \send $\MCommit(0, \id, \cmd[\id], D)$ \ToAll\;
            }}
            \Else{
              \send $\MAccept(0, \id, \cmd[\id], D)$ \ToAll\;
            }
          }
        \end{algorithm}
      }
  \end{minipage}

  \caption{Modifications to the commit protocol of the optimized \epp
    (\S\ref{sec:coord-opt}) that yield the thrifty version of the protocol: code
    at process $p$. Self-addressed messages are delivered immediately.}
  \label{epaxos-thrifty:commit}
\end{figure}

\begin{figure}[t!]
  \begin{minipage}{1.3\linewidth}
    \setcounter{AlgoLine}{56}
    \SetNlSty{}{}{}
    \scalebox{0.85}{
      \begin{algorithm}[H]
        \renewcommand{\;}{\\}
        \SetKwProg{Function}{}{:}{}
        \DontPrintSemicolon

        \SubAlgo{\nonl $\cdots$}{
        \SetAlgoNoEnd
        \SubAlgo{\changed{\textbf{else if} \hl{$\exists D.\, \exists R.\, |R|
              \geq |Q| - e \land (\forall q \in Q.\, \vphase_q = \preacceptP
              \Longleftrightarrow q \in R) \land (\forall q \in R.\, \vdep_q=
              D)$} \\ \nonl ~~\textbf{then}}}{ 
          \label{epaxos-thrifty:fastcond}
          {\bf let} $\Rmax$ be the largest set $R$ that satisfies the condition at line~\mbox{\ref{epaxos-thrifty:fastcond}t}\;
          \changed{\Let {\ $(c, D, \text{\hl{$D_0$}}) = (c_q, \vdep_q, \text{\hl{$\vinitDep_q$}})$ for any $q \in R$}}\;
          \label{epaxos-thrifty:let-d0}
          \changed{\send $\MValidate(b, \id, c, D, \text{\hl{$D_0$}})$ {\bf to all processes in $Q$}}\;
          \label{epaxos-thrifty:send-validate}
          \nonl\dots\;
        
        \smallskip
        \smallskip
        \setcounter{AlgoLine}{66}
        \Else{
          \changed{\send \hlc{$\mathtt{Waiting}(\id, \text{\hl{$D$}}, |\Rmax|)$} \ToAll\;}
          \label{epaxos-thrifty:sendwait}
          \SetAlgoNoLine
          \SetAlgoNoEnd
          \SubAlgo{\bf wait until}{
            \SetAlgoVlined
            \SetAlgoNoEnd
            \SubAlgo{{\bf case $\exists (\id', \_) \in \conflicts.\, \phase[\id'] = \commitP \land (\cmd[\id'] \neq \noop \land \id \notin \dep[\id'])$ do}}{
              \send $\MAccept(b, \id, \noop, \emptyset)$ \ToAll\;
            }
            \SubAlgo{{\bf case $\forall (\id', \_) \in \conflicts.\, \phase[\id'] = \commitP \land (\cmd[\id'] = \noop \lor \id \in \dep[\id'])$ do}}{
              \send $\MAccept(b, \id, c, D)$ \ToAll\;
            }
            \SubAlgo{\changed{{\bf case $\exists (\id', \_) \in \conflicts.\, (p$
                received $\MWaiting(\id', \text{\hl{$D'$}}, k')) \land k' >
                n-f-e\ \text{\hl{$\land\ \id \notin D'$}}$ do}}}{
              \label{epaxos-thrifty:break-two}
              \unchanged \send $\MAccept(b, \id, \noop, \emptyset)$ \ToAll\;
            }
            \SubAlgo{\changed{{\bf case $p$ received $\MRecoverOK(b,\id,\_,\vcmd,\vdep,\_,\vphase)$ from $q \notin Q$
                  with $\vphase = \commitP \lor \vphase = \acceptP \lor q = \coord(\id) $
                  \hspace{2cm} $\text{\hl{$\lor\ (\vphase =\preacceptP \land \vdep \neq D)$}}$ do}}}{
              \label{epaxos-thrifty:bad-phase}
              \lIf{$\vphase = \commitP$}{\send $\MCommit(b, \id, \vcmd, \vdep)$ \ToAll} %
              \lElseIf{$\vphase = \acceptP$}{\send $\MAccept(b, \id, \vcmd, \vdep)$ \ToAll} %
              \lElse{\send $\MAccept(b, \id, \noop, \emptyset)$ \ToAll} \label{epaxos-thrifty:bad}
            }
          }
        }
      }
        {\bf else} \send $\MAccept(b, \id, \noop, \emptyset)$ \ToAll
        \label{epaxos-thrifty:last-nop}
    }
    
        \smallskip
        \smallskip

        \SubAlgo{\changed{\onreceive $\MValidate(b, \id, c, D, \text{\hl{$D_0$}})$ \from $q$}}{
          \label{epaxos-thrifty:get-validate}
          \precond $\bal[\id] = b$\;
          $\assign{\cmd[\id]}{c}$\;
          $\assign{\initCmd[\id]}{c}$\;
          \changed{$\assign{\initDep[\id]}{\text{\hl{$D_0$}}}$}\;
          \label{epaxos-thrifty:set-d0}
          \nonl\dots\;
          \smallskip
        }
      \end{algorithm}
    }
  \end{minipage}
  \caption{Modifications to the recovery protocol of the optimized \epp
    (\S\ref{sec:coord-opt}) that yield the thrifty version of the protocol: code
    at process $p$. Self-addressed messages are delivered immediately.}
  \label{epaxos-thrifty:recvr}
\end{figure}

\subparagraph*{Modifications to the recovery protocol (Figure~\ref{epaxos-thrifty:recvr}).}
Since in the thrifty version the fast path can be taken with any dependency set, we also need to adjust the recovery protocol:
\begin{itemize}
\item
  In the thrifty version, a new coordinator suspects that the command $\id$ may have taken a fast path only if all processes in the recovery quorum pre-accepted $\id$ with the same dependencies $D$, and there are at least $|Q|-e$ such processes (line~\ref{epaxos-thrifty:fastcond}t).
  Since the thrifty version uses a single fast quorum, any disagreement on pre-accepted dependencies at line~\ref{epaxos-thrifty:fastcond}t indicates that the command could not have taken the fast path.
\item
  Each $\MValidate$ message now includes an additional argument to carry the initial dependency set (lines~\ref{epaxos-thrifty:let-d0}t--\ref{epaxos-thrifty:send-validate}t, \ref{epaxos-thrifty:get-validate}t, \ref{epaxos-thrifty:set-d0}t): in the thrifty version this set may be different from the set $D$ used in recovery (line~\ref{epaxos-thrifty:let-d0}t).
\item
  In the thrifty version we impose an extra constraint on when the new coordinator can abort a command $\id$ if it receives a $\MWaiting$ message for a potentially invalidating command (line~\ref{epaxos:recvr:break-cond-wait-r} in Figure~\ref{epaxos:recvr}).
  This is because the proof of Lemma~\ref{prop:wait-opt} (\S\ref{sec:coord-opt}) assumes that the fast path can only be taken with the initial dependencies, which is no longer the case in the thrifty version.
  To deal with this, $\MWaiting$ messages now also carry the suspected fast path dependencies of the command being recovered (line~\ref{epaxos-thrifty:sendwait}t).
  The new coordinator of a command $\id$ aborts this command only if it does not belong to the fast path dependencies of the potentially invalidating command (line~\ref{epaxos-thrifty:break-two}t).
\item
  Finally, to ensure liveness, we add an extra case in which a new coordinator recovering $\id$ with a dependency set $D$ can abort the recovery when it receives an additional $\MRecoverOK$ message from outside of the original recovery quorum (line~\ref{epaxos:recvr:break-cond-coord} in Figure~\ref{epaxos:recvr}).
  Namely, the new coordinator can now abort the recovery of $\id$ if it receives $\MRecoverOK(\_, \id, \_, \_, \vdep, \_, \preacceptP)$ with $\vdep \neq D$ (line~\ref{epaxos-thrifty:bad-phase}t): in this case $\id$ could not have taken the fast path with $D$, since some process in the fast quorum pre-accepted a different dependency set.
\end{itemize}

\begin{theorem}
  \label{theo:upper-thrifty}
  The thrifty version of {\em \epp} protocol implements an $f$-resilient $0$-fast SMR protocol, provided $n \ge \max\{2e+f-1, 2f+1\}$.
\end{theorem}

Note that, even though the thrifty protocol is only $0$-fast, the parameter $e$ still matters, as it determines the size of fast quorums.

\section{Starting Recovery}
\label{app:start-recovery}

\begin{figure}
  \begin{minipage}{\linewidth}
    \scalebox{0.9}{
      \begin{algorithm}[H]
        \setcounter{AlgoLine}{87}
        \renewcommand{\;}{\\}
        \SetKwProg{Function}{function}{:}{}
        \DontPrintSemicolon

        \SubAlgo{{\bf periodically (with increasing delays)}}{
          \label{epaxos:rec-start:start}

          \uIf{$\phase[\id] \neq \commitP$}{
            \label{epaxos:rec-start:check-commit}
            \send $\MTryRecover(\id)$ \To{} $\Omega[\id]$\;
            \label{epaxos:rec-start:send-try}
          }
          \ElseIf{$\cmd[\id] = \noop \land p = \coord(\id)$}{
            \label{epaxos:rec-start:check-nop}
            $\submit(\initCmd[\id])$\;
            \label{epaxos:rec-start:submit}
          }
        }

        \smallskip

        \SubAlgo{\onreceive $\MTryRecover(\id)$}{
          \lIf{$p = \Omega[\id]$}{$\recover(\id)$}
          \label{epaxos:rec-start:rec}
        }

      \end{algorithm}
    }
  \end{minipage}
  \caption{\epp: recovery policy of a command $\id$ at a replica $p$.}
  \label{epaxos:rec-start}
\end{figure}

In Figure~\ref{epaxos:rec-start}, we present a protocol that a replica $p$ uses to invoke the recovery for a command $\id$.
The mechanism relies on per-command leader detectors: each command $\id$ has an associated leader detector $\Omega[\id]$, which eventually outputs the same correct process at all replicas~\cite{omega}.
Replica $p$ periodically checks whether $\id$ is committed.
If not, it sends a $\MTryRecover$ message to $\Omega[\id]$ (line~\ref{epaxos:rec-start:send-try}), asking this to become the coordinator of $\id$.
Upon receiving $\MTryRecover(\id)$, a replica checks whether it considers itself the coordinator of $\id$ and, if so, starts the recovery (line~\ref{epaxos:rec-start:rec}).

Recall that in \epp, a command $\id$ submitted with a payload $c$ may sometimes be committed as a $\noop$ as a result of recovery.
To ensure that every command submitted by a correct process is eventually executed, a process resubmits its original payload if it detects that $\id$ was replaced by a $\noop$ (lines~\ref{epaxos:rec-start:check-nop}--\ref{epaxos:rec-start:submit}).

\section{Correctness of \epp}
\label{sec:proof}

\subsection{Proof of Invariants~\ref{inv:agr} (Agreement) and~\ref{inv:cnst} (Visibility)}

This proof covers all the variants of the protocol we presented: the baseline protocol (\S\ref{sec:base}), its optimized version (\S\ref{sec:coord-opt}), and its thrifty version (\S\ref{sec:thrifty}).
In the following we make it clear which parts of the proof are relevant for which version of the protocol.
The proof relies on the low-level invariants listed in Figure~\ref{fig:invs}.
We omit the proofs of Invariants~\ref{inv:pre-acc-once}--\ref{inv:cmd-or-noop}, as they easily follow from the structure of the protocol.

\begin{figure*}[t!]
  \begin{enumerate}[label={\arabic*.}, ref={\arabic*}]
    \setcounter{enumi}{2}
  \item
    Each command is pre-accepted at most once by each process.
    \label{inv:pre-acc-once}
  \item
    Once a process moves a command to the $\acceptP$ or $\commitP$ phase, it remains in one of these phases for the entire execution.
    \label{inv:incr-phases}
  \item
    At every process and for every $\id$, if $\cbal[\id] > 0$, then $\phase[\id] \in \{\acceptP, \commitP\}$.
    \label{inv:bmax-acc}
  \item
    Assume that at time $t$ a process $p$ pre-accepts or validates a command $\id$ with a payload $c$ (i.e., $p$ executes line~\ref{epaxos:cmt:set-c} or line~\ref{epaxos:recvr:set-c}).
    At any moment after $t$, the process $p$ has $\cmd[\id] = c$ or $\cmd[\id] = \noop$, with the latter possible only if $\phase[\id] \neq \preacceptP$.
    \label{inv:cmd-or-noop}
  \item
    Assume that a process sends $\MCommit(0, \id, c, D)$ on the fast path (line~\ref{epaxos:cmt:send-cmt}).
    If a process executes line~\ref{epaxos:recvr:fast-cond} (or line~\ref{epaxos-thrifty:fastcond}t), then the condition in this line evaluates to true, and only for the dependency set $D$.
    \label{inv:eq-dep}
  \item
    If a process sends $\MPreAcceptOK(\id,D[,\_])$ and $\MPreAcceptOK(\id',D'[,\_])$ for
    two commands with conflicting initial payloads, then $\id \in D'$ or $\id' \in D$.
    \label{inv:two-preaccs}
  \item
    Assume that a quorum of processes has received $\MAccept(b, \id, c, D)$ and replied with $\MAcceptOK(b, \id)$.
    \begin{enumerate}
    \item For any $\MAccept(b', \id, c', D')$ sent, if $b' \geq b$, then $c' = c$ and $D' = D$;
      \label{inv:agr-acc}
    \item For any $\MCommit(b', \id, c', D')$ sent, if $b' \geq b$, then $c' = c$ and $D' = D$.
      \label{inv:agr-cmt}
    \end{enumerate}
    \label{inv:agr-aux}
  \item
    Assume that a process sends $\MCommit(0, \id, c, D)$ at line~\ref{epaxos:cmt:send-cmt}.
    \begin{enumerate}
    \item For any $\MAccept(\_, \id, c', D')$ sent, $c' = c$ and $D' = D$;
      \label{inv:cmt-acc}
    \item For any $\MCommit(\_, \id, c', D')$ sent, $c' = c$ and $D' = D$.
      \label{inv:cmt-cmt}
    \end{enumerate}
    \label{inv:cmt-aux}
  \end{enumerate}
  \caption{Additional invariants of \epp.
    All invariants hold for all three versions of \epp.
  }
  \label{fig:invs}
\end{figure*}

\begin{proof}[Proof of Invariant~\ref{inv:eq-dep} (all protocol versions)]
  Assume that $\id$ takes the fast path, i.e., a process sends $\MCommit(0, \id, \_, D)$ from line~\ref{epaxos:cmt:send-cmt}.
  By line~\ref{epaxos:cmt:fast-cmt} (line~\ref{epaxos:cmt:fast-cmt-thrifty}t in the thrifty version), there exists a quorum $Q_f$ of size $\geq n-e$ whose processes pre-accepted $\id$.
  Consider a process that reaches line~\ref{epaxos:recvr:fast-cond} (line~\ref{epaxos-thrifty:fastcond}t in the thrifty version) after receiving $\MRecoverOK(\_, \id, \_, \_, \vdep_p, \_, \vphase_p)$ messages from each process $p$ in a quorum $Q$.
  Since the conditions at lines~\ref{epaxos:recvr:is-cmt}--\ref{epaxos:recvr:coord-in} do not hold, for every $p \in Q \cap Q_f$, $\vphase_p = \preacceptP$
  Since $|Q_f| \geq n - e$, we have $|Q \cap Q_f| \geq |Q| - e$.
  Thus, the first conjunct of line~\ref{epaxos:recvr:fast-cond} (line~\ref{epaxos-thrifty:fastcond}t in the thrifty version) is satisfied for $R = Q \cap Q_f$.

  We now consider the other conjuncts.
  For the non-thrifty versions (line~\ref{epaxos:recvr:fast-cond}), the command $\id$ can only be pre-accepted with the initial dependency set. Thus, the condition at line~\ref{epaxos:recvr:fast-cond} holds for $R = Q \cap Q_f$.

  For the thrifty version (line~\ref{epaxos-thrifty:fastcond}t), observe that any process that ever pre-accepted $\id$ must belong to $Q_f$ (line~\ref{epaxos:cmt:send-preacc-thrifty}).
  Moreover, $\coord(\id)$ is the only process that could have pre-accepted $\id$ with a dependency set different from $D$.
  However, by line~\ref{epaxos:recvr:coord-in}, $\coord(\id) \notin Q$, and therefore $\coord(\id) \notin R$.
  Hence, every process in $Q$ that pre-accepted $\id$ did so with the dependency set $D$, and all processes in $Q \cap Q_f$ pre-accepted $\id$.
  Thus, the condition at line~\ref{epaxos-thrifty:fastcond}t holds for the above-chosen $D$ and only for this $D$.
\end{proof}

\begin{proof}[Proof of Invariant~\ref{inv:two-preaccs} (all protocol versions)]
  Assume that a process $p$ pre-accepts a command $\id$ with an initial payload $c$ and a dependency set $D$, as well as a command $\id'$ with a conflicting initial payload $c'$ and a dependency set $D'$.
  Without loss of generality, assume that the former happens before the latter.
  By the time $p$ sends $\MPreAcceptOK(\id',D'[,\_])$ it has $\cmd[\id] = c$ or $\cmd[\id] = \noop$ (Invariant~\ref{inv:cmd-or-noop}).
  Then, since $\inconflict{c}{c'}$ and $\inconflict{\noop}{c'}$, by line~\ref{epaxos:cmt:preacc-deps}, $\id \in D'$, as required.
\end{proof}

We have already proved Lemma~\ref{prop:wait} in \S\ref{sec:recover} and Lemma~\ref{prop:red-one} for the optimized protocol in \S\ref{sec:coord-opt}.
We now prove Lemma~\ref{prop:red-one} for the thrifty protocol.

\begin{proof}[Proof of Lemma~\ref{prop:red-one} (thrifty protocol)]
  By contradiction, assume that $\id$ takes the fast path with a fast quorum $Q_f$.
  Hence, $|Q_f| \geq n-e$, all processes in $Q_f$ except $\coord(\id)$ pre-accepted $\id$ with a dependency set $D$ (line~\ref{epaxos:cmt:fast-cmt-thrifty}t), and $\coord(\id)$ pre-accepted $\id$ with a set $D_0 \subseteq D$ (because it handles its own $\MPreAccept$ message immediately).
  By our assumption, $|\Rmax| = |Q|-e$ and for some $\id'$ and $\vphase'$ we have $(\id', \vphase') \in \conflicts$, $\coord(\id') \notin Q$, and $\vphase' \neq \commitP$.
  Before executing line~\ref{epaxos:recvr:conf-cmt}, a recovering process executes line~\ref{epaxos-thrifty:fastcond}t, and satisfies the condition at this line for $D$ (Invariant~\ref{inv:eq-dep}).
  Hence, $\id' \notin D$ (line~\ref{epaxos:recvr:let-conflicts}).
  Let $D'$ be the initial dependency set of $\id'$.
  Notice that by lines~\ref{epaxos:cmt:set-d0} and \ref{epaxos-thrifty:set-d0}t, whenever a process has $\initCmd[\id'] \neq \bot$, it also has $\initDep[\id'] = D'$.
  Then by line~\ref{epaxos:recvr:let-conflicts}, the process whose $\MValidateOK$ message resulted in
  $(\id', \vphase')$ being included into $I$ at line~\ref{epaxos:recvr:rec-let-conflicts} must have had $\id \notin \initDep[\id'] = D'$.
  Moreover, by the same line, $\id$ and $\id'$ have conflicting initial payloads.
  Since $|\Rmax| = |Q| - e$, exactly $|Q| - e$ processes in $Q$ pre-accepted $\id$ with dependency set $D$, while the remaining $e$ processes did not pre-accept $\id$ at all (line~\ref{epaxos-thrifty:fastcond}t).
  Since $|Q_f| \geq n - e$ processes pre-accepted $\id$, all processes that are not in $Q$ must belong to $Q_f$.
  Then since $\coord(\id') \notin Q$, we must have $\coord(\id') \in Q_f$.
  Recall that all the processes in $Q_f$ pre-accepted $\id$ with either a dependency set $D$ or $D_0 \subseteq D$ (line~\ref{epaxos:cmt:fast-cmt-thrifty}t).
  In particular, this is the case for $\coord(\id')$.
  Moreover, $\coord(\id')$ also pre-accepted $\id'$ with $D'$.
  Since the initial payloads of $\id$ and $\id'$ conflict, by Invariant~\ref{inv:two-preaccs}, either $\id' \in D$ or $\id \in D'$: a contradiction.
\end{proof}

\begin{lemma}[thrifty protocol, couterpart of Lemma~\ref{prop:wait-opt}]
  \label{prop:wait-opt-plus}
  Assume that $n \geq 2e + f - 1$ and that $\mathtt{Waiting}(\id, D, k)$ and $\mathtt{Waiting}(\id', D', k')$ have been sent for commands with conflicting initial payloads such that $\id \notin D'$ and $\id' \notin D$.
  If $k' > n - f - e$, then $\id$ did not take the fast path.
\end{lemma}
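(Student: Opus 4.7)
My plan is to reuse the skeleton of the proof of Lemma~\ref{prop:wait-opt}, and then to handle a new branch that does not arise there: in the thrifty protocol, $\coord(\id')$ is no longer forced to pre-accept $\id'$ with the same set $D'$ reported by the processes in $R' \subseteq Q'$. I would therefore proceed by contradiction, assuming that $\id$ took the fast path through some quorum $Q_f$ with $|Q_f| \ge n-e$.

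The setup and counting step would closely mirror the non-thrifty case. By Invariant~\ref{inv:eq-dep} applied to the line~\ref{epaxos-thrifty:fastcond}t check that emitted $\MWaiting(\id, D, k)$, the argument $D$ coincides with the fast-path dependency set of $\id$; hence every $q \in Q_f \setminus \{\coord(\id)\}$ sent $\MPreAcceptOK(\id, D, \cdot)$ while $\coord(\id)$ sent $\MPreAcceptOK(\id, D_c, \cdot)$ for some $D_c \subseteq D$. From $\MWaiting(\id', D', k')$ with $k' > n-f-e$ I would extract a recovery quorum $Q'$, a subset $R' \subseteq Q'$ with $|R'| \ge n-f-e+1$ whose members sent $\MPreAcceptOK(\id', D', \cdot)$, and the fact that $\coord(\id') \notin Q'$ by line~\ref{epaxos:recvr:coord-in}, so $\coord(\id') \notin R'$. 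Since $\coord(\id')$ did pre-accept $\id'$ through its self-delivered $\MPreAccept$, inclusion--exclusion under $n \ge 2e + f - 1$ yields
\[
  |Q_f \cap (R' \cup \{\coord(\id')\})| \ge (n-e) + (n-f-e+2) - n \ge 1.
\]
Fixing a process $p$ in this intersection, I would apply Invariant~\ref{inv:two-preaccs} to its two $\MPreAcceptOK$ messages, whose initial payloads conflict, obtaining $\id' \in D^{\id}_p$ or $\id \in D^{\id'}_p$. The branch $p \in R'$ is immediate, as $D^{\id}_p \subseteq D$ and $D^{\id'}_p = D'$ clash with the hypotheses $\id' \notin D$ and $\id \notin D'$.

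The main obstacle is the remaining branch $p = \coord(\id')$, where $D^{\id'}_p$ may differ from $D'$. I would dispose of it using the atomicity of self-delivery declared in the caption of Figure~\ref{epaxos:cmt}: between $\submit(c')$ and the self-delivered $\MPreAccept$ handler at $\coord(\id')$ no other event takes place, so the assignment at line~\ref{epaxos:cmt:preacc-deps} collapses to $D^{\id'}_p = D'_0$, the initial dependency set of $\id'$. Invariant~\ref{inv:two-preaccs} then yields either $\id' \in D^{\id}_p \subseteq D$, contradicting $\id' \notin D$, or $\id \in D'_0$. In the latter case the same line~\ref{epaxos:cmt:preacc-deps} applied at each process in $R'$ forces $\dep[\id'] \supseteq D'_0$ and hence $D' \supseteq D'_0 \ni \id$, contradicting $\id \notin D'$. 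The subtlety absent from the non-thrifty proof is precisely this propagation of $\id$ from the coordinator's local initial set into the fast-path set carried by $\MWaiting$, which replaces the direct alignment between $\coord(\id')$'s pre-accept and $D'$ that held in the baseline setting.
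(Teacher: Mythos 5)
Your proof is correct and follows essentially the same route as the paper's: the same contradiction setup, the same counting bound $|Q_f \cap (R' \cup \{\coord(\id')\})| \ge n-2e-f+2 \ge 1$, and the same appeal to Invariants~\ref{inv:eq-dep} and~\ref{inv:two-preaccs}. The only difference is cosmetic: you spell out explicitly why the coordinator's pre-accepted set for $\id'$ equals $D'_0 \subseteq D'$, a fact the paper simply asserts in passing.
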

\begin{proof}
  By contradiction, assume that $\id$ takes the fast path with a fast quorum $Q_f$.
  Hence, $|Q_f| \geq n-e$, all processes in $Q_f$ except $\coord(\id)$ pre-accepted $\id$ with a dependency set $\hat{D}$ (line~\ref{epaxos:cmt:fast-cmt-thrifty}t), and $\coord(\id)$ pre-accepted $\id$ with a set $D_0 \subseteq \hat{D}$ (because it handles its own $\MPreAccept$ message immediately).
  By Invariant~\ref{inv:eq-dep}, the condition at line~\ref{epaxos-thrifty:fastcond}t is satisfied for $\hat{D}$.
  Since $\mathtt{Waiting}(\id, D, k)$ has been sent, it must be the case that $\hat{D} = D$.
  Since $\mathtt{Waiting}(\id', D', k')$ has been sent, there exists a recovery quorum $Q'$ and a set $R' \subseteq Q'$ such that $|R'| = k' > n-f-e$ and all processes in $R'$ pre-accepted $\id'$ with dependencies $D'$ (line~\ref{epaxos-thrifty:fastcond}t).
  Moreover, $\coord(\id')$ pre-accepted $\id'$ with a set $D_0' \subseteq D'$, and line~\ref{epaxos:recvr:coord-in} ensures that $\coord(\id') \notin R'$.
  Since $n \geq 2e+f-1$, the intersection of $Q_f$ and $R' \cup \{\coord(\id')\}$ has a cardinality
\[
  \geq (n-e) + (n-f-e+2) - n = n-2e-f+2 \geq 1.
\]
  Thus, some process pre-accepted $\id$ with either $D$ or $D_0 \subseteq D$, and this process also pre-accepted $\id'$ with either $D'$ or $D'_0 \subseteq D'$.
  But then by Invariant~\ref{inv:two-preaccs}, either $\id' \in D$ or $\id \in D'$, contradicting our assumption.
\end{proof}

\begin{proof}[Proof of Invariant~\ref{inv:agr-acc} (all protocol versions)]
  Suppose a quorum of processes $Q_0$ has received $\MAccept(b, \id, c, D)$ and replied with $\MAcceptOK(b, \id)$.
  We prove by induction on $b'$ that for any $\MAccept(b', \id, c', D')$ sent, if $b' \geq b$, then $c' = c$ and $D' = D$;

  \emph{Base case ($b' = b$).}
  This is immediate because $\MAccept$ messages are sent at most once per ballot, and each ballot is owned by a single process.

  \emph{Induction step ($b'>b$).}
  We only consider the handler at line~\ref{epaxos:recvr:recoverok}, since it is the only handler that causes processes to send $\MAccept$ messages for ballots greater than $0$.
  Assume that a process $p$ executes this handler upon receiving a $\MRecoverOK(b', \id, \vcbal_q, c_q, \vdep_q, \vinitDep_q, \vphase_q)$ message from every process $q$ in a quorum $Q$.
  Let $\bmax = {\tt max}\{\vcbal_q \mid q \in Q\}$ be the ballot computed at line~\ref{epaxos:recvr:bmax}.
  Since $Q_0 \cap Q \neq \emptyset$, some process $q$ sends an $\MAcceptOK(b, \id)$ message and a $\MRecoverOK$ message at ballots $b$ and $b'$, respectively.
  When $q$ sends $\MRecoverOK$ it sets $\bal[\id]$ to $b'$.
  Since $b' > b$, by the precondition at line~\ref{epaxos:cmt:accept-precond}, $q$ must send $\MAcceptOK(b, \id)$ before sending $\MRecoverOK$.
  Therefore, $b \leq \bmax < b'$.

  Assume first that $\bmax = b$.
  After sending $\MAcceptOK$ at ballot $b$, process $q$ has $\phase[id] \in \{\acceptP, \commitP\}$ forever (Invariant~\ref{inv:incr-phases}).
  Hence, $q$ satisfies either the condition at line~\ref{epaxos:recvr:is-cmt} or the one at line~\ref{epaxos:recvr:is-acc}.
  Thus, process $p$ sends $\MAccept(b', \id, c', D')$ from line~\ref{epaxos:recvr:send-acc}.
  Consider any $q' \in Q$ satisfying the condition at line~\ref{epaxos:recvr:is-acc}.
  Since $\vcbal_{q'} = \bmax = b$, process $q'$ stores the payload and the dependencies that it received in the $\MAccept(b, \id, c, D)$ message from the coordinator of $\id$ at $b$.
  Hence, $c' = c$ and $D' = D$, as required.

  Assume now that $\bmax > b$; then $\bmax > 0$.
  Furthermore, by Invariant~\ref{inv:bmax-acc}, at any process, any command with a ballot greater than $0$ is either accepted or committed.
  This means that process $p$ sends $\MAccept(b', \id, c', D')$ from line~\ref{epaxos:recvr:send-acc}.
  By line~\ref{epaxos:recvr:is-acc}, there exists a process $q' \in Q$ with $\vphase_{q'} = \acceptP$, $\vcbal_{q'} = \bmax$, $c_{q'} = c'$ and $\vdep_{q'} = D'$.
  The process $q'$ thus received $\MAccept(\bmax, \id, c', D')$, so by induction hypothesis, $c' = c$ and $D' = D$, as required.
\end{proof}

\begin{proof}[Proof of Invariant~\ref{inv:agr-cmt} (all protocol versions)]
  Suppose a quorum of processes has received $\MAccept(b, \id, c, D)$ and replied with $\MAcceptOK(b, \id)$.
  We prove that for any $\MCommit(b', \id, c', D')$ sent, if $b' \geq b$, then $c' = c$ and $D' = D$.
  The proof is by induction on $b'$.

  Let $p$ be the process that sends $\MCommit(b', \id, c', D')$.
  Then this must happen at one of the following lines:~\ref{epaxos:cmt:send-cmt},~\ref{epaxos:cmt:sent-cmt-slow},~\ref{epaxos:recvr:is-cmt}, or~\ref{epaxos:recvr:coord-commit}.

  \emph{Base case ($b' = b$).}
  In this case $p$ is the process that sends the $\MAccept(b,\id,c,D)$ message.
  We make a case split on the line at which $p$ sends the $\MCommit$ message:
  \begin{itemize}
  \item \emph{Line~\ref{epaxos:cmt:send-cmt}.} By the precondition in line~\ref{epaxos:cmt:preaccok-precond}, $b'=b=0$.
    But then $p$ must execute line~\ref{epaxos:cmt:sent-acc} and cannot execute line~\ref{epaxos:cmt:send-cmt}: a contradiction.
  \item \emph{Line~\ref{epaxos:cmt:sent-cmt-slow}.}
  Because $p$ immediately receives any self-addressed $\MAccept$ message, this process must have $\cmd[\id]=c$ and $\dep[\id]=D$ when it executes line~\ref{epaxos:cmt:sent-cmt-slow}, as required.
  \item \emph{Line~\ref{epaxos:recvr:is-cmt} or~\ref{epaxos:recvr:coord-commit}.}
    In this case $p$ never sends the $\MAccept(b,\id,c,D)$ during the execution: a contradiction.
  \end{itemize}

  \emph{Induction step ($b'>b$).}
  We again make a case split on the line at which $p$ sends the $\MCommit$ message:
  \begin{itemize}
  \item \emph{Line~\ref{epaxos:cmt:send-cmt}.}
  Due to the precondition in line~\ref{epaxos:cmt:preaccok-precond}, $b'=b=0$: a contradiction.
  \item \emph{Line~\ref{epaxos:cmt:sent-cmt-slow}.}
  As the precondition in line~\ref{epaxos:cmt:receive-acceptok} holds, $p$ received a quorum of $\MAcceptOK(b',\id)$ messages.
  It follows that $p$ had previously sent an $\MAccept(b',\id,c',D')$ message.
  Then by Invariant~\ref{inv:agr-acc}, $c'=c$ and $D'=D$, as required.
  \item \emph{Line~\ref{epaxos:recvr:is-cmt} or~\ref{epaxos:recvr:coord-commit}.}
    The required follows immediately from the induction hypothesis.
\end{itemize}
\end{proof}

\begin{proposition}[all protocol versions]
  For every process $p$ and every command identifier $\id$, at any time when $p$ has $\cmd[\id] = c \neq \noop$, $\phase[\id] = \acceptP$ and $\dep[\id] = D$, one of the following two properties holds:
  \begin{enumerate}
  \item there exists a quorum $Q$ such that every process $q \in Q$ previously sent a $\MPreAcceptOK(\id,D_q[,\_])$ message with $\bigcup_{q \in Q} D_q = D$; or
    \label{prop:2dep-from-q}
  \item a process previously sent an $\MAccept(\_, \id, c, D)$ message from line~\ref{epaxos:recvr:send-acc-good} or line~\ref{epaxos:recvr:nop-2}.
    \label{prop:2dep-from-rec}
  \end{enumerate}
  \label{prop:2dep-from}
\end{proposition}
\begin{proof}
  By induction on the length of the execution.
  It is easy to see that the proposition holds initially.
  For the induction step, suppose a process $p$ either:
  \emph{(i)} sets $\phase[id]$ to $\acceptP$ and $\cmd[id]$ to $c \neq \noop$; or
  \emph{(ii)} sets $\cmd[\id]$ to $c \neq \noop$ while already having $\phase[id] = \acceptP$ and does not change $\phase[id]$.
  We examine each of these cases in turn.

  Case {\em (i)} may only happen at line~\ref{epaxos:cmt:set-acceptp}, after $p$ receives an $\MAccept(\_,\id,c,D)$ message.
  Since $c \neq \noop$, this message is sent by some process $p'$ at one of the following lines: \ref{epaxos:cmt:sent-acc}, \ref{epaxos:recvr:send-acc}, \ref{epaxos:recvr:send-acc-good}, \ref{epaxos:recvr:nop-2}, or \ref{epaxos:recvr:coord-acc}.
  We consider each of these cases separately:
  \begin{itemize}
  \item \emph{Line~\ref{epaxos:cmt:sent-acc}.} In this case $p'$ received a $\MPreAcceptOK(\id,D_q[,\_])$ message from some quorum $Q$ at line~\ref{epaxos:cmt:rec-preaccok}, with $\bigcup_{q \in Q} D_q = D$, as required.
  \item \emph{Line~\ref{epaxos:recvr:send-acc-good} or~\ref{epaxos:recvr:nop-2}.} The required follows immediately.
  \item \emph{Line~\ref{epaxos:recvr:send-acc} or~\ref{epaxos:recvr:coord-acc}.} The required follows by the induction hypothesis.
\end{itemize}

  Next, let us examine case {\em (ii)}, which may only happen at line~\ref{epaxos:recvr:set-c}.
  Since $p$ already accepted $\id$ when executing this line, it must have reported $\id$ as accepted in its last $\MRecoverOK$ response.
  Then the coordinator of $\id$ evaluated the condition in line~\ref{epaxos:recvr:is-acc} to true and never sends a $\MValidate$ message that would allow $p$ to execute line~\ref{epaxos:recvr:set-c}.
  Hence, case {\em (ii)} is impossible.
\end{proof}

\begin{proposition}[all protocol versions]
  For every process $p$ and every command identifier $\id$, at any time when $p$ has $\cmd[\id] = c \neq \noop$, $\phase[\id] = \commitP$ and $\dep[\id] = D$, one of the following two properties holds:
  \begin{enumerate}
  \item there exists a quorum $Q$ such that every process $q \in Q$ previously sent a $\MPreAcceptOK(\id,D_q[,\_])$ message with $\bigcup_{q \in Q} D_q = D$; or
    \label{prop:dep-from-q}
  \item a process previously sent an $\MAccept(\_, \id, c, D)$ message from line~\ref{epaxos:recvr:send-acc-good} or line~\ref{epaxos:recvr:nop-2}.
    \label{prop:dep-from-rec}
  \end{enumerate}
  \label{prop:dep-from}
\end{proposition}
\begin{proof}
  By induction on the length of the execution.
  It is easy to see that the proposition holds initially.
  For the induction step, suppose a process $p$ either:
  \emph{(i)} sets $\phase[id]$ to $\commitP$ and $\cmd[id]$ to $c \neq \noop$; or
  \emph{(ii)} sets $\cmd[\id]$ to $c \neq \noop$ while already having $\phase[id] = \commitP$ and does not change $\phase[id]$.
  We examine each of these cases in turn.

  Case {\em (i)} may only happen at line~\ref{epaxos:cmt:set-phase}, after $p$ receives a $\MCommit(\_,\id,c,D)$ message.
  Since $c \neq \noop$, this message is sent by some process $p'$ at one of the following lines: \ref{epaxos:cmt:send-cmt},~\ref{epaxos:cmt:sent-cmt-slow},~\ref{epaxos:recvr:send-cmt}, or~\ref{epaxos:recvr:coord-commit}.
  We consider each of these cases separately:
  \begin{itemize}
  \item \emph{Line~\ref{epaxos:cmt:sent-acc}.} In this case $p'$ received a $\MPreAcceptOK(\id,D_q[,\_])$ message from some quorum $Q$ at line~\ref{epaxos:cmt:rec-preaccok}, with $\bigcup_{q \in Q} D_q = D$, as required.
  \item \emph{Line~\ref{epaxos:cmt:sent-cmt-slow}.} In this case $p'$ received a quorum of $\MAcceptOK$ messages, which were sent in reply to a $\MAccept(\_,\id,c,D)$ message that $p'$ had previously sent. The required then follows from Proposition~\ref{prop:2dep-from}.
  \item \emph{Line~\ref{epaxos:recvr:send-cmt} or~\ref{epaxos:recvr:coord-commit}.} The required follows by the induction hypothesis.
\end{itemize}

  Next, let us examine case {\em (ii)}, which may only happen at line~\ref{epaxos:recvr:set-c}.
  Since $p$ already committed $\id$ when executing this line, it must have reported $\id$ as committed in its last $\MRecoverOK$ response.
  Then the coordinator of $\id$ evaluated the condition in line~\ref{epaxos:recvr:is-cmt} to true and never sends a $\MValidate$ message that would allow $p$ to execute line~\ref{epaxos:recvr:set-c}.
  Hence, case {\em (ii)} is impossible.
\end{proof}

\begin{proof}[Proof of Invariant~\ref{inv:cmt-aux} (all protocol versions)]
  Assume that a process sends $\MCommit(0, \id, c, D)$ at line~\ref{epaxos:cmt:send-cmt}, i.e., $\id$ takes the fast path with payload $c$ and dependency set $D$.
  Let $Q_f$ be the fast quorum used for this.
  We prove by induction on the length of the execution that:
  \begin{enumerate}[a.]
  \item For any $\MAccept(b, \id, c', D')$ sent by any process $p$, $c' = c$ and $D' = D$; and
  \item For any $\MCommit(b, \id, c', D')$ sent by any process $p$, $c' = c$ and $D' = D$.
  \end{enumerate}
  These properties trivially hold at the beginning of the execution.
  For the induction step, we consider the two cases separately.

  \emph{Case (a).} Observe that, if $b=0$, then the $\MAccept(b, \id, c', D')$ message must be sent at line~\ref{epaxos:cmt:sent-acc}, i.e., on the slow path, which is impossible as $\id$ takes the fast path.

  Now, consider the handler at line~\ref{epaxos:recvr:recoverok}, which is the only handler that results in process $p$ sending the $\MAccept$ message with $b>0$.
  Let $Q$ be a quorum of $\MRecoverOK(b, \id, \vcbal_q, c_q, \vdep_q, \vinitDep_q, \vphase_q)$ messages with which $p$ executes the handler at line~\ref{epaxos:recvr:recoverok}, and let $\bmax = {\tt max}\{\vcbal_q \mid q \in Q\}$ be the ballot the process computes at line~\ref{epaxos:recvr:bmax}.
  Process $p$ sends $\MAccept(b, \id, c', D')$ from one of the following lines: \ref{epaxos:recvr:send-acc}, \ref{epaxos:recvr:send-acc-coord}, \ref{epaxos:recvr:send-acc-good}, \ref{epaxos:recvr:conf-cmt-acc}, \ref{epaxos:recvr:nop-1}, \ref{epaxos:recvr:nop-2}, \ref{epaxos:recvr:break-cond-wait-r-end}, \ref{epaxos:recvr:coord-acc}, \ref{epaxos:recvr:break-cond-coord-end}, or \ref{epaxos:recvr:last-acc}.
  We now make a case split on these options.
  \begin{itemize}
  \item
    {\em Lines~\ref{epaxos:recvr:send-acc} (all protocol versions), \ref{epaxos:recvr:coord-acc} (optimized and thrifty protocols).}
    In all these cases process $p$ sends the $\MAccept$ message with the payload $c'$ and dependency set $D'$ received from a process $q$ that previously received $\MAccept(b', \id, c', D')$ message.
    But then by the induction hypothesis for case (a), $c = c'$ and $D = D'$, as required.
  \item
    {\em Line~\ref{epaxos:recvr:send-acc-coord} (optimized and thrifty protocols).}
    Since the conditions at lines~\ref{epaxos:recvr:is-cmt} and \ref{epaxos:recvr:is-acc} do not hold, by Invariant~\ref{inv:bmax-acc} we have $\bmax = 0$.
    Hence, the coordinator $\coord(\id) \in Q$ sends a message of the form $\MRecoverOK(b, \id, 0, c, \_, \_, \vphase)$.
    We know that it also sends $\MCommit(0, \id, c, D)$ at line~\ref{epaxos:cmt:send-cmt}.
    Moreover, by lines~\ref{epaxos:cmt:preaccok-precond} and \ref{epaxos:recvr:rec-precond}--\ref{epaxos:recvr:set-bal}, the coordinator sends $\MCommit$ before $\MRecoverOK$.
    Since all self-addressed messages are delivered immediately, it follows that $\vphase = \commitP$.
    However, because the condition at line~\ref{epaxos:recvr:is-cmt} does not hold, it must be that $\vphase \neq \commitP$: a contradiction.
  \item
    {\em Line~\ref{epaxos:recvr:break-cond-coord-end} (optimized protocol).}
    In this situation, the case at line~\ref{epaxos:recvr:break-cond-coord} is triggered.
    Since the conditions at lines~\ref{epaxos:recvr:coord-commit} and \ref{epaxos:recvr:coord-acc} do not hold, by Invariant~\ref{inv:bmax-acc}, process $p$ receives a $\MRecoverOK(b, \id, 0, c, \_, \_, \vphase)$ message from the coordinator $\coord(\id)$.
    However, before sending $\MRecoverOK$ the coordinator also sends $\MCommit(0, \id, c, D)$ at line~\ref{epaxos:cmt:send-cmt}.
    Since all self-addressed messages are delivered immediately, it follows that $\vphase = \commitP$, contradicting the fact that the condition at line~\ref{epaxos:recvr:coord-commit} does not hold.
  \item
    {\em Line~\ref{epaxos-thrifty:bad} (thrifty protocol).}
    In this situation, the case at line~\ref{epaxos:recvr:break-cond-coord}t is triggered.
    Since the conditions at lines~\ref{epaxos:recvr:coord-commit} and \ref{epaxos:recvr:coord-acc} do not hold, process $p$ receives either a $\MRecoverOK(b, \id, 0, c, \_, \_, \vphase)$ message from the coordinator $\coord(\id)$ or an $\MRecoverOK(b, \id, 0, c, D'', \_, \preacceptP)$ message from a process $q \neq \coord(\id)$ with $D'' \neq D$.
    The proof of the former case is identical to the proof above for the optimized protocol.
    For the latter case, since $q$ has pre-accepted $\id$, it belongs to the fast quorum of $\id$ (line~\ref{epaxos:cmt:send-preacc-thrifty}).
    But then, since $q \neq \coord(\id)$ and $D'' \neq D$, by line~\ref{epaxos:cmt:fast-cmt-thrifty}t and Invariant~\ref{inv:eq-dep}, the initial coordinator cannot take the fast path: a contradiction.
  \item
    {\em Lines~\ref{epaxos:recvr:send-acc-good}, \ref{epaxos:recvr:nop-2} (all protocol versions).}
    In this case $D'$ satisfies the condition at line~\ref{epaxos:recvr:fast-cond} (or line~\ref{epaxos-thrifty:fastcond}t in the thrifty version).
    Hence, $D' = D$ (Invariant~\ref{inv:eq-dep}) and $c' = c$ (Invariant~\ref{inv:cmd-or-noop}), as required.
  \item
    {\em Line~\ref{epaxos:recvr:conf-cmt-acc}, executed due to the first disjunct of line~\ref{epaxos:recvr:conf-cmt}, and line~\ref{epaxos:recvr:nop-1} (all protocol versions).}
    Before sending the $\MAccept$ message from one of these lines, $p$ has to satisfy the condition at line~\ref{epaxos:recvr:fast-cond} (\ref{epaxos-thrifty:fastcond}t).
    By Invariant~\ref{inv:eq-dep}, this condition can only be satisfied for the dependency set $D$.
    By lines~\ref{epaxos:recvr:conf-cmt},~\ref{epaxos:recvr:break-cond-cmt} and~\ref{epaxos:recvr:let-conflicts}, some process $q$ has committed an invalidating command $\id''$ with a payload $c'' \neq \noop$ and a dependency set $D''$, such that $\inconflict{c}{c''}$, $\id \notin D''$ and  $\id'' \notin D$.

    According to Proposition~\ref{prop:dep-from}, $D''$ is computed either from a quorum of processes $Q''$ that pre-accepted $\id''$, or during the recovery of $\id''$, after executing line~\ref{epaxos:recvr:send-acc-good} or line~\ref{epaxos:recvr:nop-2}.
    The former is impossible: in this case, since $Q_f \cap Q'' \neq \emptyset$, by Invariants~\ref{inv:cmd-or-noop} and~\ref{inv:two-preaccs} we would have $\id \in D''$ or $\id'' \in D$.

    Thus, some process $q''$ has executed line~\ref{epaxos:recvr:send-acc-good} or line~\ref{epaxos:recvr:nop-2} for $\id''$ and $D''$.
    Before that, $q''$ must have satisfied the condition at line~\ref{epaxos:recvr:fast-cond} (\ref{epaxos-thrifty:fastcond}t) with $D''$ after receiving $\MRecoverOK$ messages from a quorum $Q''$.
    Since $Q_f \cap Q'' \neq \emptyset$, there exists a process $q$ that pre-accepts $\id$ and sends a $\MValidateOK(\_, \id'', I'')$ message (line~\ref{epaxos:recvr:rec-validate-ok}).
    Notice that $q$ must have pre-accepted $\id$ before sending $\MValidateOK(\_, \id'', I'')$: otherwise, by line~\ref{epaxos:recvr:set-c}, Invariant~\ref{inv:cmd-or-noop} and the fact that $\noop$ conflicts with all commands, it would have included $\id''$ in its dependency set at line~\ref{epaxos:cmt:preacc-deps}, implying $\id'' \in D$.
    We now prove that $\id \in I''$, i.e., all three conjuncts at line~\ref{epaxos:recvr:let-conflicts} hold for $\id$ at $q$ when it computes $I''$:
    \begin{enumerate}[(i),leftmargin=21pt]
    \item $\id \notin D''$;  \label{epaxos-proof:lfp:1}
    \item $\phase[\id] = \commitP \Longrightarrow {\cmd[\id] \neq \noop} \land {\inconflict{\cmd[\id]}{c''}} \land {\id'' \notin \dep[\id]}$; and \label{epaxos-proof:lfp:2}
    \item $\phase[\id] \neq \commitP \Longrightarrow {\initCmd[\id] \neq \bot} \land {\inconflict{\initCmd[\id]}{c''}} \land {\id''\!\notin\initDep[\id]}$. \label{epaxos-proof:lfp:3}
    \end{enumerate}
    We have already established conjunct~\ref{epaxos-proof:lfp:1}.
    For conjunct~\ref{epaxos-proof:lfp:2}, by the induction hypothesis for case (b), $q$ must have committed $\id$ with command $c$ and dependency set $D$.
    We know $c \neq \noop$, $\inconflict{c}{c''}$, and $\id'' \notin D$, which implies the required.
    For conjunct~\ref{epaxos-proof:lfp:3}, recall that by the time $q$ sends $\MValidateOK(\_, \id'', I'')$ it has already pre-accepted $\id$.
    Thus, by lines~\ref{epaxos:cmd:set-initcmd}--\ref{epaxos:cmt:set-d0}, at that moment $q$ has $\initCmd[\id] = c$ and $\initDep[\id] = D_0$, where $D_0$ is the initial dependency set of $\id$.
    Since initial dependencies are always included in final dependencies, and $\id'' \notin D$, it follows that $\id'' \notin D_0$ as required.

    Now that we have proved that $\id \in I''$, it follows that $q''$ did not execute line~\ref{epaxos:recvr:send-acc-good}, and therefore executed line~\ref{epaxos:recvr:nop-2}.
    However, since $\id \in I''$, by line~\ref{epaxos:recvr:all-committed}, $\id$ is committed either with a payload $\noop$ or with dependencies that include $\id''$.
    Both cases contradict the induction hypothesis for case (b) and the fact that $\id$ went on the fast path.
  \item
    {\em Line~\ref{epaxos:recvr:nop-3}, executed due to the second disjunct of line~\ref{epaxos:recvr:conf-cmt} (optimized and thrifty protocols).}
    This case is impossible as it contradicts Lemma~\ref{prop:red-one}.
  \item
    {\em Line~\ref{epaxos:recvr:break-cond-wait-r-end} (baseline protocol).}
    By line~\ref{epaxos:recvr:break-cond-wait-r}, there exists a process $q \in Q$ that sends a $\MValidateOK(\_, \id, I)$ message, and there also exists a process that sends a $\MWaiting(\id')$ message, with $\id' \in I$.
    We know that when $q$ sends $\MValidateOK$, $\id'$ is not committed at $q$: otherwise, process $p$ would have satisfied the condition at line~\ref{epaxos:recvr:conf-cmt}, and thus would never have executed line~\ref{epaxos:recvr:break-cond-wait-r-end}.
    Thus, by line~\ref{epaxos:recvr:let-conflicts}, the initial payload of $\id'$ conflicts with $c$, and $\id$ is not in the initial dependency set of $\id'$.
    By the same line, we also have $\id' \notin D$.
    But then, by Lemma~\ref{prop:wait}, $\id$ did not take the fast path, contradicting our assumption.
    Therefore, this case is impossible.
  \item
    {\em Line~\ref{epaxos:recvr:break-cond-wait-r-end} (optimized and thrifty protocols).}
    We can show that these cases are impossible by applying the same reasoning as above, replacing Lemma~\ref{prop:wait} by Lemma~\ref{prop:wait-opt} for the optimized protocol or Lemma~\ref{prop:wait-opt-plus} for the thrifty protocol.
  \item
    {\em Line~\ref{epaxos:recvr:last-acc} (all protocol versions).}
    In this case the condition at line~\ref{epaxos:recvr:fast-cond} (\ref{epaxos-thrifty:fastcond}t) does not hold for $\id$.
    Thus, this case is impossible, as it contradicts Invariant~\ref{inv:eq-dep}.
  \end{itemize}

\emph{Case (b).}
  We make a case split on the line at which $p$ sends the $\MCommit$ message:
  \begin{itemize}
  \item \emph{Line~\ref{epaxos:cmt:send-cmt}.}
    The required follows from the fact that only the initial coordinator of $\id$ may send a $\MCommit$ message at this line.
  \item \emph{Line~\ref{epaxos:cmt:sent-cmt-slow}.}
  As the precondition in line~\ref{epaxos:cmt:receive-acceptok} holds, $p$ received a quorum of $\MAcceptOK(b',\id)$ messages.
  It follows that $p$ had previously sent an $\MAccept(b',\id,c',D')$ message.
  Then the required follows from the induction hypothesis for case (a).
  \item \emph{Line~\ref{epaxos:recvr:is-cmt} or~\ref{epaxos:recvr:coord-commit}.}
    The required follows immediately from the induction hypothesis for case (b).
\end{itemize}
\end{proof}

\begin{proof}[Proof of Invariant~\ref{inv:agr} (all protocol versions)]
  Suppose that two processes $p$ and $p'$ receive $\MCommit(b,\id,c,D)$ and $\MCommit(b',\id,c',D')$ messages, respectively.
  Assume first that $\id$ takes the fast path, i.e., at some point a process sends $\MCommit(0, \id, c'', D'')$ at line~\ref{epaxos:cmt:send-cmt}.
  Then, by Invariant~\ref{inv:cmt-cmt}, we have $c = c' = c''$ and $D = D' = D''$, as required.
  Assume now that $\id$ takes the slow path, i.e., no process ever sends $\MCommit(0, \id, c'', D'')$ at line~\ref{epaxos:cmt:send-cmt}.
  In this case it is easy to show by induction on the length of the execution that some quorum of processes have received $\MAccept(\_, \id, c, D)$ and replied with $\MAcceptOK(\_, \id)$.
  Then by Invariant~\ref{inv:agr-cmt}, $c = c'$ and $D = D'$, as required.
\end{proof}

\begin{proof}[Proof of Invariant~\ref{inv:cnst}  (all protocol versions)]
  Assume that messages $\MCommit(\_, \id, c, D)$ and $\MCommit(\_, \id', c',
  D')$ are sent, where $c \neq \noop$, $c' \neq \noop$ and $\inconflict{c}{c'}$.
  We prove that $\id \in D'$ or $\id' \in D$.
  Notice that both commands satisfy one of the two conditions of Proposition~\ref{prop:dep-from}.
  \begin{enumerate}
  \item
    Assume that both commands satisfy condition \ref{prop:dep-from-q} of Proposition~\ref{prop:dep-from}.
    Then there are two quorums $Q$ and $Q'$ such that every process $q \in Q$ (respectively, $q \in Q'$) has sent a $\MPreAcceptOK(\id,D_q[,\_])$ message (respectively, $\MPreAcceptOK(\id',D'_q[,\_])$) with $\bigcup_{q \in Q} D_q = D$ (respectively, $\bigcup_{q \in Q'} D'_q = D'$).
    Since $Q \cap Q' \neq \emptyset$, there exists a process that sends both $\MPreAcceptOK(\id,D_q[,\_])$ and $\MPreAcceptOK(\id',D'_q[,\_])$.
    By Invariant~\ref{inv:two-preaccs}, $\id \in D'_q \subseteq D'$ or $\id' \in D_q \subseteq D$, as required.
  \item
    \label{proof:inv:cnst:snd}
    Assume that $\id'$ satisfies condition \ref{prop:dep-from-q} of Proposition~\ref{prop:dep-from}, while $\id$ satisfies condition \ref{prop:dep-from-rec} (the symmetric case is analogous).
    Then there exists a quorum $Q'$ such that every process $q \in Q'$ has sent a $\MPreAcceptOK(\id',D'_q[,\_])$ message with $\bigcup_{q \in Q'} D'_q = D'$.
    Furthermore, a process $p$ has sent an $\MAccept(\_, \id, c, D)$ from line~\ref{epaxos:recvr:send-acc-good} or line~\ref{epaxos:recvr:nop-2}.
    Let $Q$ be a quorum that enables the corresponding invocation of the handler at line~\ref{epaxos:recvr:recoverok} for $\id$.
    Prior to executing line~\ref{epaxos:recvr:send-acc-good} or line~\ref{epaxos:recvr:nop-2}, process $p$ has sent $\MValidate(\_, \id, c, D[,\_])$ message and has received $\MValidateOK(\_, \id, I_q)$ from every process $q \in Q$ (lines~\ref{epaxos:recvr:send-validate}--\ref{epaxos:recvr:rec-validate-ok}).
    Take a process $q \in Q \cap Q'$; then this process sends $\MPreAcceptOK(\id',D'_q[,\_])$ and $\MValidateOK(\_, \id, I_q)$.
    If $q$ sends $\MValidateOK(\_, \id, I_q)$ first, then by line~\ref{epaxos:recvr:set-c} and Invariant~\ref{inv:cmd-or-noop}, at the moment $q$ sends $\MPreAcceptOK(\id',D'_q[,\_])$ it has $\cmd[\id] = c$ or $\cmd[\id] = \noop$.
    Since $c$ and $\noop$ both conflict with $c'$, by line~\ref{epaxos:cmt:preacc-deps}, $\id \in D'_q$, as required.

    Consider now the remaining case in which $q$ sends $\MPreAcceptOK(\id',D'_q[,\_])$ before $\MValidateOK(\_, \id, I_q)$.
    Then by lines~\ref{epaxos:cmd:set-initcmd}--\ref{epaxos:cmt:set-d0} we know that, when $q$ sends $\MValidateOK$, it has $\initCmd[\id'] = c' \bowtie c$ and $\initDep[\id'] = D'_0$, where $D'_0$ is the initial dependency set of $\id'$.
    We make a case split depending on whether $\id' \in I_q$ or not.

    Assume first that $\id' \notin I_q$: the command $\id'$ is not selected at line~\ref{epaxos:recvr:let-conflicts} upon validation of $\id$.
    Then one of the following must hold at process $q$:
    \begin{enumerate}[(i),leftmargin=22pt]
    \item $\id' \in D$; or \label{item:csi-1}
    \item $\phase[\id'] = \commitP$ and $\cmd[\id'] = \noop \lor \cmd[\id'] \not \bowtie c \lor \id \in \dep[\id']$; or \label{item:csi-2}
    \item $\phase[\id'] \neq \commitP$ and $\id \in \initDep[\id']$. \label{item:csi-3}
    \end{enumerate}
    In case~\ref{item:csi-1} the required property follows immediately.
    In case~\ref{item:csi-2}, Invariant~\ref{inv:agr} ensures that process $q$ commits $\id'$ with $\cmd[\id'] = c' \neq \noop$ and $\dep[\id'] = D'$.
    Since $c' \bowtie c$, this implies $\id \in D'$, as required.
    In case~\ref{item:csi-3}, recall that when $q$ sends the $\MValidateOK$ message it has
$\initDep[\id'] = D'_0 \subseteq D'$.
    Then $\id \in D'$, as required.

    Assume now that $\id' \in I_q$: the command $\id'$ potentially invalidates $\id$.
    Notice that since $I_q \neq \emptyset$, process $p$ could not have executed line~\ref{epaxos:recvr:send-acc-good}.
    Thus it sends $\MAccept(\_, \id, c, D)$ at line~\ref{epaxos:recvr:nop-2}.
    By line~\ref{epaxos:recvr:all-committed} it has committed $\id'$ either with a dependency that contains $\id$, or with $\noop$.
    However, by Invariant~\ref{inv:agr}, it has to be committed with $c' \neq \noop$ and $D'$.
    Hence, $\id \in D'$, as required.
  \item
    Assume that both commands satisfy condition \ref{prop:dep-from-rec} of Proposition~\ref{prop:dep-from}.
    In this case, they both validate their dependencies at their recovery quorums at lines~\ref{epaxos:recvr:send-validate}--\ref{epaxos:recvr:rec-validate-ok}.
    Thus, there exists a process that sends $\MValidateOK(\_, \id, I)$ and $\MValidateOK(\_, \id', I')$.
    Assume that it first sends $\MValidateOK(\_, \id', I')$ (the symmetric case is analogous).
    By line~\ref{epaxos:recvr:set-initcmd}, at the moment it sends $\MValidateOK(\_, \id, I)$ it has $\initCmd[\id'] = c' \neq \bot$.
    The proof concludes with the same case-splitting as before: either $\id'$ belongs to $I$, or it does not.
  \end{enumerate}
\end{proof}

\subsection{Proof of Consistency}
\labappendix{execution}

Having proved Invariants~\ref{inv:agr} and~\ref{inv:cnst} for all protocol versions, we now prove that the execution protocol in Figure~\ref{epaxos:exec} maintains the {\sf Consistency} property of the replicated state-machine.
To this end, we use two auxiliary lemmas.
These lemmas hold for any two committed commands $\id$ and $\id'$ with conflicting non-$\noop$ payloads $c$ and $c'$, respectively.

\begin{lemma}
  \labinv{exec:1}
  Assume that a process $p$ executes $c$ before executing $c'$, and if the process executes both commands, then this happens in different iterations of the loop at line~\ref{epaxos:exec:loop}.
  Then $\id' \notin \dep[\id]$.
\end{lemma}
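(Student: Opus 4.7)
The plan is to argue by contradiction: assume $\id' \in \dep[\id]$ and derive that $c$ and $c'$ must be executed in the same iteration of the loop at line~\ref{epaxos:exec:loop}, contradicting the hypothesis. The key observation is that the subgraph $G$ computed at line~\ref{epaxos:exec:let-c} is, by construction, closed under the $\dep$ relation: whenever $\hat\id \in G$, every element of $\dep[\hat\id]$ also belongs to $G$. So if $\id$ is ever processed in some iteration, all of its dependencies (including $\id'$, under the contrary assumption) are processed in that same iteration.

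\textbf{Main steps.} Let $i$ be the iteration of the loop at line~\ref{epaxos:exec:loop} in which $p$ executes $c$. First, I would observe that $\id \in G$ holds in iteration $i$, since $c$ is executed from line~\ref{epaxos:exec:exec} via the loops at lines~\ref{epaxos:exec:for-scc}--\ref{epaxos:exec:min}. Applying the closure property of $G$ to the assumption $\id' \in \dep[\id]$, I obtain $\id' \in G$ in iteration $i$ as well. Next, I would check that $\id'$ passes the test at line~\ref{epaxos:exec:test} during iteration $i$: by hypothesis $\cmd[\id'] = c' \neq \noop$, and since $p$ only executes $c'$ in a strictly later iteration, we have $\id' \notin \Executed$ throughout iteration $i$. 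Therefore $\id'$ is also executed at line~\ref{epaxos:exec:exec} during iteration $i$.

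\textbf{Concluding the contradiction.} I would then split on the order in which the two SCCs containing $\id$ and $\id'$ are visited within iteration $i$ (or, if they lie in the same SCC, on the order of their identifiers at line~\ref{epaxos:exec:min}). If $\id'$ is processed after $\id$, both $c$ and $c'$ are executed in iteration $i$, contradicting the hypothesis that their executions fall in different iterations. If $\id'$ is processed before $\id$, then $p$ executes $c'$ strictly before $c$, contradicting the hypothesis that $c$ precedes $c'$. Either way we reach a contradiction, so $\id' \notin \dep[\id]$.

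\textbf{Expected obstacle.} I do not foresee any genuine obstacle: the argument is essentially a single application of the closure of $G$ under $\dep$, combined with the fact that the test at line~\ref{epaxos:exec:test} cannot fail for $\id'$ under the stated hypotheses. The only point requiring a bit of care is justifying that $\id' \notin \Executed$ for the entire duration of iteration $i$, which follows immediately from the assumption that the first execution of $c'$ by $p$ happens in a later iteration and the monotonicity of $\Executed$.
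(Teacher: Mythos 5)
Your proof is correct and follows essentially the same route as the paper's: contradiction from $\id' \in \dep[\id]$, closure of $G$ under $\dep$ to get $\id' \in G$ in the iteration where $c$ is executed, the test at line~\ref{epaxos:exec:test} forcing $c'$ to be executed in that same iteration, and the final dichotomy (same iteration vs.\ earlier execution) yielding the contradiction. Your version merely spells out the line-\ref{epaxos:exec:test} check and the within-iteration ordering a bit more explicitly than the paper does.
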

\begin{proof}
  By contradiction, suppose $\id' \in \dep[\id]$.
  Let $G$ be the subgraph computed during the iteration of the loop when $\id$ is executed, so that $\id$ belongs to $G$ (line~\ref{epaxos:exec:let-c}).
  By the definition of $G$, when $p$ executes $c$, this process has committed $c$ and all its transitive dependencies.
  It follows that $\id'$ also belongs to $G$.
  By line~\ref{epaxos:exec:test}, by the end of this iteration of the loop, $p$ will execute $c'$.
  But then either $c'$ is executed in the same iteration of the loop as $c$, or it was executed before $c$: a contradiction.
\end{proof}

\begin{lemma}
  \labinv{exec:2}
  If $c$ is executed before $c'$ in the same iteration of the loop at line~\ref{epaxos:exec:loop}, then there is a path from $\id$ to $\id'$ in the subgraph $G$ computed at line~\ref{epaxos:exec:let-c}.
\end{lemma}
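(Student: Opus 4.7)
The plan is to do a case analysis on whether $\id$ and $\id'$ belong to the same strongly connected component (SCC) of $G$. I would first establish two common facts. Since $c$ and $c'$ are both executed during the iteration as non-$\noop$ committed commands, the check at line~\ref{epaxos:exec:test} together with the construction of $G$ at line~\ref{epaxos:exec:let-c} forces $\id, \id' \in G$. Moreover, since $c$ and $c'$ are committed with conflicting non-$\noop$ payloads, Invariant~\ref{inv:cnst} supplies at least one of the edges $(\id, \id')$ or $(\id', \id)$ in the dependency graph, and because $G$ is closed under dependencies ($\dep[\id'] \subseteq G$ for every $\id' \in G$), whichever edge is present actually lives in $G$.

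In the same-SCC case, I would simply invoke the definition of strong connectivity: a directed path from $\id$ to $\id'$ exists inside the SCC and therefore inside $G$, and we are done.

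The interesting case is when $\id$ and $\id'$ lie in distinct SCCs $C_\id$ and $C_{\id'}$. The loop at line~\ref{epaxos:exec:for-scc} traverses SCCs in topological order of the condensation DAG, and within an SCC commands are executed in identifier order (line~\ref{epaxos:exec:min}). Since $c$ is executed before $c'$ and they belong to different SCCs, $C_\id$ must precede $C_{\id'}$ in the topological order. I would then argue by contradiction that the edge produced by Invariant~\ref{inv:cnst} cannot be $(\id', \id)$: such an edge would cross from $C_{\id'}$ to $C_\id$ in the condensation DAG and force $C_{\id'}$ to precede $C_\id$ in any topological order, contradicting the ordering just established. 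Consequently, the edge must be $(\id, \id')$, which is itself the required path in $G$.

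I do not foresee a real obstacle. The only point that needs care is keeping the direction of dependency edges straight -- recalling that $\id \in \dep[\id']$ encodes the edge $(\id, \id')$ from dependency to dependent -- and noting that Invariant~\ref{inv:cnst} only guarantees an edge in \emph{some} direction between the two commands, so the topological order of SCCs has to be used to rule out the wrong direction in the inter-SCC subcase.
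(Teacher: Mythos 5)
Your proof is correct and uses the same essential ingredients as the paper's: Invariant~\ref{inv:cnst} supplies an edge between $\id$ and $\id'$ inside $G$ (which is closed under dependencies), and the topological order of the SCCs at line~\ref{epaxos:exec:for-scc} rules out the direction $(\id',\id)$ when the two commands lie in different components. The only difference is presentational -- you argue directly with a case split on same versus different SCCs, whereas the paper runs the same reasoning as a proof by contradiction -- so this is essentially the paper's argument.
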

\begin{proof}
  By contradiction, assume that there is no a path from $\id$ to $\id'$ in $G$.
  This implies that $\id$ is not a transitive dependency of $\id'$.
  Then by Invariants~\ref{inv:agr} and~\ref{inv:cnst}, $\id'$ must be a dependency of $\id$.
  Thus, $\id$ and $\id'$ must be in different strongly connected components of $G$, and the one containing $\id'$ must be ordered before the one containing $\id$ at line~\ref{epaxos:exec:for-scc}.
  Then $c'$ is executed before $c$: a contradiction.
\end{proof}

\begin{proof}[Proof of Consistency]
The proof is done by contradiction.
Consider two commands with conflicting non-$\noop$ payloads $c$ and $c'$ and whose identifiers are $\id$ and $\id'$, respectively.
Suppose that a process $p$ executes $c$ before executing $c'$, while another process $q$ does the opposite.
There are two possible execution paths at $p$:
\begin{itemize}
\item If $p$ executes $c$ and $c'$, then this happens in different iterations of the loop at line~\ref{epaxos:exec:loop}.
By Lemma~\ref{inv:exec:1}, we get $\id' \notin \dep[\id]$.
Then by Invariants~\ref{inv:agr} and~\ref{inv:cnst}, we get $\id \in \dep[\id']$.
\item $p$ executes both $c$ and $c'$, and this happens in the same iteration of the loop at line~\ref{epaxos:exec:loop}.
Then by Lemma~\ref{inv:exec:2}, there is a dependency chain going from $\id$ to $\id'$.
\end{itemize}
Thus, in both cases, there is a dependency chain going from $\id$ to $\id'$.
Considering process $q$ and using the same arguments, we can show that there is a dependency chain from $\id'$ to $\id$. Thus, $\id$ and $\id'$ always belong to the same strongly connected component at line~\ref{epaxos:exec:for-scc}, so are always executed in the order of their identifiers: a contradiction.
\end{proof}

\subsection{Proof of Liveness}
\label{epaxos:appendix:liveness}

\begin{proof}[Proof of Lemma~\ref{prop:all-good} (all protocol versions)]
  The proof is done by contradiction: assume that some commands are never committed by some correct processes.
  Let $\mathit{Ids}$ be the set of all such commands.
  Let $t \geq \GST$ be a time such that the following properties hold:
  \begin{enumerate}[left=0pt, label={\Alph*)}, ref={\Alph*}]
  \item All failures have occurred before time $t$. \label{proof:time:1}
  \item For all $\id \in \mathit{Ids}$, after time $t$, $\id$ can be recovered only by the single (unique and correct) process $p_\id$, which is returned by $\Omega[\id]$ at all correct processes. \label{proof:time:2}
  \item For all $\id \in \mathit{Ids}$, after time $t$, process $p_\id$ recovers $\id$ infinitely many times. \label{proof:time:3}
  \item For all $\id \in \mathit{Ids}$, by time $t$, the recovery timers at all correct processses are sufficiently high (line~\ref{epaxos:rec-start:start}) for all correct processes to receive a $\MCommit$ message from $p_\id$ if this process does not wait at line~\ref{epaxos:recvr:wait}.
\label{proof:time:5}
  \item For all $\id \in \mathit{Ids}$, after time $t$, every recovery attempt of $\id$ by $p_\id$ blocks forever at line~\ref{epaxos:recvr:wait}.\label{proof:time:4}
  \item In the optimized and thrifty versions, for all $\id \in \mathit{Ids}$, no correct process ever commits $\id$. \label{proof:time:4.6}
  \item For all $\id \in \mathit{Ids}$, in every recovery attempt after time $t$, process $p_\id$ selects the same dependency set $D$ at line~\ref{epaxos:recvr:let-c} (\ref{epaxos-thrifty:let-d0}t), which we denote $D_\id$. \label{proof:time:7}
  \end{enumerate}
The justification for the existence of such a time $t$ is as follows:
\begin{itemize}
  \item
    Item~\ref{proof:time:1} is straightforward.
\item
  Item~\ref{proof:time:2} follows the definition of $\Omega[\id]$ and the check at line~\ref{epaxos:rec-start:rec}.
\item
  Item~\ref{proof:time:3} follows from lines~\ref{epaxos:rec-start:start}--\ref{epaxos:rec-start:check-commit} and the fact that each command in $\mathit{Ids}$ is never committed by at least one correct process.
\item
  Item~\ref{proof:time:5} holds because processes increase their recovery timers on each recovery attempt, and $p_\id$ increases its ballot number at each recovery attempt.
\item
  Item~\ref{proof:time:4} holds because of items~\ref{proof:time:1}--\ref{proof:time:5} and the fact that some commands are never committed by some correct processes.
\item
  Item~\ref{proof:time:4.6} holds because if a correct process ever commits $\id$, then by item~\ref{proof:time:5}, process $p_\id$ eventually executes line~\ref{epaxos:recvr:send-cmt} or \ref{epaxos:recvr:coord-commit}, contradicting item~\ref{proof:time:4}.
\item
  Item~\ref{proof:time:7} holds in non-thrifty versions because the set selected at line~\ref{epaxos:recvr:let-c} is the initial dependency set of $\id$, which is unique (line~\ref{epaxos:recvr:fast-cond}).
  In the thrifty version, by items~\ref{proof:time:1} and \ref{proof:time:5}, during any two recovery attempts of $\id \in \mathit{Ids}$ after time $t$, process $p_\id$ receives $\MRecoverOK$ messages from the same set of processes (lines~\ref{epaxos:recvr:recoverok} and \ref{epaxos-thrifty:bad-phase}t).
  If different subsets of this set were to pre-accept different dependency sets, then these recovery attempts would terminate at line~\ref{epaxos-thrifty:bad} or~\ref{epaxos-thrifty:last-nop}, contradicting item~\ref{proof:time:4}.
\end{itemize}
  
  Take any $\id \in \mathit{Ids}$.
  By item~\ref{proof:time:4} in each recovery attempt after $t$, $\id$ blocks on at least one potentially invalidating command (line~\ref{epaxos:recvr:wait}).
  Since $\id$ cannot block on committed commands (lines~\ref{epaxos:recvr:conf-cmt}, \ref{epaxos:recvr:break-cond-cmt}, \ref{epaxos:recvr:all-committed}), it follows that, after time $t$, $\id$ can eventually block only on commands in $\mathit{Ids}$.
  We can therefore further assume that the chosen time $t$ also satisfies:
  \begin{enumerate}[left=0pt, label={\Alph*)}, ref={\Alph*}]
    \setcounter{enumi}{7}
  \item For all $\id \in \mathit{Ids}$, every recovery attempt of $\id$ after time $t$ is potentially invalidated by some command in $\mathit{Ids}$. \label{proof:time:8}
  \end{enumerate}
  Since we consider only executions with finitely many submitted commands, there exists a command $\id' \in \mathit{Ids}$ (line~\ref{epaxos:recvr:let-conflicts}) on which $\id$ blocks infinitely often.
  Eventually, process $p_{\id'}$ broadcasts a $\MWaiting$ message for $\id'$ at line~\ref{epaxos:recvr:send-wait} (line~\ref{epaxos-thrifty:sendwait}t in the thrifty version), which $p_\id$ eventually receives.
  This concludes the proof for the baseline protocol, since in this case $p_\id$ unconditionally aborts the recovery at line~\ref{epaxos:recvr:break-cond-wait-r-end}, contradicting item~\ref{proof:time:4}.
  We now continue the proof for the optimized and thrifty versions.

  Since $\id$ repeatedly blocks on $\id'$ (and in particular never satisfies the first disjunct at line~\ref{epaxos:recvr:conf-cmt}), line~\ref{epaxos:recvr:let-conflicts} implies:
  \begin{enumerate}[(i),leftmargin=22pt]
  \item $\id' \notin D_\id$; \label{proof:lvns:i1}
  \item the initial payloads of $\id$ and $\id'$ conflict; and \label{proof:lvns:i2}
  \item $\id \notin D'_0$, where $D'_0$ is the initial dependency set of $\id'$. \label{proof:lvns:i3}
  \end{enumerate}
  Fix any recovery of $\id$ by $p_\id$ after time $t$ with a recovery quorum $Q$.
  By item~\ref{proof:time:4}, process $p_\id$ sends a $\MWaiting(\id, [\_,]k)$ message at line~\ref{epaxos:recvr:send-wait} (line~\ref{epaxos-thrifty:sendwait}t in the thrifty version).
  Fix any recovery of $\id'$ by $p_{\id'}$ after it has received $\MWaiting(\id, [\_,]k)$ from $p_\id$ and after all processes in $Q$ have received $\MValidate(\_, \id, \_, D_\id [,\_])$ from $p_\id$ (item~\ref{proof:time:5}).
  By line~\ref{epaxos:recvr:set-initcmd}, in the recovery quorum of $\id'$ there exists at least one correct process $q' \in Q$ that stores the initial payload of $\id$.
  We now make a case split depending on whether $\id \in D_{\id'}$ or not.
  \begin{enumerate}
  \item \label{proof:lvns:1} $\id \notin D_{\id'}$.
    By item~\ref{proof:time:4.6}, process $q'$ never commits $\id$.
    Then, $\id \notin D_{\id'}$ and items~\ref{proof:lvns:i1}--\ref{proof:lvns:i2} imply that, when $q'$ receives a $\MValidate(\_, \id', \_, D_{\id'}[,\_])$ message from the recovery by $p_{\id'}$, it includes $\id$ in the set of invalidating and potentially invalidating commands at line~\ref{epaxos:recvr:let-conflicts}.
    Recall that process $p_{\id'}$ has received $\MWaiting(\id, [\_,]k)$.
    Since by item~\ref{proof:time:4}, it does not satisfy the condition at line~\ref{epaxos:recvr:break-cond-wait-r}, we have $k \leq n - f - e$.
    Hence, when $p_\id$ recovers $\id$, it has $|\Rmax| \leq n - f - e$.
    On the other hand, by lines~\ref{epaxos:recvr:recoverok-precond} and~\ref{epaxos:recvr:let-rmax},
    $|\Rmax| \ge |Q| - e \ge n - f - e \ge |\Rmax|$, which implies $|\Rmax| = |Q| - e$.
    Since the conditions at lines~\ref{epaxos:recvr:coord-in} and~\ref{epaxos:recvr:break-cond-coord} (line~\ref{epaxos-thrifty:bad-phase}t in the thrifty version) never hold for $\id'$, by item~\ref{proof:time:5}, $\coord(\id')$ must be faulty.
    However, in this case, when $p_\id$ recovers $\id$, the condition at line~\ref{epaxos:recvr:break-cond-wait} must hold, allowing $p_\id$ to complete the recovery, which contradicts item~\ref{proof:time:4}.
  \item $\id \in D_{\id'}$.
    Consider first the non-thrifty version of the protocol.
    Then we have $D_{\id'} = D'_0$, and hence, $\id \in D_0'$, which contradicts \ref{proof:lvns:i3}.

    Consider now the thrifty version.
    There exists a sequence of commands $(\id_n)_{n\in\mathbb{N}}$ in $\mathit{Ids}$ (item~\ref{proof:time:8}), where each command is blocked by the next one, and such that $\id_1 = \id$ and $\id_2 = \id'$.
    Recall that we are considering a run with only finitely many submitted commands.
    Therefore, the sequence $(\id_n)_{n\in\mathbb{N}}$ must be circular.
    Notice that if there exists $n$, such that $\id_n \notin D_{\id_{n+1}}$ then the rest of the proof proceeds exactly as in the case where $\id \notin D_{\id'}$, which is proven above (case \ref{proof:lvns:1}).
    Hence, for all $n$, $\id_n \in D_{\id_{n+1}}$.

    Take any $i \in \mathbb{N}$.
    Since $\id_i$ blocks at line~\ref{epaxos:recvr:wait} infinitely often, by line~\ref{epaxos-thrifty:fastcond}t there exists at least one correct process $q$ (item~\ref{proof:time:1}) that has pre-accepted $\id_i$ with $D_i$.
    Since $\id_{i-1} \in D_{\id_i}$, process $q$ stores command $\id_{i-1}$ (line~\ref{epaxos:thrifty:pre-accept-deps}).
    By item~\ref{proof:time:5}, when process $p_{\id_{i-1}}$ recovers $\id_{i-1}$, it eventually receives a $\MRecoverOK(\_,\id_{i-1},\_,\_,\vdep_q,\_,\vphase_q)$ message from $q$.
    Since $q$ has pre-accepted $\id_i$ with $D_i$, and $\id_{i-1} \in D_i$, line~\ref{epaxos:thrifty:start-phase}t ensures that $\vphase_q \neq \startP$.
    Moreover, by item~\ref{proof:time:4}, $\vphase_q \neq \commitP$ and $\vphase_q \neq \acceptP$.
    This implies that $\vphase_q = \preacceptP$, and by lines~\ref{epaxos-thrifty:fastcond}t and \ref{epaxos-thrifty:bad-phase}t, together with item~\ref{proof:time:7}, we have $\vdep_q = D_{\id_{i-1}}$.
    This means that process $q$ has pre-accepted both $\id_{i-1}$ and $\id_i$, with dependencies $D_{\id_{i-1}}$ and $D_{\id_i}$, respectively.
    Since $\id_{i-2} \in D_{\id_{i-1}}$, process $q$ must also store $\id_{i-2}$.
    Applying the same reasoning, we conclude that $q$ has pre-accepted $\id_{i-2}$ with $D_{\id_{i-2}}$.
    By repeating this argument for all commands in the sequence $(\id_n)_{n\in\mathbb{N}}$, we conclude that process $q$ has pre-accepted each command $\id_n$ with dependencies $D_{\id_n}$.
    Notice that these dependencies form a cyclic dependency graph.
    However, processes cannot locally pre-accept cycles (lines~\ref{epaxos:cmt:send-preacc}t and \ref{epaxos:cmt:preacc-deps}), which leads to a contradiction.
  \end{enumerate}
\end{proof}

\begin{proof}[Proof of Liveness]
  As long as no process starts the recovery of a command, every submitted command is eventually delivered.
  If the run involves the recovery of a command $\id$ then, according to lines~\ref{epaxos:rec-start:check-commit}--\ref{epaxos:rec-start:send-try}, processes will continue attempting to recover $\id$ until it is committed.
  By Lemma~\ref{prop:all-good} we know that eventually there will be a time after which $\id$ is committed.
  Finally, by lines~\ref{epaxos:rec-start:check-nop}--\ref{epaxos:rec-start:submit}, if $\id$ is committed as a $\noop$, the correct process that originally submitted it will retry, eventually committing it with the initially submitted payload.
\end{proof}

\subsection{Proofs of Theorems~\ref{theo:upper-unopt}, \ref{theo:upper} and \ref{theo:upper-thrifty}}

We have already proven that all three versions of \epp satisfy {\sf Consistency} (\S\ref{appendix:execution}) and {\sf Liveness} (\S\ref{epaxos:appendix:liveness}) properties of the SMR specification (\S\ref{sec:smr}).
It is easy to see that they also satisfy {\sf Validity} and {\sf Integrity}.
Moreover, since {\sf Liveness} holds in every execution with at most $f$ crashes, all versions of \epp are $f$-resilient.
Finally, in \S\ref{sec:commit} we argued that the baseline \epp protocol is $e$-fast.
The same argument shows that the optimized protocol is $e$-fast, and the thrifty protocol is $0$-fast.

\section{A Liveness Bug in \epaxos Recovery}
\label{sec:bug}

\newcommand{\seq}{\mathsf{seq}}
\newcommand{\MPrepare}{{\tt Prepare}}
\newcommand{\MTentativePreAccept}{\texttt{TentativePreAccept}}
\newcommand{\MTentativePreAcceptOK}{\texttt{TentativePreAcceptOK}}

We consider a system of $n = 9$ processes $p_1, \ldots, p_9$, tolerating at most $f = 4$ failures.
The size of a fast quorum is $f + \lfloor \frac{f+1}{2} \rfloor = 6$.
Process $p_1$ is the initial coordinator of a command $c$ with an identifier $\id$, and process $p_2$ is the initial coordinator of a conflicting command $c'$ with an identifier $\id'$.
Following the protocol as specified in Moraru's PhD thesis \cite{epaxos-thesis}, the counter-example is as follows.
\begin{enumerate}
\item Process $p_1$ sends $\MPreAccept(c, \id, \emptyset, 0)$ to $\{p_1, p_5, p_6, p_7, p_8, p_9\}$.
\item Processes $p_8$ and $p_9$ pre-accept $\id$ with $\dep[\id] = \emptyset$ and $\seq[\id]=0$.
\item Process $p_8$ starts recovering $\id$.
  \begin{enumerate}
  \item It broadcasts $\MPrepare$ and receives $f+1 = 5$ replies from $\{p_2, p_3, p_4, p_8, p_9\}$ (page 41, step 1).
  \item Steps 2 on page 41 and 3--4 on page 42 are not applicable.
  \item However, there are $\lfloor \frac{f+1}{2} \rfloor = 2$ processes ($p_8$ and $p_9$) that pre-accepted $\id$ with $\dep[\id] = \emptyset$ and $\seq[\id]=0$ at ballot $0$ (the ``If'' part of step 5 on page 42).
    Hence, $p_8$ sends $\MTentativePreAccept(\id, c, \emptyset, 0)$ to $\{p_2, p_3, p_4\}$ (step 6 on page 42).
  \item Process $p_4$ receives $\MTentativePreAccept(\id, c, \emptyset, 0)$ from $p_8$ and pre-accepts $\id$ with $\dep[\id] = \emptyset$ and $\seq[\id]=0$.
  \end{enumerate}
\item Process $p_2$ sends $\MPreAccept(c', \id', \emptyset, 0)$ to $\{p_2, p_5, p_6, p_7, p_8, p_9\}$.
\item Processes $p_2$, $p_5$, $p_6$, and $p_7$ pre-accept $\id'$ with $\dep[\id'] = \emptyset$ and $\seq[\id']=0$.
\item Processes $p_5$, $p_6$, and $p_7$ pre-accept $\id$ with $\dep[\id] = \{\id'\}$ and $\seq[\id]=1$.
\item Process $p_2$ successfully commits $\id'$ with $\dep[\id'] = \{\id\}$ and $\seq[\id']=1$ on the slow path using a quorum $\{p_2, p_3, p_4, p_8, p_9\}$.
\item Processes $p_1$, $p_2$, $p_3$, and $p_9$ fail.
\item \label{bug:rec-id} Process $p_4$ starts recovering $\id$.
  \begin{enumerate}
  \item It broadcasts $\MPrepare$ and receives $f+1 = 5$ replies from $\{p_4, p_5, p_6, p_7, p_8\}$ (page 41, step 1).
  \item Steps 2 on page 41 and 3--4 on page 42 are not applicable.
  \item However, there are $3 > \lfloor \frac{f+1}{2} \rfloor$ processes ($p_5$, $p_6$, and $p_7$) that pre-accepted $\id$ with $\dep[\id] = \{\id'\}$ and $\seq[\id]=1$ at ballot $0$ (the ``If'' part of step 5 on page 42).
    But since all processes in $\{p_4, p_5, p_6, p_7, p_8\}$ have pre-accepted $\id$, steps 6 and 7 on page 42 have no effect.
    As a consequence, recovery does not make progress.
  \end{enumerate}
\item Process $p_4$ repeatedly attempts to recover $\id$ without success (i.e., it indefinitely repeats item~\ref{bug:rec-id}).
\end{enumerate}

\fi

\end{document}